\journalname{Designs, Codes and Cryptography}
\def\newpic#1{\def\emline##1##2##3##4##5##6{\put(##1,##2){\special{em:point #1##3}}\put(##4,##5){\special{em:point #1##6}}\special{em:line #1##3,#1##6}}}
\def\emline#1#2#3#4#5#6{\put(#1,#2){\special{em:moveto}}\put(#4,#5){\special{em:lineto}}}
\def\newpic#1{}
\begin{document}

\title{A Generalization of Lee Codes
}


\author{C. Araujo \and
I. Dejter \and P. Horak}



\institute{C. Araujo \at
              University of Puerto Rico, Rio Piedras, PR 00936-8377 \\
              Tel.: +787-764-0000\\
              Fax: +787-281-0653\\
              \email{carlosjulio56@gmail.com}           
           \and
           I. Dejter \at
              University of Puerto Rico, Rio Piedras, PR 00936-8377 \\
              Tel.: +787-764-0000\\
              Fax: +787-281-0653\\
              \email{italo.dejter@gmail.com.com}           
           \and
           P. Horak \at
           University of Washington, Tacoma, WA 98402 \\
           Tel.:+253-692-4558\\
           \email{horak@uw.edu}
}

\date{Received: date / Accepted: date}

\maketitle

\begin{abstract}
Motivated by a problem in computer architecture we introduce a notion of the
perfect distance-dominating set, PDDS, in a graph. PDDS\thinspace s
constitute a generalization of perfect Lee codes, diameter perfect codes, as
well as other codes and dominating sets. In this paper we initiate a
systematic study of PDDS\thinspace s. PDDS\thinspace s related to the
application will be constructed and the non-existence of some PDDS\thinspace
s will be shown. In addition, an extension of the long-standing Golomb-Welch
conjecture, in terms of PDDS, will be stated. We note that all constructed
PDDS\thinspace s are lattice-like which is a very important feature from the
practical point of view as in this case decoding algorithms tend to be much
simpler.\bigskip

\textbf{This paper is dedicated to the memory of Lucia Gionfriddo}.

\subclass{MSC Primary 05C69 \and MSC Secondary 94B25}
\end{abstract}

\keywords{error-correcting codes; and distance dominating sets; Lee
metric; lattice tiling. }





\institute{C. Araujo \at
              University of Puerto Rico, Rio Piedras, PR 00936-8377 \\
              Tel.: +787-764-0000\\
              Fax: +787-281-0653\\
              \email{carlosjulio56@gmail.com}                      \and
           I. Dejter \at
              University of Puerto Rico, Rio Piedras, PR 00936-8377 \\
              Tel.: +787-764-0000\\
              Fax: +787-281-0653\\
              \email{italo.dejter@gmail.com.com}                      \and
           P. Horak \at
           University of Washington, Tacoma, WA 98402 \\
           Tel.:+253-692-4558\\
           \email{horak@uw.edu}
}


\section{Introduction}

\label{intro} \noindent We introduce a generalization of perfect Lee codes
and other dominating notions, motivated by the following problem in computer
architecture, see e.g. \cite{BL}. Processing elements in a supercomputer
communicate through a network that has the topology of the Cartesian product
of cycles. It is desirable to place the Input/Output devices into the
network in such a way that the communication of all elements in the network
is optimized; each element of the network should be at distance at most $t$
from at least one I/O device, ideally from exactly one I/O device. It is not
difficult to see that perfect error correcting Lee codes, if any, provide
the optimal placement.

\bigskip

\noindent Unfortunately, the perfect $t$-error correcting Lee codes of block
length $n$ over $\mathbb{Z}$, and over $\mathbb{Z}_{q},q\geq 2n+1,$ shortly $%
\mathrm{PLC}(n,t)$ and $\mathrm{PLC}(n,t,q)$codes, respectively, have been
constructed only for $n=1,2,$ and any $t,$ and for $n\geq 3$ and $t=1$.
Moreover, as suggested by the well-known and long-standing conjecture of
Golomb and Welch \cite{GW}, $\mathrm{PLC}(n,t)$ codes and $\mathrm{PLC}%
(n,t,q),q\geq 2n+1,$ codes do not exist in other cases. To remedy this
obstacle, perfect Lee codes have been generalized in several ways, see e.g.
\cite{BB}, where the quasi-perfect Lee codes have been introduced. A
weakness of the quasi-perfect Lee codes is that some words cannot be decoded
in a unique way, and so far the quasi-perfect Lee codes have been found only
for $n=2.$\bigskip

\noindent In order to offer a new approach to the placement problem we will
introduce yet another generalization of Lee codes. Instead of defining it
only for the Cartesian product of cycles and the Cartesian product of
two-way infinite paths, denoted by $\Lambda _{n}$ ($=$ infinite graph whose
vertex set is $\mathbb{Z}^{n}$ with two vertices being adjacent if their
Euclidean distance is $1)$, we introduce the new concept for an arbitrary
graph. However, having in mind the application we will mainly focus on the
Cartesian product of cycles and $\Lambda _{n}$. As usual, $[S]$ stands for
the subgraph induced by $S$, and the distance $d(v,C)$ of a vertex $v\in V$
to $C$ is given by $d(v,C)=\min \{d(v,w);w\in C\}$\textrm{.}\

\begin{definition}
Let $t\geq 1$ and $\Gamma =(V,E)$ be a graph. A set $S\subset V$ will be
said to be a $t$-perfect distance-dominating set in $\Gamma $, a $t$-$%
\mathrm{PDDS}$ in $\Gamma ,$ if, for each $v\in V$, there is a unique
component $C_{v}$ of $[S]$, so that for the distance $d(v,C_{v})$ from $v$
to $C_{v}$ it is $d(v,C_{v})\leq t$, and there is in $C_{v}$ a unique vertex
$w$ with $d(v,w)=d(v,C_{v})$.
\end{definition}

\noindent The first condition guaranties that to each element $v$ of the
network there is at least one I/O device at the distance at most $t$ from $%
v, $ while the second condition, that in $C_{v}$ there is a unique vertex $w$
with $d(v,w)=d(v,C_{v}),$ guarantees that to each element $v$ in the
communication network, there is a uniquely determined I/O device with which $%
v$ will communicate.\bigskip

\noindent Now we describe how the new domination concept of $\mathrm{PDDS}$
relates to other coding theory and graph domination notions. First of all we
note that

\noindent $\mathrm{PLC}(n,t,q)$ codes and $\mathrm{PLC}(n,t)$ codes are $t$-$%
\mathrm{PDDS}$\thinspace s in the Cartesian product of cycles and in $%
\Lambda _{n}$, respectively, with all components of $t$-$\mathrm{PDDS}$
being isolated vertices. A notion of a diameter perfect code has been
introduced in \cite{A1}. For $d$ odd, the diameter-$d$ perfect Lee code in $%
\Lambda _{n}$ coincides with the perfect $\frac{d-1}{2}$-error correcting
Lee code. It follows from \cite{A,E} that, for $d$ even, diameter-$d$
perfect Lee code in $\Lambda _{n}$ exists if and only if there is a $\frac{%
d-2}{2}$-$\mathrm{PDDS}$ in $\Lambda _{n}$ whose each component consists of
two adjacent vertices. In \cite{biggs} Biggs extended the concept of the
perfect code from a metric space to a graph. A perfect $t$-code in a graph $%
\Gamma =(V,E)$ is a set $C\subset V$ such that $t$-neighborhoods $%
N_{t}(c)=\{u\in V;d(c,u)\leq t\}$ with $c\in C$ form a partition of $V$.
Clearly, a $t$-perfect code $C$ in $\Gamma $ is a $t$-$\mathrm{PDDS}$ in $%
\Gamma $ with all vertices in $C$ being isolated. Further, Weichsel \cite{W}
defined a notion of the perfect dominating set, or PDS. In our terminology a
PDS is a 1-$\mathrm{PDDS}$. PDS\thinspace s were studied in the hypercube
graphs \cite{W,DW,DP}, in the star graphs \cite{DS}, in $\Lambda _{2},$ and
in toroidal grids \cite{DD,Dsolo}. In addition, Klostermeyer and Goldwasser
\cite{KG} defined the total perfect code in a graph to be a subset of its
vertex set with the property that each vertex is adjacent to exactly one
vertex in the subset. The NP-completeness of finding a 1-perfect code of $%
\Gamma $ and that of finding a minimal perfect dominating set in a planar
graph were established in \cite{BBS,K}, and in \cite{FH},
respectively.\bigskip

\noindent Now we prove a statement related to the structure of $\mathrm{PDDS}
$\thinspace s in $\Lambda _{n}.$ It turns out that the choice of components
of a $t$-$\mathrm{PDDS}$ in $\Lambda _{n}$ is quite limited. To facilitate
our discussion we introduce some notation. If no ambiguity is possible, $n$%
-tuples representing elements of $\mathbb{Z}^{n}$ will be written without
external parentheses or commas. $O$ will stand for the element $00\ldots 0$
and $e_{1}=10\ldots 0$, $e_{2}=010\ldots 0$, $\ldots $, $e_{n}=00\ldots 1$.

\begin{theorem}
If $S$ is a $t$-$\mathrm{PDDS}$ in $\Lambda _{n}$ then each component of $S$
is the Cartesian product of $($possibly infinite$)$ paths.
\end{theorem}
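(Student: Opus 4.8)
The plan is to show that each component $C$ of $[S]$, viewed as an induced subgraph of $\Lambda_n$, must be ``box-like'': if two vertices of $C$ differ only in coordinate $i$ by a unit step, the whole component is ``closed'' under that step in a controlled range, and moreover the projections onto the coordinate directions behave independently. Concretely, I would first argue that $C$ is an induced subgraph of a grid $\Lambda_n$ all of whose vertices lie in a Cartesian product of intervals $I_1 \times \cdots \times I_n$ (with $I_i$ a possibly infinite subinterval of $\mathbb{Z}$), and then show $C$ is \emph{exactly} that product. The key leverage is the uniqueness clause in the definition of a $t$-$\mathrm{PDDS}$: for every $v \in \mathbb{Z}^n$ there is a unique nearest vertex $w \in C_v$ at distance $d(v,C_v) \le t$, and $C_v$ is the unique component within distance $t$.

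First I would establish a local ``convexity'' property: if $u$ and $u + 2e_i$ both lie in $C$ (same component), then $u + e_i \in C$ as well. Suppose not. Then $v = u+e_i$ has two neighbors $u, u+2e_i$ in $S$; since they are in the same component $C$, this is consistent with $C_v = C$, but I must rule out $v$ having a \emph{tie} for the nearest vertex — indeed $d(v,u) = d(v, u+2e_i) = 1$, so $v$ has two vertices of $C$ at the minimum distance $1$, violating the second condition. Hence $u + e_i \in C$. Iterating, the intersection of $C$ with any line parallel to a coordinate axis is a (possibly infinite) interval of lattice points; so $C$ is ``axis-convex''. Next I would show the product structure: if $u \in C$ and $u + e_i \in C$ and $u + e_j \in C$ with $i \ne j$, then $u + e_i + e_j \in C$. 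If not, consider $v = u + e_i + e_j$: its neighbors $u+e_i$ and $u+e_j$ both lie in $C$ at distance $1$, again forcing either two components within distance $t$ or a tie at distance $1$ — contradiction, so $u+e_i+e_j \in C$. Combining axis-convexity with this ``corner-closing'' property, a straightforward induction on $\sum |a_i|$ shows that whenever the ``corners'' $u$ and $u + \sum a_i e_i$ (with each $a_i$ of the appropriate sign to stay in the axis-intervals through $u$) are reachable, the whole sub-box is in $C$; thus $C$ is the Cartesian product of its coordinate projections, each of which is an interval, i.e.\ a (finite or infinite) path.

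The main obstacle I anticipate is making the ``corner-closing'' step fully rigorous when the relevant corner vertex $v$ is itself not forced to be \emph{outside} $S$ a priori — one must carefully separate the cases $v \in S$ and $v \notin S$, and in the former case argue that $v$ then lies in $C$ anyway (since it is adjacent to vertices of $C$), so nothing is lost; in the latter case the tie argument applies directly. A second subtlety is the bookkeeping in the induction: one needs the axis-intervals to be ``compatible'' along the component (e.g.\ the $i$-interval through $u$ and through $u+e_j$ agree on the relevant range), which again follows from axis-convexity applied within $C$ once corner-closing is known, so the two properties must be bootstrapped together rather than proved in strict sequence. Once these local closure properties are in hand, the global product description — and hence the statement that every component is a Cartesian product of paths — follows by a routine connectivity argument on $C$.
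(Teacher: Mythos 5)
Your proposal is correct and rests on exactly the same mechanism as the paper's proof: a vertex outside $S$ at distance $1$ from two vertices of the same component would have two nearest vertices in $C_{v}$, contradicting the uniqueness clause, which is precisely your corner-closing (and axis-convexity) step. The paper compresses everything else into a ``wlog'' reduction to the configuration $O,e_{1}+e_{2}\in S_{0}$, $e_{1}\notin V(S_{0})$, so your box-building induction simply supplies the local-to-global bookkeeping that the paper leaves implicit.
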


\begin{proof}
Let $S_{0}$ be a component of $S$ in $\Lambda _{n}$. Assume that $S_{0}$ is
not a product of paths. Then wlog we may assume that $O,e_{1}+e_{2}\in S_{0}$%
, and $e_{1}\notin V(S_{0})$. Now, $%
d(e_{1},S_{0})=d(e_{1},O)=d(e_{1},e_{1}+e_{2})=1$. That is, the vertex $v$
is at the minimum distance $1$ from two different vertices of $S$, a
contradiction.
\end{proof}

\noindent A similar result, in the case when PDS of the $n$-dimensional cube
were considered, has been proved in \cite{W}.\bigskip

\noindent With respect to the application mentioned above we will confine
ourselves to the most interesting case of $t$-$\mathrm{PDDS}$\thinspace s in
$\Lambda _{n}$ whose components are all isomorphic to a fixed finite graph $%
H $, denoted for short by $t$-$\mathrm{PDDS}[H]$. It would be very useful to
characterize all finite graphs $H$ for which there is a $t$-$\mathrm{PDDS}%
[H] $. This would show the strength but also limitations of the new concept
for practical purposes. So far we are able to do it only for $\Lambda _{2}.$%
\bigskip

\begin{remark}
\label{RR}We point out that if $R$ is a $t$-$\mathrm{PDDS}[H],$ $H=(V,E),$ then $R$
can be seen as a tiling of $\mathbb{Z}^{n}$ by the graph $H^{\ast }=(V^{\ast
},E^{\ast })$ where $H^{\ast }$ is the indunced subgraph of $\Lambda _{n}$
on the set $V^{\ast },$where  $v\in V^{\ast }$ if and only if $d(v,V)\leq t.$
\end{remark}

\noindent As usual $P_{k}$ will stand for the path on $k$ vertices. Hence, $%
P_{1}$ is an isolated vertex. Further, the cartesian product of graphs $G$
and $H$ is denoted by $G\square H.$ At the moment we do not have enough
evidence to conjecture when a $t$-$\mathrm{PDDS}[H]$ exists in a general
case. However, if $H$ is a product of at most two paths then we strongly
believe that:

\begin{conjecture}
\label{N}Let $H$ be a finite path or a Cartesian product of two finite
paths. Then a $t$-$\mathrm{PDDS}[H]$ in $\Lambda _{n}$ exists if and only if
either $\mathrm{\mathbf{(i)}}$ $t=1$, $n\geq 2,$ and $\ H=P_{k},k\geq 1;$ or
$\mathrm{\mathbf{(ii)}}$ $t\geq 1$, $n=2,$ and $H=P_{k},k\geq 1;$ or $%
\mathrm{\mathbf{(iii)}}$ $t\geq 1$, $n=2,$ and $H=P_{2}\square P_{k}$, $%
k\geq 2$; or $\mathrm{\mathbf{(iv)}}$ $t=1$, $n=3r+2$, $r\geq 0,$ and $%
H=P_{2}\square P_{2};$ or $\mathrm{\mathbf{(v)}}$ $t=2,n=3,$ and $H=P_{2}.$
\end{conjecture}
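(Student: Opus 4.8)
The plan is to prove the two implications separately, throughout using Remark~\ref{RR}, which turns a $t$-$\mathrm{PDDS}[H]$ in $\Lambda_n$ into a tiling of $\mathbb{Z}^n$ by the body $H^{\ast}$. \textbf{Sufficiency} ((i)--(v) $\Rightarrow$ existence) is a matter of explicit construction: for each family it suffices to exhibit a sublattice $L\le\mathbb{Z}^n$ of index $|H^{\ast}|$ whose cosets translate $H^{\ast}$ to a partition of $\mathbb{Z}^n$, and then the resulting $\mathrm{PDDS}$ is automatically lattice-like, as the abstract stresses. For (i) ($t=1$, $H=P_k$) the body $P_k^{\ast}$ is the ``$k$-fattened cross'' of volume $k(2n-1)+2$, and one adapts the classical perfect Lee lattice $\{x:\sum_i i x_i\equiv 0\ (\mathrm{mod}\ 2n+1)\}$, i.e.\ the theory of lattice tilings by crosses, to produce the required index-$(k(2n-1)+2)$ lattice. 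Cases (ii) and (iii) ($n=2$) follow from the classification of $t$-$\mathrm{PDDS}$s in $\Lambda_2$ referred to above, equivalently from explicit two-generator ``staircase'' lattices whose fundamental region is the $\ell_1$-$t$-neighbourhood of a $1\times k$, resp.\ $2\times k$, rectangle. For (v) ($t=2$, $n=3$, $H=P_2$) one exhibits the known lattice tiling of $\mathbb{Z}^3$ by the radius-$2$ Lee ball around an edge, whose volume is $|B_2^{(3)}|+|B_2^{(2)}|=38$.

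The only construction needing a new idea is (iv). The step of $3$ in $n$ should come from a composition lemma: given a lattice tiling of $\mathbb{Z}^m$ by $(P_2\square P_2)^{\ast}$ together with the perfect $1$-error-correcting Lee tiling of $\mathbb{Z}^3$ by the cross of volume $7$, one builds a lattice tiling of $\mathbb{Z}^{m+3}$ by $(P_2\square P_2)^{\ast}$ via a suitable ``twisted product'' of the two generator matrices. I would first carry out the step $m=2\to m=5$ by hand to pin down the correct twist, with base case $m=2$ being (iii) at $t=1$, $k=2$, and then run the induction on $r$.

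\textbf{Necessity} (existence $\Rightarrow$ (i)--(v)) starts from the Theorem above: a component of a $t$-$\mathrm{PDDS}$ in $\Lambda_n$ is a Cartesian product of paths, so for $H$ of the stated form it is an axis-aligned box $P_{k_1}$ or $P_{k_1}\square P_{k_2}$, and one must exclude every $(t,n,H)$ outside (i)--(v). These group into: (a) $H=P_1$, where a $t$-$\mathrm{PDDS}[P_1]$ is exactly a $\mathrm{PLC}(n,t)$, so non-existence for $n\ge 3$, $t\ge 2$ \emph{is} the Golomb--Welch conjecture; (b) $H=P_k$ with $k\ge 2$, $n\ge 3$, $t\ge 2$ and $(n,t,k)\neq(3,2,2)$, a ``thickened'' Golomb--Welch statement, to be attacked by adapting the known non-existence machinery (character/group-ring estimates, forced boundary patterns) to the Lee cylinder $P_k^{\ast}$, whose volume grows only polynomially in $n$ while the boundary obstruction persists; (c) $H=P_2\square P_k$ with $n\ge 3$, where for $t\ge 2$ one is again in thickened Golomb--Welch territory, and for $t=1$ the statement should be unconditional --- reducing mod $q$ to a tiling of $\mathbb{Z}_q^{\,n}$ and using that $|(P_2\square P_2)^{\ast}|=8n-4\equiv 2n-1\ (\mathrm{mod}\ 3)$, a $3$-colouring of $\mathbb{Z}^n$ by $\sum_i x_i\bmod 3$ should force $n\equiv 2\ (\mathrm{mod}\ 3)$, while a complementary density argument rules out $k\ge 3$; and (d) $n=2$ with $H$ not of the form $P_k$ or $P_2\square P_k$, which is the non-existence half of the $\Lambda_2$ classification.

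\textbf{The main obstacle} is that parts (a) and (b), and the $t\ge 2$ part of (c), contain the Golomb--Welch conjecture and its thickened analogue, so a complete proof of the biconditional is beyond reach with present techniques. What is realistically attainable, and what I would aim to write up, is: all of sufficiency; the necessity direction conditional on Golomb--Welch; and, unconditionally, the cases $n=2$, the cases $n=3$, and the case $t=1$ of necessity --- in particular the $n\equiv 2\ (\mathrm{mod}\ 3)$ dichotomy for $H=P_2\square P_2$ --- which I expect to be the most technically demanding part of the unconditional work.
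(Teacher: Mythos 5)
You are addressing a statement that the paper itself presents as a \emph{conjecture} and does not prove: the paper establishes only the sufficiency half (Theorem \ref{T}, proved via the group-homomorphism construction of Corollary \ref{C}) together with Theorem \ref{T1} and its corollary, which settle the necessity direction for $n=2$ only. Your proposal correctly diagnoses that necessity for $n\geq 3$ contains the Golomb--Welch conjecture (your items (a), (b) and the $t\geq 2$ part of (c)), and you do not claim to close it; so what you offer is not a proof of the biconditional, and the paper has none either. Your assessment of what is ``realistically attainable'' matches what the paper actually does, except on one point: the unconditional $t=1$ necessity for $H=P_2\square P_2$ (your $n\equiv 2\pmod 3$ dichotomy) is proved nowhere in the paper, and your sketch of it has a genuine gap. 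Colouring $\mathbb{Z}^n$ by $\sum_i x_i\bmod 3$, a translate of $(P_2\square P_2)^{\ast}$ has colour counts $(3n-1,\,2n-2,\,3n-1)$ up to a cyclic shift determined by the translation; weighting by a primitive cube root of unity $\omega$, each tile contributes $-(n+1)\,\omega^{c}$ for some $c$, so when $3\nmid n+1$ the argument only forces the tiles' shift classes to balance ($\sum_j\omega^{c_j}=0$), a congruence condition on the number of tiles rather than a contradiction. As written, ``should force $n\equiv 2\pmod 3$'' is an expectation, not a proof.

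On the sufficiency half your route --- exhibit a sublattice of index $|H^{\ast}|$ whose translates of $H^{\ast}$ tile $\mathbb{Z}^n$ --- is in substance the paper's route, since the lattice is exactly the kernel of the homomorphism $\Phi:\mathbb{Z}^n\rightarrow G$ of Corollary \ref{C}; but the paper makes every case fully explicit, while your sketch leaves the two hardest cases unproved. For (i) the paper does not ``adapt'' the cross lattice: it takes $G=\mathbb{Z}_{2nk-k+2}$ with $\Phi(e_i)=(i-1)k+1$ and verifies bijectivity on $V$ directly (similarly (ii) and (iii) use $\mathbb{Z}_{4t^2+4tk+2k}$, respectively $\mathbb{Z}_{2t+2k}\times\mathbb{Z}_{2t+2}$, with short interval arguments). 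For (iv) the paper does not induct on $r$ at all: for $n=3k+2$ it gives one explicit assignment in $G=\mathbb{Z}_{24k+12}$, namely $g_1=2+4k$, $g_2=3+6k$, $g_{2+i}=2+4k+i$, $g_{2+k+i}=2+4k-i$, $g_{2+2k+i}=6+11k+i$ for $i=1,\ldots,k$, and checks bijectivity by a table. Your proposed ``twisted product'' lemma combining an $m$-dimensional tiling by $(P_2\square P_2)^{\ast}$ with the $7$-cell cross tiling of $\mathbb{Z}^3$ is plausible but nowhere verified, and it is precisely where the new idea would be needed; until it is pinned down, even the existence part of (iv) is open in your write-up. (Your volume counts, e.g.\ $k(2n-1)+2$ for (i) and $25+13=38$ for (v), do agree with the paper's constructions.)
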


\noindent We note that $\mathrm{\mathbf{(i)}}$ and $\mathrm{\mathbf{(ii)}}$
extend Golomb-Welch conjecture as well as a conjecture raised in \cite{E} by
Etzion. For $k=1,$ the existence of a $t$-$\mathrm{PDDS}[P_{k}]$ in \textrm{%
(i)} and \textrm{(ii)} was shown by several authors in terms of $\mathrm{PLC}
$ codes, see e.g. Golomb and Welch \cite{GW}, and, for $k=2,$ by Etzion \cite%
{E} in terms of diameter perfect Lee codes. The existence of a $2-\mathrm{%
PDDS}[P_{2}]$ in $\Lambda _{3}$ follows from a Minkowski's tiling \cite{Min}%
.\bigskip\

\noindent The next theorem constitutes one of the main results of the paper.

\begin{theorem}
\label{T}A $t$-$\mathrm{PDDS}[H]$ exists for all graphs $H$ described in
\textrm{Conjecture \ref{N}}.
\end{theorem}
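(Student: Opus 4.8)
The plan is to prove Theorem~\ref{T} by constructing, for each of the five families in Conjecture~\ref{N}, an explicit lattice-like $t$-$\mathrm{PDDS}[H]$ in $\Lambda_n$. Using Remark~\ref{RR}, each construction amounts to exhibiting a lattice $L \subseteq \mathbb{Z}^n$ together with a fundamental region which is exactly the "inflated" tile $H^{\ast}$ (the set of all $v$ with $d(v,V(H))\le t$), and then checking that the translates of $H^{\ast}$ by $L$ partition $\mathbb{Z}^n$ while the translates of $H$ itself give the components of $[S]$ with the required unique-nearest-vertex property. Since a lattice tiling is determined by a single congruence condition, the verification reduces in each case to showing that the linear map $\mathbb{Z}^n \to \mathbb{Z}^n/L$ restricted to $H^{\ast}$ is a bijection; equivalently, that $|H^{\ast}| = [\mathbb{Z}^n : L]$ and no two points of $H^{\ast}$ are $L$-congruent.

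I would organize the five cases roughly as follows. Case~(v), the $2$-$\mathrm{PDDS}[P_2]$ in $\Lambda_3$, is already available from Minkowski's tiling of space by congruent "semicrosses"/bricks \cite{Min}, so I would simply cite it and record the lattice explicitly. Cases~(i) and (ii) with $k=1$ are the classical $\mathrm{PLC}(n,1)$ and $\mathrm{PLC}(2,t)$ constructions (the lattice generated by $(1,2,0,\dots,0),(0,1,2,0,\dots),\dots$ for the cross, and the standard $t$-error Lee code lattice in dimension $2$), which I would restate in the $\mathrm{PDDS}$ language. The new content is: (a) extending (i) from $k=1$ to all $k\ge 1$, i.e. tiling $\mathbb{Z}^n$ by the "inflated path" $P_k^{\ast}$ (a cross with a $P_k$ core) for $t=1$; (b) extending (ii) similarly for $t\ge 1$, $n=2$; (c) case~(iii), tiling $\mathbb{Z}^2$ by the inflated $P_2\square P_k$; and (d) case~(iv), the $1$-$\mathrm{PDDS}[P_2\square P_2]$ in $\Lambda_{3r+2}$. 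For (a) I expect the right lattice to be a perturbation of the cross lattice, adding a generator that "absorbs" the extra $k-1$ core vertices along the first coordinate; for (iv) I expect an inductive step that builds the dimension-$(3r+2)$ tiling from the dimension-$2$ case (matching the "$3r+2$" pattern, reminiscent of how perfect Lee/Hamming-type codes lift by blocks of three coordinates), using a product or concatenation of lattices.

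The key structural lemma I would isolate first is a counting identity: for $H$ a product of paths, $|H^{\ast}|$ in $\Lambda_n$ has a clean closed form (for $P_k$ with $t=1$ it is $2n + k$; for $P_2\square P_k$ in $\Lambda_2$ it is $2k+6$, etc.), and this number must equal the index of the candidate lattice. Getting these volume counts right, and then pinning down generators whose determinant matches, is the routine part. The genuinely hard part will be verifying injectivity of the tile modulo the lattice in the two-parameter families~(a),(b),(iii) — i.e. showing that the chosen lattice has no "self-overlap" of the inflated tile for \emph{all} $k$ and (in (ii)) all $t$ simultaneously. I would handle this by a direct coordinate argument: assume $x-y\in L$ with $x,y\in H^{\ast}$, write $x-y$ in terms of the lattice generators, and use the bounded-coordinate description of $H^{\ast}$ (each coordinate lies in a small interval determined by $t$ and $k$) to force all coefficients to be $0$. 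Case~(iv) in arbitrary dimension $3r+2$ is the other potential obstacle; there I would fall back on an explicit recursive description of the lattice and an induction on $r$, using the $r=0$ case ($n=2$, which also follows from (iii) with $k=2$) as the base.
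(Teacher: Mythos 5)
Your framework is the right one and is in fact equivalent to the paper's main tool: checking that the inflated tile $H^{\ast}$ maps bijectively onto $\mathbb{Z}^n/L$ is exactly the paper's Corollary~\ref{C} (with $G=\mathbb{Z}^n/L$ and $\Phi$ the quotient homomorphism), and the paper likewise disposes of case (v) by citing Minkowski and of $k=1$ by the classical Lee-code constructions. But for the cases that carry the actual content of Theorem~\ref{T} — (i) with $k\ge 2$, (ii), (iii) and (iv) — your proposal never exhibits the lattices (equivalently, the groups and the images $g_i=\Phi(e_i)$); it only says you ``expect'' a perturbation of the cross lattice, resp.\ an induction on $r$ for (iv). That is precisely the part that cannot be left to expectation: the paper's proof consists of explicit choices, e.g.\ $G=\mathbb{Z}_{2nk-k+2}$ with $g_i=(i-1)k+1$ for a $1$-$\mathrm{PDDS}[P_k]$ in $\Lambda_n$, $G=\mathbb{Z}_{2t^2+2tk+k}$ with $g_1=1$, $g_2=2t+1$ for a $t$-$\mathrm{PDDS}[P_k]$ in $\Lambda_2$, a product group $\mathbb{Z}_{2t+2k}\times\mathbb{Z}_{2t+2}$ (or a gcd-dependent variant) for $P_2\square P_k$, and $G=\mathbb{Z}_{24k+12}$ with a delicate list $g_1=2+4k$, $g_2=3+6k$, $g_{2+i}=2+4k+i$, $g_{2+k+i}=2+4k-i$, $g_{2+2k+i}=6+11k+i$ for case (iv), each followed by the interval-by-interval verification of bijectivity that you correctly identify as the hard step but do not carry out.

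Two further concrete problems. First, your ``clean closed forms'' for $|H^{\ast}|$ are wrong beyond the smallest cases: for $H=P_k$ in $\Lambda_n$ with $t=1$ the count is $2nk-k+2$, not $2n+k$, and for $P_2\square P_k$ in $\Lambda_2$ with $t=1$ it is $4k+4$, not $2k+6$; since the order of the group must equal this count, the error would propagate into any candidate lattice. Second, your plan for (iv) — an induction on $r$ lifting the $n=2$ case by blocks of three coordinates — is not how the paper proceeds (it gives a single direct construction valid for all $n=3k+2$), and there is no evident inductive mechanism: the inflated $P_2\square P_2$ in dimension $n+3$ is not assembled from copies of the tile in dimension $n$, so this step of your outline is unsupported and would likely need to be replaced by an explicit construction of the paper's type.
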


\noindent The following theorem provides additional supporting evidence for
Conjecture \ref{N}.

\begin{theorem}
\label{T1}
If $3\leq s\leq r$ then there is no $t$-$ \mathrm{PDDS}[P_{s}\square P_{r}]$\ in $\Lambda _{2}$
for $t\geq 1$.
\end{theorem}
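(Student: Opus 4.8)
The plan is to show that a $t$-$\mathrm{PDDS}[P_s\square P_r]$ with $3\le s\le r$ would force an impossible local configuration around a single tile. By Remark~\ref{RR}, such a PDDS is equivalent to a tiling of $\mathbb{Z}^2$ by the ``cross-like'' shape $H^\ast$ obtained by taking the $s\times r$ grid $P_s\square P_r$ and adjoining every integer point within $\ell_1$-distance $t$ of it. I would set up coordinates so that one tile $T$ occupies $\{0,\dots,s-1\}\times\{0,\dots,r-1\}$ as its ``core'' (the copy of $P_s\square P_r$), with its halo consisting of all points at $\ell_1$-distance $\le t$ from that rectangle. The key structural fact I want to extract from the definition of PDDS is that the core vertices of the tiles are exactly the set $S$, that distinct components (cores) are at $\ell_1$-distance $\ge 2t+1$ from each other (otherwise some vertex would be within $t$ of two components, or within $t$ of two vertices of one component — this is essentially the argument already used in the proof of the first theorem), and that the haloed shapes $H^\ast$ partition $\mathbb{Z}^2$.

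First I would examine the neighbourhood of a \emph{corner} of the core rectangle, say the corner at $(0,0)$, looking at the three ``quadrant'' directions pointing away from the rectangle: the point $(-1,-1)$ (and more generally the diagonal region southwest of the corner). Since $s,r\ge 3$, the core has genuine corners with two full perpendicular sides of length $\ge 3$ emanating from them, and the halo around such a corner is a ``notched'' region — an $\ell_1$-ball quadrant of radius $t$ attached at the corner. I would then ask which tile covers the point $p=(-1,-1,\dots)$ just diagonally outside the corner but outside the halo of $T$ when $t$ is small, or more carefully, track the ``first uncovered'' diagonal point outside $T$'s halo near the corner. The point is that the core of whatever tile $T'$ covers that point must itself be an $s\times r$ rectangle sitting in the narrow wedge near the corner of $T$, and the constraint $s,r\ge 3$ (so the rectangle is genuinely two-dimensional, not a path) makes it impossible to fit $T'$'s core and halo into the available space without its halo overlapping $T$'s halo or leaving a gap.

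The cleanest route is probably a \emph{counting/area} contradiction localized to a corner, combined with the rigidity of rectangular cores. Concretely: consider the set $A$ of points within $\ell_1$-distance $t$ of the corner vertex $(0,0)$ that lie in the open third quadrant (strictly negative in the two relevant coordinates). These points are covered by $T$'s halo. Now consider points just beyond them. I would argue that the tile covering the ``outer corner'' point must have its core adjacent-ish to $T$'s core in a way that, because both cores are $s\times r$ rectangles with $s\ge 3$, forces two core rectangles to come within $\ell_1$-distance $\le 2t$ of each other along a length-$\ge 3$ stretch, contradicting the separation bound $\ge 2t+1$. The contrast with the allowed cases is instructive and guides the proof: $P_2\square P_k$ works (case (iii)) precisely because a thickness-$2$ core can interlock with a shifted copy; a thickness-$\ge 3$ core is too ``fat'' for the halo wedges at its corners to be tiled by translates of the same fat shape.

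\textbf{The main obstacle} I anticipate is making the corner analysis airtight across \emph{all} $t\ge 1$ simultaneously, rather than $t=1$ alone: for larger $t$ the halo wedge at a corner is a large triangle, and one must rule out the possibility that several tiles cooperate to fill the region near $T$'s corner. I would handle this by focusing on the single point of $\mathbb{Z}^2$ that is ``extremal'' in the diagonal direction among points forced to be covered by cores near the corner — i.e.\ find a vertex $v$ such that $d(v,S)=t$ is witnessed only by a core vertex that is itself forced by $T$'s geometry to sit too close to $T$'s own core. Equivalently, one shows the wedge-shaped ``defect'' region left uncovered just outside $T$'s halo near a corner has the wrong shape to be the halo-boundary of any $s\times r$ rectangle, for any placement, which reduces everything to a finite, $t$-uniform case check on the corner geometry. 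That reduction — from an infinite tiling question to a bounded-size corner configuration — is the step I expect to require the most care, but once it is in place the contradiction with $3\le s\le r$ is immediate.
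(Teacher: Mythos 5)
Your general framing is sound---passing to the tiling of $\mathbb{Z}^2$ by $H^\ast$ via Remark~\ref{RR}, noting that the cores are axis-aligned $s\times r$ rectangles and that distinct cores must lie at $\ell_1$-distance at least $2t+1$---and this matches the setup of the paper's proof. But the proposal stops exactly where the proof has to happen. You announce a reduction ``from an infinite tiling question to a bounded-size corner configuration'' and assert that afterwards the contradiction with $3\le s\le r$ is ``immediate,'' yet you never exhibit that configuration, never identify a concrete vertex that cannot be covered (or a concrete pair of cores forced within distance $\le 2t$), and never rule out the cooperation of several tiles inside the corner wedge---which you yourself flag as the hard point for general $t$. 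As it stands this is a plan, not a proof.

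Moreover, the particular contradiction you aim for (two cores forced within distance $\le 2t$ ``along a length-$\ge 3$ stretch'') is delicate and may not be the right target: in the admissible case $H=P_2\square P_k$ (Theorem~\ref{abc}) the interlocking tiling has cores meeting near corners at distance exactly $2t+1$, so the margin you must beat is a single unit, and nothing in your sketch quantifies why thickness $3$ rather than $2$ pushes some pair of cores below that threshold. The paper's argument sidesteps this by producing an uncoverable vertex instead: it places one core on $\{1,\dots,k\}\times\{t+1,\dots,t+s\}$ with $k\ge s$, considers the row $A=\{(x,0):1\le x\le k\}$ just below that tile's halo, and shows that either some copy of $H^\ast$ covers only a proper initial part of $A$, in which case the next vertex of $A$ cannot be covered at all, or a single ``parallel'' copy covers all of $A$, in which case the vertices $(k+1,0)$ and $(k+1,1)$ can be covered only if $s=2$. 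To salvage your corner-based route you would need to supply precisely the missing finite analysis: a specified vertex in the wedge, an enumeration of the placements of $H^\ast$ that could cover it consistently with the already-placed tile, and a verification, uniform in $t$, that each placement either overlaps the first tile's halo or strands another vertex; none of that is in the proposal.
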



\begin{corollary}
A $t$-$\mathrm{PDDS}[H]$ in $\Lambda _{2}$ exists if and only if either $%
t\geq 1$, and $H=P_{k},k\geq 1,$ or $t\geq 1$, and $H=P_{2}\square
P_{k},k\geq 2.$
\end{corollary}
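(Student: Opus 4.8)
The plan is to derive the corollary by combining the three theorems already stated. The corollary claims that in $\Lambda_2$, a $t$-$\mathrm{PDDS}[H]$ (with $H$ a finite path or a Cartesian product of two finite paths — which by Theorem 1 is the only shape a component can have) exists precisely for $H = P_k$ ($k\geq 1$) and $H = P_2\square P_k$ ($k\geq 2$). The ``if'' direction is immediate: cases $\mathrm{\mathbf{(ii)}}$ and $\mathrm{\mathbf{(iii)}}$ of Conjecture \ref{N} assert exactly that these configurations admit a $t$-$\mathrm{PDDS}[H]$ in $\Lambda_2$ for every $t\geq 1$, and Theorem \ref{T} guarantees those constructions actually exist. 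So the content is entirely in the ``only if'' direction.

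For the ``only if'' direction, I would first invoke Theorem 1: every component of a $t$-$\mathrm{PDDS}$ in $\Lambda_n$, hence in $\Lambda_2$, is a Cartesian product of (possibly infinite) paths. In $\Lambda_2$ a component is a subgraph of the planar grid, so it is a product of at most two paths; moreover a finite component cannot involve an infinite path, so if all components are isomorphic to a fixed finite graph $H$, then $H \cong P_s \square P_r$ for some $1\leq s\leq r$ (allowing $s=1$, i.e. $H=P_r$, and also $r$ could be $1$). This reduces the classification to deciding, for each pair $(s,r)$, whether a $t$-$\mathrm{PDDS}[P_s\square P_r]$ exists in $\Lambda_2$.

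Now I would split on the value of $s$. If $s=1$, then $H=P_r$ is a path, which is the allowed family $P_k$, $k\geq 1$. If $s=2$, then $H=P_2\square P_r$; here the subcase $r=1$ gives $H=P_2$ (a path, already covered), and $r\geq 2$ gives the allowed family $P_2\square P_k$, $k\geq 2$. The only remaining possibility is $s\geq 3$ (and then $r\geq s\geq 3$), and Theorem \ref{T1} states precisely that no $t$-$\mathrm{PDDS}[P_s\square P_r]$ exists in $\Lambda_2$ for $3\leq s\leq r$ and any $t\geq 1$. Thus the excluded shapes are exactly ruled out, and the two listed families are exactly the ones that survive, completing the ``only if'' direction.

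There is essentially no obstacle here: the corollary is a bookkeeping consequence of Theorems 1, \ref{T}, and \ref{T1}, and the only thing to be careful about is the boundary cases $s=1$ and $s=2,r=1$, which must be folded into the $P_k$ family rather than treated as new types. The genuine work — the existence constructions and the non-existence argument for $s\geq 3$ — lives in Theorems \ref{T} and \ref{T1}, which we are entitled to assume.
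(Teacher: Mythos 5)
Your proposal is correct and follows exactly the route the paper intends: the corollary is stated as an immediate consequence of Theorem 1 (components must be finite Cartesian products of paths, hence $P_s\square P_r$ in $\Lambda_2$), Theorem \ref{T} for the existence of the two listed families, and Theorem \ref{T1} to exclude $3\leq s\leq r$. Your handling of the boundary cases $s=1$ and $(s,r)=(2,1)$ is the only bookkeeping needed, and it is done correctly.
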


\noindent To show that a $t$-$\mathrm{PDDS}[H]$ exists also in the case when
$H$ is the Cartesian product of at least three paths we offer the following
theorem:

\begin{theorem}
\label{T2}There is a $1$-$\mathrm{PDDS}[Q_{3}]$ in $\Lambda _{3},$ where $%
Q_{3}=P_{2}\square P_{2}\square P_{2}$ is the $3$-dimensional hypercube.
\end{theorem}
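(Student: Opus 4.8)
The plan is to exhibit an explicit lattice‑like $1$‑$\mathrm{PDDS}[Q_3]$ in $\Lambda_3$, i.e. a sublattice $L$ of $\mathbb{Z}^3$ together with a fixed copy $Q$ of the cube $Q_3$ (say the eight vertices with coordinates in $\{0,1\}$) such that the translates $L+Q$ partition $\mathbb{Z}^3$ and the extra distance‑$1$ requirement of the $\mathrm{PDDS}$ definition holds. By Remark \ref{RR}, the second condition is equivalent to saying that the ``inflated'' shape $Q^\ast$ — the set of points of $\mathbb{Z}^3$ at Lee‑distance at most $1$ from $Q$, which is the cube $Q$ together with the six axis‑translates $Q\pm e_i$ — tiles $\mathbb{Z}^3$ by the same translation group $L$. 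So the whole problem reduces to finding a lattice $L$ of index $|Q^\ast|$ in $\mathbb{Z}^3$ for which $\{v + Q^\ast : v\in L\}$ is a partition. One counts $|Q^\ast| = 8 + 6\cdot 4 = 32$ (each face of $Q$ contributes a translated $2\times2$ face), so I am looking for a sublattice of index $32$.

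First I would write down a candidate generator matrix for $L$ and verify the two tiling facts. A natural guess, given the $P_2\square P_2$ constructions used earlier for parts (iii)–(iv) of Conjecture \ref{N}, is to take $L$ generated by vectors such as $(4,2,0)$, $(0,4,2)$, $(2,0,4)$ — a cyclically symmetric lattice of determinant $4^3 + 2^3 = 72$; if that index is wrong I would adjust to hit $32$, e.g. by combining a ``diagonal'' glide like $(2,2,2)$ with suitable axis vectors, or by starting from the known Minkowski‑type tiling behind the $2$‑$\mathrm{PDDS}[P_2]$ in $\Lambda_3$ mentioned after Conjecture \ref{N} and modifying it. Once $L$ is fixed, the verification has two parts: (a) $L + Q$ is a packing, equivalently the $8$ vertices of $Q$ lie in distinct cosets of $L$ and $[\mathbb{Z}^3:L]$ is a multiple of $8$ with the cosets of $L/\langle L, Q\rangle$ covered — more simply, check that the $32$ points of $Q^\ast$ are pairwise incongruent mod $L$; (b) since $|Q^\ast| = [\mathbb{Z}^3:L] = 32$, pairwise incongruence of the points of $Q^\ast$ modulo $L$ is exactly equivalent to $L + Q^\ast = \mathbb{Z}^3$, i.e. to the tiling. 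Thus the entire proof collapses to one finite check: that a chosen set of $32$ integer vectors forms a complete residue system modulo the chosen index‑$32$ lattice $L$.

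Having the tiling of $\mathbb{Z}^3$ by $Q^\ast$, I would then translate back to the $\mathrm{PDDS}$ language: the component containing $v\in L$ is the cube $v+Q$, every point of $\mathbb{Z}^3$ lies in exactly one inflated cube $v+Q^\ast$, hence is at distance $\le 1$ from exactly that component $v+Q$, and — because $Q^\ast \setminus Q$ is a disjoint union of the six faces $Q\pm e_i$, each point of which is at distance $1$ from a \emph{unique} vertex of $Q$ (the orthogonal projection onto that face of the cube) — the uniqueness of the nearest vertex $w$ in the definition of a $t$‑$\mathrm{PDDS}$ holds. This last point is the only genuinely geometric step and should be checked case by case over the six translated faces; it is short because the Lee ball of radius $1$ is a cross and each face point sees exactly one cube vertex within distance $1$.

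The main obstacle is producing the right lattice: I must find an index‑$32$ sublattice $L\subset\mathbb{Z}^3$ making $Q^\ast$ a fundamental domain, and the ``shifted‑faces'' shape of $Q^\ast$ is awkward (it is not a box, so there is no product construction). I expect to locate $L$ by a short computer search or by adapting the Minkowski tiling underlying the $2$‑$\mathrm{PDDS}[P_2]$ in $\Lambda_3$; once a valid $L$ is in hand the remaining work — the two congruence checks and the face‑by‑face uniqueness verification — is routine finite case analysis, and lattice‑likeness (a selling point emphasized in the abstract) is automatic since $L$ acts by translations.
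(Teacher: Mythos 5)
Your reduction is sound and is in fact the same framework the paper uses: by Remark \ref{RR} and Corollary \ref{C}, a $1$-$\mathrm{PDDS}[Q_3]$ is equivalent to a lattice(-like) tiling of $\mathbb{Z}^3$ by the inflated shape $Q^{\ast}$, your count $|Q^{\ast}|=8+6\cdot 4=32$ is right, the equivalence ``$32$ points pairwise incongruent mod an index-$32$ lattice $L$ $\Leftrightarrow$ $L+Q^{\ast}=\mathbb{Z}^3$'' is correct, and your face-by-face argument that every point of $Q^{\ast}\setminus Q$ sees a unique nearest vertex of $Q$ is a valid (and worth stating) verification of the uniqueness clause in the $\mathrm{PDDS}$ definition.

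However, there is a genuine gap: you never produce the lattice, and that is the entire content of the theorem. The proposal explicitly defers the existence of an index-$32$ sublattice $L$ with $Q^{\ast}$ as fundamental domain to ``a short computer search'' or to adapting the Minkowski tiling, and the one concrete candidate you offer, generated by $(4,2,0)$, $(0,4,2)$, $(2,0,4)$, has determinant $72$, not $32$, so it cannot work. Everything before that point is a correct restatement of the problem, not a proof of existence; a reduction to an unexecuted finite search does not establish the theorem. The paper closes exactly this gap by exhibiting the witness explicitly: it applies Corollary \ref{C} with $D\simeq H^{\ast}$ on the $32$ vertices, the group $G=\mathbb{Z}_2\oplus\mathbb{Z}_4\oplus\mathbb{Z}_4$, and $\Phi(e_1)=(1,3,3)$, $\Phi(e_2)=(0,1,0)$, $\Phi(e_3)=(0,0,1)$, and then checks in a table that $\Phi$ restricted to $V$ is a bijection. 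Note also that the quotient here is \emph{not} cyclic, so a search restricted to the most natural cyclic guesses (as your plan suggests) could easily miss the solution; any complete argument must either display such a homomorphism (equivalently, generators of $L=\ker\Phi$) and verify the $32$ incongruences, or give some structural reason one exists, neither of which the proposal does.
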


\noindent Recently we learnt that Buzaglo and Etzion proved that a $1$-%
\textrm{PDDS}$[Q_{n}]$ exists if and only if $n=2^{k}-1,$ or $n=3^{k}-1,$
c.f. \cite{BE}. They proved the statement in terms of tilings by crosses;
see Remark \ref{RR}.\bigskip

\noindent All $t$-$\mathrm{PDDS}$\thinspace s constructed in this paper are
lattice-like, which is a very important feature from the practical point of
view as in this case decoding algorithms tend to be much simpler. As the
notion of lattice-like \textrm{$PDDS$} is a key one we provide a formal
definition. Let $H=(V,E)$ be a subgraph of $\Lambda _{n},$ and let $z\in
\mathbb{Z}^{n}$. Then $H+z$ denotes the graph $H^{\prime }=(V^{\prime
},E^{\prime })$, where $V^{\prime }=V+z=\{w$; there exists $v\in V,w=v+z\},$
and $uv\in E$ if and only if $(u+z)(v+z)\in E^{\prime }$. Let $R$ be a $t$-%
\textrm{$PDDS$}$[H]$ and $D\simeq H$ be a component of $R.$ Then $R$ will be
called \textit{lattice-like} if there exists a lattice $L$ such that $%
D^{\prime }$ is a component of $R$ if and only if there is $z\in L$ so that $%
D^{\prime }=D+z$. We recall, see Remark \ref{RR}, that a $t$-\textrm{PDDS}$%
[H]$ can be seen as a tiling. Thus a notion of a lattice-like tiling will be
understood in the same way as a lattice-like \textrm{PDDS}.\bigskip

\noindent All desired $t$-$\mathrm{PDDS}$ in $\Lambda _{n}$ will be
constructed by the same algebraic method. A $\mathrm{PDDS}$ constructed this
way is lattice-like, which in turn implies that such a $\mathrm{PDDS}$ is
periodic as well. That is, a suitable restriction of this $\mathrm{PDDS}$
constitutes a $\mathrm{PDDS}$ in the Cartesian product of cycles. This is
the case of main interest because of the placement problem discussed above.
We recall that a set $S\subset \mathbb{Z}^{n}$ is \textit{periodic} if there
are integers $p_{1},\ldots ,p_{n}$ such that $v\in S$ implies $v\pm
p_{i}e_{i}\in S$ for all $i=1,\ldots ,n,$ where $e_{i}$ is the unit vector
in the direction of the $i$-axis. We recall that each lattice-like $t$-$%
\mathrm{PDDS}$ is periodic, but the converse is not true in
general.\bigskip

\noindent Now we describe a construction of a partition (tiling) of $\Lambda
_{n}$. As far as we know Stein in \cite{SStein} was the first one to use a
group homomorphism to construct a lattice-like tiling; he did it in the case
of a tiling by different types of crosses. Several variations of Stein's
construction can be found throughout the literature, see e.g. \cite%
{SStein,Molnar,Sz,S,HS,Costa,Sw,HB}. For the reader's convenience we
provide a detailed description of this generalization. Let
$(\mathbb{Z}^{n},+)$ be the (com\-po\-nent-wise)
additive group on $\mathbb{Z}^{n}$. Consider a lattice $L$ in $(\mathbb{Z}%
^{n},+)$, i.e. a subgroup of $(\mathbb{Z}^{n},+)$, generated by elements $%
u_{1},\ldots ,u_{n}\in \mathbb{Z}^{n}$; hence $L=\{\alpha _{1}u_{1}+\ldots
+\alpha _{n}u_{n};\alpha _{i}\in \mathbb{Z},i=1,\ldots ,n\}$. We denote by $%
F $ the factor group $(\mathbb{Z}^{n},+)/L$. Furthermore, let a set $T$ of
vertices in $\mathbb{Z}^{n}$ contain one element from each coset of $(%
\mathbb{Z}^{n},+)/L$. Then, $\mathcal{T}=\{T+u\mathbf{;}$ $u\mathbf{\in }L%
\mathbf{\}}$ constitutes a partition of $\mathbb{Z}^{n}$ into parts of size $%
|F|$ and, for each $u\in L$, we have that $[T+u],$ the subgraph of $\Lambda
_{n}$ induced by $T+u$, is isomorphic to $[T]$. Clearly, for a given lattice
$L,$ we can partition the vertex set of $\Lambda _{n}$ into parts such that
the corresponding induced subgraphs have different shapes depending on the
choice of $T$.\bigskip

\noindent Example. Set $L=\{\alpha _{1}\mathbf{(}13,0)+\alpha
_{2}(3,2);\alpha _{i}\in \mathbb{Z},i=1,2\}$. Then, $(\mathbb{Z}%
^{2},+)/L=Z_{13}$. There are many options how to choose the graph $[T],$%
e.g., $[T]$ might be a path of length $12$, or a Lee sphere of radius 2, see
the figure below where the both options are depicted in bold font. The
numbers at the vertices of $\Lambda _{2}$ are elements of $Z_{13}=(\mathbb{Z}%
^{2},+)/L.$

\begin{figure}[tph]
\unitlength=0.40mm \special{em:linewidth 0.4pt} \linethickness{0.4pt}
\begin{picture}(217.00,55.00)
\put(215.00,2.00){\circle{2.00}}
\put(85.00,12.00){\circle{2.00}}
\put(95.00,12.00){\circle{2.00}}
\put(105.00,12.00){\circle{2.00}}
\put(115.00,12.00){\circle{2.00}}
\put(125.00,12.00){\circle{2.00}}
\put(135.00,12.00){\circle{2.00}}
\put(145.00,12.00){\circle{2.00}}
\put(155.00,12.00){\circle{2.00}}
\put(165.00,12.00){\circle{2.00}}
\put(175.00,12.00){\circle{2.00}}
\put(185.00,12.00){\circle{2.00}}
\put(205.00,12.00){\circle{2.00}}
\put(215.00,12.00){\circle{2.00}}
\put(85.00,22.00){\circle{2.00}}
\put(95.00,22.00){\circle{2.00}}
\put(105.00,22.00){\circle{2.00}}
\put(115.00,22.00){\circle{2.00}}
\put(125.00,22.00){\circle{2.00}}
\put(135.00,22.00){\circle{2.00}}
\put(155.00,22.00){\circle{2.00}}
\put(165.00,22.00){\circle{2.00}}
\put(175.00,22.00){\circle{2.00}}
\put(215.00,22.00){\circle{2.00}}
\put(85.00,32.00){\circle{2.00}}
\put(95.00,32.00){\circle{2.00}}
\put(105.00,32.00){\circle{2.00}}
\put(115.00,32.00){\circle{2.00}}
\put(125.00,32.00){\circle{2.00}}
\put(165.00,32.00){\circle{2.00}}
\put(85.00,42.00){\circle{2.00}}
\put(95.00,42.00){\circle{2.00}}
\put(105.00,42.00){\circle{2.00}}
\put(115.00,42.00){\circle{2.00}}
\put(175.00,42.00){\circle{2.00}}
\put(215.00,42.00){\circle{2.00}}
\put(85.00,52.00){\circle{2.00}}
\put(95.00,52.00){\circle{2.00}}
\put(105.00,52.00){\circle{2.00}}
\put(115.00,52.00){\circle{2.00}}
\put(125.00,52.00){\circle{2.00}}
\put(165.00,52.00){\circle{2.00}}
\put(175.00,52.00){\circle{2.00}}
\put(185.00,52.00){\circle{2.00}}
\put(205.00,52.00){\circle{2.00}}
\put(215.00,52.00){\circle{2.00}}
\put(83.00,5.00){\makebox(0,0)[cc]{$_0$}}
\put(93.00,5.00){\makebox(0,0)[cc]{$_1$}}
\put(103.00,5.00){\makebox(0,0)[cc]{$_2$}}
\put(113.00,5.00){\makebox(0,0)[cc]{$_3$}}
\put(123.00,5.00){\makebox(0,0)[cc]{$_4$}}
\put(133.00,5.00){\makebox(0,0)[cc]{$_5$}}
\put(143.00,5.00){\makebox(0,0)[cc]{$_6$}}
\put(153.00,5.00){\makebox(0,0)[cc]{$_7$}}
\put(163.00,5.00){\makebox(0,0)[cc]{$_8$}}
\put(173.00,5.00){\makebox(0,0)[cc]{$_9$}}
\put(183.00,5.00){\makebox(0,0)[cc]{$_a$}}
\put(193.00,5.00){\makebox(0,0)[cc]{$_b$}}
\put(203.00,5.00){\makebox(0,0)[cc]{$_c$}}
\put(213.00,5.00){\makebox(0,0)[cc]{$_0$}}
\put(83.00,15.00){\makebox(0,0)[cc]{$_5$}}
\put(93.00,15.00){\makebox(0,0)[cc]{$_6$}}
\put(103.00,15.00){\makebox(0,0)[cc]{$_7$}}
\put(113.00,15.00){\makebox(0,0)[cc]{$_8$}}
\put(123.00,15.00){\makebox(0,0)[cc]{$_9$}}
\put(133.00,15.00){\makebox(0,0)[cc]{$_a$}}
\put(143.00,15.00){\makebox(0,0)[cc]{$_b$}}
\put(153.00,15.00){\makebox(0,0)[cc]{$_c$}}
\put(163.00,15.00){\makebox(0,0)[cc]{$_0$}}
\put(173.00,15.00){\makebox(0,0)[cc]{$_1$}}
\put(183.00,15.00){\makebox(0,0)[cc]{$_2$}}
\put(192.00,15.00){\makebox(0,0)[cc]{$_3$}}
\put(203.00,15.00){\makebox(0,0)[cc]{$_4$}}
\put(213.00,15.00){\makebox(0,0)[cc]{$_5$}}
\put(163.00,25.00){\makebox(0,0)[cc]{$_5$}}
\put(173.00,25.00){\makebox(0,0)[cc]{$_6$}}
\put(182.00,25.00){\makebox(0,0)[cc]{$_7$}}
\put(192.00,25.00){\makebox(0,0)[cc]{$_8$}}
\put(202.00,25.00){\makebox(0,0)[cc]{$_9$}}
\put(213.00,25.00){\makebox(0,0)[cc]{$_a$}}
\put(192.00,35.00){\makebox(0,0)[cc]{$_0$}}
\put(202.00,35.00){\makebox(0,0)[cc]{$_1$}}
\put(213.00,35.00){\makebox(0,0)[cc]{$_2$}}
\put(93.00,55.00){\makebox(0,0)[cc]{$_0$}}
\put(103.00,55.00){\makebox(0,0)[cc]{$_1$}}
\put(113.00,55.00){\makebox(0,0)[cc]{$_2$}}
\put(123.00,55.00){\makebox(0,0)[cc]{$_3$}}
\put(133.00,55.00){\makebox(0,0)[cc]{$_4$}}
\put(143.00,55.00){\makebox(0,0)[cc]{$_5$}}
\put(153.00,55.00){\makebox(0,0)[cc]{$_6$}}
\put(163.00,55.00){\makebox(0,0)[cc]{$_7$}}
\put(173.00,55.00){\makebox(0,0)[cc]{$_8$}}
\put(183.00,55.00){\makebox(0,0)[cc]{$_9$}}
\put(193.00,55.00){\makebox(0,0)[cc]{$_a$}}
\put(203.00,55.00){\makebox(0,0)[cc]{$_b$}}
\put(213.00,55.00){\makebox(0,0)[cc]{$_c$}}
\put(83.00,55.00){\makebox(0,0)[cc]{$_c$}}
\put(113.00,25.00){\makebox(0,0)[cc]{$_0$}}
\put(123.00,25.00){\makebox(0,0)[cc]{$_1$}}
\put(133.00,25.00){\makebox(0,0)[cc]{$_2$}}
\put(143.00,25.00){\makebox(0,0)[cc]{$_3$}}
\put(153.00,25.00){\makebox(0,0)[cc]{$_4$}}
\put(83.00,25.00){\makebox(0,0)[cc]{$_a$}}
\put(93.00,25.00){\makebox(0,0)[cc]{$_b$}}
\put(103.00,25.00){\makebox(0,0)[cc]{$_c$}}
\put(83.00,35.00){\makebox(0,0)[cc]{$_2$}}
\put(93.00,35.00){\makebox(0,0)[cc]{$_3$}}
\put(103.00,35.00){\makebox(0,0)[cc]{$_4$}}
\put(113.00,35.00){\makebox(0,0)[cc]{$_5$}}
\put(123.00,35.00){\makebox(0,0)[cc]{$_6$}}
\put(133.00,35.00){\makebox(0,0)[cc]{$_7$}}
\put(143.00,35.00){\makebox(0,0)[cc]{$_8$}}
\put(153.00,35.00){\makebox(0,0)[cc]{$_9$}}
\put(163.00,35.00){\makebox(0,0)[cc]{$_a$}}
\put(173.00,35.00){\makebox(0,0)[cc]{$_b$}}
\put(182.00,35.00){\makebox(0,0)[cc]{$_c$}}
\put(83.00,45.00){\makebox(0,0)[cc]{$_7$}}
\put(93.00,45.00){\makebox(0,0)[cc]{$_8$}}
\put(103.00,45.00){\makebox(0,0)[cc]{$_9$}}
\put(113.00,45.00){\makebox(0,0)[cc]{$_a$}}
\put(123.00,45.00){\makebox(0,0)[cc]{$_b$}}
\put(133.00,45.00){\makebox(0,0)[cc]{$_c$}}
\put(143.00,45.00){\makebox(0,0)[cc]{$_0$}}
\put(153.00,45.00){\makebox(0,0)[cc]{$_1$}}
\put(163.00,45.00){\makebox(0,0)[cc]{$_2$}}
\put(173.00,45.00){\makebox(0,0)[cc]{$_3$}}
\put(183.00,45.00){\makebox(0,0)[cc]{$_4$}}
\put(192.00,45.00){\makebox(0,0)[cc]{$_5$}}
\put(203.00,45.00){\makebox(0,0)[cc]{$_6$}}
\put(213.00,45.00){\makebox(0,0)[cc]{$_7$}}
\put(85.00,1.00){\rule{120.00\unitlength}{1.00\unitlength}}
\put(185.00,22.00){\rule{1.00\unitlength}{20.00\unitlength}}
\put(205.00,22.00){\rule{1.00\unitlength}{20.00\unitlength}}
\put(185.00,41.00){\rule{20.00\unitlength}{1.00\unitlength}}
\put(185.00,21.00){\rule{20.00\unitlength}{1.00\unitlength}}
\put(175.00,31.00){\rule{40.00\unitlength}{1.00\unitlength}}
\put(195.00,12.00){\rule{1.00\unitlength}{40.00\unitlength}}
\put(145.00,22.00){\circle{2.00}}
\put(135.00,32.00){\circle{2.00}}
\put(145.00,32.00){\circle{2.00}}
\put(155.00,32.00){\circle{2.00}}
\put(125.00,42.00){\circle{2.00}}
\put(135.00,42.00){\circle{2.00}}
\put(145.00,42.00){\circle{2.00}}
\put(155.00,42.00){\circle{2.00}}
\put(165.00,42.00){\circle{2.00}}
\put(135.00,52.00){\circle{2.00}}
\put(145.00,52.00){\circle{2.00}}
\put(155.00,52.00){\circle{2.00}}
\put(196.00,42.00){\circle*{4.00}}
\put(186.00,42.00){\circle*{4.00}}
\put(196.00,32.00){\circle*{4.00}}
\put(186.00,32.00){\circle*{4.00}}
\put(196.00,22.00){\circle*{4.00}}
\put(186.00,22.00){\circle*{4.00}}
\put(206.00,42.00){\circle*{4.00}}
\put(206.00,32.00){\circle*{4.00}}
\put(206.00,22.00){\circle*{4.00}}
\put(196.00,52.00){\circle*{4.00}}
\put(196.00,12.00){\circle*{4.00}}
\put(175.00,32.00){\circle*{4.00}}
\put(215.00,32.00){\circle*{4.00}}
\put(106.00,2.00){\circle*{4.00}}
\put(96.00,2.00){\circle*{4.00}}
\put(116.00,2.00){\circle*{4.00}}
\put(85.00,2.00){\circle*{4.00}}
\put(125.00,2.00){\circle*{4.00}}
\put(135.00,2.00){\circle*{2.00}}
\put(145.00,2.00){\circle*{2.00}}
\put(155.00,2.00){\circle*{2.00}}
\put(165.00,2.00){\circle*{2.00}}
\put(175.00,2.00){\circle*{2.00}}
\put(156.00,2.00){\circle*{4.00}}
\put(146.00,2.00){\circle*{4.00}}
\put(166.00,2.00){\circle*{4.00}}
\put(135.00,2.00){\circle*{4.00}}
\put(175.00,2.00){\circle*{4.00}}
\put(185.00,2.00){\circle*{2.00}}
\put(195.00,2.00){\circle*{2.00}}
\put(205.00,2.00){\circle*{2.00}}
\put(185.00,2.00){\circle*{2.00}}
\put(195.00,2.00){\circle*{2.00}}
\put(205.00,2.00){\circle*{2.00}}
\put(186.00,2.00){\circle*{4.00}}
\put(196.00,2.00){\circle*{4.00}}
\put(205.00,2.00){\circle*{4.00}}
\emline{160.00}{52.00}{1}{160.00}{52.00}{2}
\emline{86.00}{52.00}{3}{94.00}{52.00}{4}
\emline{96.00}{52.00}{5}{104.00}{52.00}{6}
\emline{106.00}{52.00}{7}{114.00}{52.00}{8}
\emline{116.00}{52.00}{9}{124.00}{52.00}{10}
\emline{126.00}{52.00}{11}{134.00}{52.00}{12}
\emline{136.00}{52.00}{13}{144.00}{52.00}{14}
\emline{146.00}{52.00}{15}{154.00}{52.00}{16}
\emline{156.00}{52.00}{17}{164.00}{52.00}{18}
\emline{166.00}{52.00}{19}{174.00}{52.00}{20}
\emline{176.00}{52.00}{21}{184.00}{52.00}{22}
\emline{186.00}{52.00}{23}{194.00}{52.00}{24}
\emline{196.00}{52.00}{25}{204.00}{52.00}{26}
\emline{206.00}{52.00}{27}{214.00}{52.00}{28}
\emline{160.00}{42.00}{29}{160.00}{42.00}{30}
\emline{86.00}{42.00}{31}{94.00}{42.00}{32}
\emline{96.00}{42.00}{33}{104.00}{42.00}{34}
\emline{106.00}{42.00}{35}{114.00}{42.00}{36}
\emline{116.00}{42.00}{37}{124.00}{42.00}{38}
\emline{126.00}{42.00}{39}{134.00}{42.00}{40}
\emline{136.00}{42.00}{41}{144.00}{42.00}{42}
\emline{146.00}{42.00}{43}{154.00}{42.00}{44}
\emline{156.00}{42.00}{45}{164.00}{42.00}{46}
\emline{166.00}{42.00}{47}{174.00}{42.00}{48}
\emline{176.00}{42.00}{49}{184.00}{42.00}{50}
\emline{186.00}{42.00}{51}{194.00}{42.00}{52}
\emline{196.00}{42.00}{53}{204.00}{42.00}{54}
\emline{206.00}{42.00}{55}{214.00}{42.00}{56}
\emline{160.00}{32.00}{57}{160.00}{32.00}{58}
\emline{86.00}{32.00}{59}{94.00}{32.00}{60}
\emline{96.00}{32.00}{61}{104.00}{32.00}{62}
\emline{106.00}{32.00}{63}{114.00}{32.00}{64}
\emline{116.00}{32.00}{65}{124.00}{32.00}{66}
\emline{126.00}{32.00}{67}{134.00}{32.00}{68}
\emline{136.00}{32.00}{69}{144.00}{32.00}{70}
\emline{146.00}{32.00}{71}{154.00}{32.00}{72}
\emline{156.00}{32.00}{73}{164.00}{32.00}{74}
\emline{166.00}{32.00}{75}{174.00}{32.00}{76}
\emline{176.00}{32.00}{77}{184.00}{32.00}{78}
\emline{186.00}{32.00}{79}{194.00}{32.00}{80}
\emline{196.00}{32.00}{81}{204.00}{32.00}{82}
\emline{206.00}{32.00}{83}{214.00}{32.00}{84}
\emline{160.00}{22.00}{85}{160.00}{22.00}{86}
\emline{86.00}{22.00}{87}{94.00}{22.00}{88}
\emline{96.00}{22.00}{89}{104.00}{22.00}{90}
\emline{106.00}{22.00}{91}{114.00}{22.00}{92}
\emline{116.00}{22.00}{93}{124.00}{22.00}{94}
\emline{126.00}{22.00}{95}{134.00}{22.00}{96}
\emline{136.00}{22.00}{97}{144.00}{22.00}{98}
\emline{146.00}{22.00}{99}{154.00}{22.00}{100}
\emline{156.00}{22.00}{101}{164.00}{22.00}{102}
\emline{166.00}{22.00}{103}{174.00}{22.00}{104}
\emline{176.00}{22.00}{105}{184.00}{22.00}{106}
\emline{186.00}{22.00}{107}{194.00}{22.00}{108}
\emline{196.00}{22.00}{109}{204.00}{22.00}{110}
\emline{206.00}{22.00}{111}{214.00}{22.00}{112}
\emline{160.00}{12.00}{113}{160.00}{12.00}{114}
\emline{86.00}{12.00}{115}{94.00}{12.00}{116}
\emline{96.00}{12.00}{117}{104.00}{12.00}{118}
\emline{106.00}{12.00}{119}{114.00}{12.00}{120}
\emline{116.00}{12.00}{121}{124.00}{12.00}{122}
\emline{126.00}{12.00}{123}{134.00}{12.00}{124}
\emline{136.00}{12.00}{125}{144.00}{12.00}{126}
\emline{146.00}{12.00}{127}{154.00}{12.00}{128}
\emline{156.00}{12.00}{129}{164.00}{12.00}{130}
\emline{166.00}{12.00}{131}{174.00}{12.00}{132}
\emline{176.00}{12.00}{133}{184.00}{12.00}{134}
\emline{186.00}{12.00}{135}{194.00}{12.00}{136}
\emline{196.00}{12.00}{137}{204.00}{12.00}{138}
\emline{206.00}{12.00}{139}{214.00}{12.00}{140}
\emline{85.00}{51.00}{141}{85.00}{43.00}{142}
\emline{85.00}{41.00}{143}{85.00}{33.00}{144}
\emline{85.00}{31.00}{145}{85.00}{23.00}{146}
\emline{85.00}{21.00}{147}{85.00}{13.00}{148}
\emline{85.00}{11.00}{149}{85.00}{4.00}{150}
\emline{95.00}{51.00}{151}{95.00}{43.00}{152}
\emline{95.00}{41.00}{153}{95.00}{33.00}{154}
\emline{95.00}{31.00}{155}{95.00}{23.00}{156}
\emline{95.00}{21.00}{157}{95.00}{13.00}{158}
\emline{95.00}{11.00}{159}{95.00}{4.00}{160}
\emline{105.00}{51.00}{161}{105.00}{43.00}{162}
\emline{105.00}{41.00}{163}{105.00}{33.00}{164}
\emline{105.00}{31.00}{165}{105.00}{23.00}{166}
\emline{105.00}{21.00}{167}{105.00}{13.00}{168}
\emline{105.00}{11.00}{169}{105.00}{4.00}{170}
\emline{115.00}{51.00}{171}{115.00}{43.00}{172}
\emline{115.00}{41.00}{173}{115.00}{33.00}{174}
\emline{115.00}{31.00}{175}{115.00}{23.00}{176}
\emline{115.00}{21.00}{177}{115.00}{13.00}{178}
\emline{115.00}{11.00}{179}{115.00}{4.00}{180}
\emline{125.00}{51.00}{181}{125.00}{43.00}{182}
\emline{125.00}{41.00}{183}{125.00}{33.00}{184}
\emline{125.00}{31.00}{185}{125.00}{23.00}{186}
\emline{125.00}{21.00}{187}{125.00}{13.00}{188}
\emline{125.00}{11.00}{189}{125.00}{4.00}{190}
\emline{135.00}{51.00}{191}{135.00}{43.00}{192}
\emline{135.00}{41.00}{193}{135.00}{33.00}{194}
\emline{135.00}{31.00}{195}{135.00}{23.00}{196}
\emline{135.00}{21.00}{197}{135.00}{13.00}{198}
\emline{135.00}{11.00}{199}{135.00}{4.00}{200}
\emline{145.00}{51.00}{201}{145.00}{43.00}{202}
\emline{145.00}{41.00}{203}{145.00}{33.00}{204}
\emline{145.00}{31.00}{205}{145.00}{23.00}{206}
\emline{145.00}{21.00}{207}{145.00}{13.00}{208}
\emline{145.00}{11.00}{209}{145.00}{4.00}{210}
\emline{155.00}{51.00}{211}{155.00}{43.00}{212}
\emline{155.00}{41.00}{213}{155.00}{33.00}{214}
\emline{155.00}{31.00}{215}{155.00}{23.00}{216}
\emline{155.00}{21.00}{217}{155.00}{13.00}{218}
\emline{155.00}{11.00}{219}{155.00}{4.00}{220}
\emline{165.00}{51.00}{221}{165.00}{43.00}{222}
\emline{165.00}{41.00}{223}{165.00}{33.00}{224}
\emline{165.00}{31.00}{225}{165.00}{23.00}{226}
\emline{165.00}{21.00}{227}{165.00}{13.00}{228}
\emline{165.00}{11.00}{229}{165.00}{4.00}{230}
\emline{175.00}{51.00}{231}{175.00}{43.00}{232}
\emline{175.00}{41.00}{233}{175.00}{33.00}{234}
\emline{175.00}{31.00}{235}{175.00}{23.00}{236}
\emline{175.00}{21.00}{237}{175.00}{13.00}{238}
\emline{175.00}{11.00}{239}{175.00}{4.00}{240}
\emline{185.00}{51.00}{241}{185.00}{43.00}{242}
\emline{185.00}{21.00}{243}{185.00}{13.00}{244}
\emline{185.00}{11.00}{245}{185.00}{4.00}{246}
\emline{196.00}{11.00}{247}{196.00}{4.00}{248}
\emline{205.00}{51.00}{249}{205.00}{43.00}{250}
\emline{205.00}{21.00}{251}{205.00}{13.00}{252}
\emline{205.00}{11.00}{253}{205.00}{4.00}{254}
\emline{215.00}{51.00}{255}{215.00}{43.00}{256}
\emline{215.00}{41.00}{257}{215.00}{33.00}{258}
\emline{215.00}{31.00}{259}{215.00}{23.00}{260}
\emline{215.00}{21.00}{261}{215.00}{13.00}{262}
\emline{215.00}{11.00}{263}{215.00}{4.00}{264}
\emline{206.00}{2.00}{265}{214.00}{2.00}{266}
\end{picture}
\end{figure}

\noindent However, for our purpose, we will utilize an \textquotedblleft
inverse\textquotedblright\ process. Given an induced subgraph $D=(V,E)$ of $%
\Lambda _{n},$ find a partition (tiling) of $\Lambda _{n}$ into copies of $D$%
. Here we mean partitioning of the vertex set of $\Lambda ${\ only, see
Remark }\ref{RR}. Hence we need to find a suitable lattice $L$ that would
allow the required choice of the set $T$, i.e. $[T]=D$. It turns out that to
do so one does not have to find the lattice $L$ explicitly. We will show
that the following construction leads to the desired tiling of ${\Lambda }%
_{n}$. We claim that if there exists an Abelian group $(G,+)$ of order $|V|$
and elements $g_{1},\ldots ,g_{n}$ of $G$ such that the restriction of the
homomorphism $\Phi :\mathbb{Z}^{n}\rightarrow G,$ $\Phi ((a_{1},\ldots
,a_{n}))=a_{1}\Phi (e_{1})+\ldots +a_{n}\Phi (e_{n})=a_{1}g_{1}+\ldots
+a_{n}g_{n},$ to $V$ is a bijection then there exists a partition of $%
\Lambda _{n}$ into copies of $D$. In other words, we need to find an Abelian
group $G$ of order $|V|$ and assign elements $g_{1},\ldots ,g_{n}$ of $G$ to
the vertices $e_{1},\ldots ,e_{n}$ of $\Lambda _{n}$ so that $\Phi
((a_{1},\ldots ,a_{n}))=a_{1}\Phi (e_{1})+\ldots +a_{n}\Phi
(e_{n})=a_{1}g_{1}+\ldots +a_{n}g_{n},$ is a bijection on $V$. It is well
known, that the ker of a homomorphism $\phi :A\rightarrow B$ is a subgroup
of $A.$ Thus, the elements $w$ of $\mathbb{Z}^{n}$ for which $\Phi (w)=0$
form a lattice $L$ in $(\mathbb{Z}^{n},+)$. In addition, $(\mathbb{Z}%
^{n},+)/L=G$ and the vertex set $V$ comprises exactly one element from each
coset of $(\mathbb{Z}^{n},+)/L$; thus we can set $T=V$.\bigskip

\noindent As the above method is the main tool in this paper, we summarize
it as Corollary \ref{C} (to Theorem \ref{BBB} below)

\begin{theorem}
\rm{\cite{Hnew}} \label{BBB}Let $D=(V,E)$ be a subgraph of $\Lambda _{n}$. Then
there is a lattice-like tiling of $\Lambda _{n}$ by copies of $D$ if and
only if there is an Abelian group $(G,\circ )$ and a homomorphism $\Phi :%
\mathbb{Z}^{n}\rightarrow G$, so that the restriction of $\Phi $ to $V$ is a
bijection.
\end{theorem}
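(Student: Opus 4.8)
This is an equivalence, and the plan is to observe that \emph{both} sides are merely two encodings of a single combinatorial fact: that some sublattice $L\le(\mathbb{Z}^{n},+)$ has $V$, the vertex set of $D$, as a complete set of coset representatives. I would prove the two implications separately, using the first isomorphism theorem to pass from a homomorphism to such a sublattice, and the canonical projection $\mathbb{Z}^{n}\to\mathbb{Z}^{n}/L$ to go back. Note that the reasoning never uses which edges $D$ has; only that translation is an automorphism of $\Lambda_{n}$, so the argument is insensitive to whether $D$ is induced.

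For the direction ``homomorphism $\Rightarrow$ tiling'', I would set $L:=\ker\Phi$. Since $\Phi$ is a group homomorphism, $L$ is a subgroup of $(\mathbb{Z}^{n},+)$, hence a lattice; surjectivity of $\Phi|_{V}$ forces $\Phi$ itself onto $G$, so $(\mathbb{Z}^{n},+)/L\cong G$, and injectivity of $\Phi|_{V}$ says that distinct vertices of $V$ lie in distinct cosets of $L$. Thus $V$ meets every coset of $L$ exactly once, which is equivalent ---this is the one small verification--- to $\{V+u:u\in L\}$ being a partition of $\mathbb{Z}^{n}$. Since each translation $x\mapsto x+u$ is an automorphism of $\Lambda_{n}$, the translate $D+u$ is a copy of $D$ for every $u\in L$, so $\{D+u:u\in L\}$ is a tiling of $\Lambda_{n}$ by copies of $D$, and it is lattice-like with witnessing lattice $L$ directly from the definition.

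For the converse, I would start from a lattice-like tiling of $\Lambda_{n}$ by copies of $D$. By the definition of ``lattice-like'' there is a base tile ---a copy of $D$--- and a lattice $L$ such that the tiles are exactly its $L$-translates. Replacing $D$ by this base copy changes neither hypothesis nor conclusion: a copy of $D$ differs from $D$ by an automorphism of $\Lambda_{n}$, i.e.\ by a signed coordinate permutation followed by a translation, and pre-composing any homomorphism $\mathbb{Z}^{n}\to G$ with the linear (signed-permutation) part again yields a homomorphism, which one checks remains a bijection on the relevant vertex set (shifting a bijection's domain and codomain by bijections keeps it bijective). So I may assume the tiles are $\{D+z:z\in L\}$. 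These partition the vertex set, hence so do $\{V+z:z\in L\}$, i.e.\ $V$ is a transversal of $L$ in $(\mathbb{Z}^{n},+)$. Taking $G:=(\mathbb{Z}^{n},+)/L$ and letting $\Phi$ be the canonical projection then gives a homomorphism whose restriction to $V$ is, by the transversal property, a bijection.

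\textbf{Where the difficulty lies.} There is no genuinely hard step; the proof is an exercise in translating between ``lattice-like tiling of $\Lambda_{n}$'', ``sublattice of $\mathbb{Z}^{n}$ having $V$ as a transversal'', and ``homomorphism out of $\mathbb{Z}^{n}$ that is bijective on $V$''. The only points that actually require attention are bookkeeping ones: that the coset condition on $V$ is equivalent to the partition condition on its translates; that a translate of $D$ really is a copy of $D$ inside $\Lambda_{n}$; and the ``pass to the base tile'' reduction in the converse, which is immediate if one reads ``copies of $D$'' as ``translates of $D$'' and otherwise needs only the standard description of $\mathrm{Aut}(\Lambda_{n})$.
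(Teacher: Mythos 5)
Your proof is correct and follows essentially the same route as the paper, which attributes the theorem to \cite{Hnew} and only sketches the sufficiency direction in its introduction: there too one sets $L=\ker \Phi$, observes $(\mathbb{Z}^{n},+)/L\cong G$ with $V$ a transversal of the cosets, and concludes that the $L$-translates of $V$ (hence of $D$) partition $\mathbb{Z}^{n}$, giving a lattice-like tiling. The converse, which the paper defers entirely to the cited reference, you supply via the standard canonical-projection argument together with a harmless reduction to the base tile, so nothing essential is missing or different in approach.
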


\noindent If the restriction of $\Phi $ to $V$ is an injection, then Theorem %
\ref{BBB} (in which $D$ need not be connected) produces a packing of $%
\Lambda _{n}$ by copies of $D$. This idea has been used in several papers,
see e.g. \cite{S,HS,Sw}. The following corollary of Theorem \ref{BBB} is
tailored to our present needs:

\begin{corollary}
\label{C}Let $t\geq 1$ and let $H$ be a subgraph of $\Lambda _{n}$. Further,
let $H^{\ast }$ be an induced supergraph of $H$ such that a vertex $v$
belongs to $H^{\ast }$ if and only if $d(v,H)\leq t$; let $D=(V,E)$ be a
copy of $H^{\ast }$ or a copy of a disjoint union of finitely many copies of
$H^{\ast }$ that contains vertices $O,e_{1},\ldots ,e_{n}$. Then, there is a
$t$-$\mathrm{PDDS}[H]$ if there exists an Abelian group $G$ of order $|V|$
and a homomorphism $\Phi :\mathbb{Z}^{n}\rightarrow G$ such that the
restriction of $\Phi $ to $V$ is a bijection.
\end{corollary}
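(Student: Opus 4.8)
The plan is to manufacture the required $t$-$\mathrm{PDDS}[H]$ as the union of the ``cores'' of a tiling of $\mathbb{Z}^{n}$ by copies of $H^{\ast}$, that tiling being handed to us by Theorem \ref{BBB}. By Remark \ref{RR}, a $t$-$\mathrm{PDDS}[H]$ produces, on passing to closed $t$-neighbourhoods, a tiling of $\mathbb{Z}^{n}$ by copies of $H^{\ast}$; the construction below reverses this passage.

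First I would apply Theorem \ref{BBB} to $D$. The hypothesised Abelian group $G$ of order $|V|$ and homomorphism $\Phi\colon\mathbb{Z}^{n}\rightarrow G$ with $\Phi|_{V}$ a bijection give a lattice $L=\ker\Phi$ (of full rank $n$, since $\mathbb{Z}^{n}/L\cong G$ is finite) together with the lattice-like tiling $\mathcal{T}=\{\,V+u\;;\;u\in L\,\}$ of $\Lambda_{n}$, in which $V$ itself is a complete set of coset representatives. Write $D$ as the disjoint union of $m\ge 1$ blocks, each one a copy of $H^{\ast}$ carrying its distinguished core, a copy of $H$. Since $\mathcal{T}$ partitions $\mathbb{Z}^{n}$ and each $V+u$ is the disjoint union of the $m$ translated blocks, the family $\mathcal{T}^{\ast}$ of all $L$-translates of the $m$ blocks is a partition of $\mathbb{Z}^{n}$ into copies of $H^{\ast}$. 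Let $R$ be the union of the cores of all members of $\mathcal{T}^{\ast}$. Because $t\ge 1$, every vertex of a core has each of its $\Lambda_{n}$-neighbours at distance at most $1\le t$ from that core, hence inside the block of $\mathcal{T}^{\ast}$ to which the core belongs; as $\mathcal{T}^{\ast}$ is a partition, none of those neighbours lies in any other block, in particular not in any other core. Therefore the components of $[R]$ are exactly the cores, each isomorphic to $H$.

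Next I would check the two clauses of the definition of a $t$-$\mathrm{PDDS}$ for $R$. Fix $v\in\mathbb{Z}^{n}$ and let $B\in\mathcal{T}^{\ast}$ be the unique block containing $v$, with core $C$. As $B$ is a translate of the set of vertices at distance $\le t$ from $C$, one has $d(v,C)\le t$; conversely, if $d(v,C')\le t$ for some core $C'$ then $v$ lies in the block of $\mathcal{T}^{\ast}$ whose core is $C'$, so by uniqueness of $B$ we get $C'=C$. Thus $C$ is the unique component of $[R]$ lying within distance $t$ of $v$, which is the first clause. For the second clause one needs a unique $w\in C$ with $d(v,w)=d(v,C)$.

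This last point is the step I expect to be the real crux, since it is precisely what separates a genuine perfect distance-dominating set from a mere tiling by closed $t$-balls, and it is the one place where the shape of $H$, not just of $H^{\ast}$, enters. For every graph to which the corollary is actually applied, namely an axis-parallel Cartesian product of finite paths in $\Lambda_{n}$ (the cube $Q_{3}$ being such a product), it is routine: the vertex of $H$ closest to a given $v$ is obtained by replacing each coordinate of $v$ by the nearest point of the corresponding path-interval, and this ``clamped'' vertex is the unique minimiser of the Lee distance from $v$ to $H$; hence whenever $d(v,C)\le t$ the nearest vertex of $C$ to $v$ is unique, the second clause holds, and $R$ is a $t$-$\mathrm{PDDS}[H]$. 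Finally, $R$ is $L$-periodic, and lattice-like when $m=1$, which is the feature exploited in the applications to the placement problem.
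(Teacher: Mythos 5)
Your proposal is correct and is essentially the construction the paper intends: Corollary \ref{C} is given no separate proof there, being read off from Theorem \ref{BBB} (equivalently, from the kernel lattice $L=\ker \Phi$, for which $V$ is a system of coset representatives, so that the $L$-translates of $V$ partition $\mathbb{Z}^{n}$), and the desired set is the union of the cores of the resulting copies of $H^{\ast}$, exactly as you build it. The one place where you go beyond the paper is the uniqueness clause in the definition of a $t$-$\mathrm{PDDS}$ (a unique nearest vertex $w\in C_{v}$), which the paper passes over in silence; you are right that this is the genuine crux, because it depends on the shape of $H$ and not merely on the tiling by $H^{\ast}$. In fact it cannot be dispensed with: for $H$ induced on $\{O,e_{1},-e_{1},e_{2}\}$ (a T-shape, abstractly $K_{1,3}$) and $t=1$, the group $G=\mathbb{Z}_{12}$ with $\Phi(e_{1})=2$, $\Phi(e_{2})=3$ restricts to a bijection on the $12$ vertices of $H^{\ast}$, yet by the paper's first theorem every component of a $t$-$\mathrm{PDDS}$ in $\Lambda_{2}$ is a product of paths, so no $1$-$\mathrm{PDDS}[K_{1,3}]$ exists; hence the corollary must be read, as you read it, with the implicit restriction to the box-shaped $H$ (products of paths) to which the paper applies it, where your coordinate-clamping argument settles the clause. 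Your remaining verifications --- that for $t\geq 1$ the cores are precisely the components of $[R]$ because every neighbour of a core vertex stays inside that core's tile, and that the first clause follows from the partition into translates of $H^{\ast}$ --- correctly fill in what the paper leaves implicit.
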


\begin{remark}
\noindent We will always choose $D$ to contain vertices $O,e_{1},\ldots
,e_{n}.$ This is not a necessary condition but it will be added to simplify
the exposition. A $t$-$\mathrm{PDDS}[H]$ constructed by means of Corollary %
\ref{C} is lattice-like if $D$ is isomorphic to $H^{\ast }$. If $D$ consists
of more copies of $H^{\ast },$ then we get a lattice tiling of $\mathbb{Z}%
^{n}$ by $D$ but this will not constitute a lattice-like $t$-$\mathrm{PDDS}%
[H].$
\end{remark}

\noindent The rest of the paper is organized as follows. Section 2
contains a proof of Theorem \ref{T}, while a proof of
$\mathrm{\mathbf{(i)}}$ of Theorem \ref{T1} will be given in Section
3. Theorem \ref{T2} will be proved in Section 4. To demonstrate the
strength of the construction, in Section 5 we present a periodic
$1$-PDDS in $\Lambda _{2}$ that is not lattice-like.

\section{Existence of $t$-$\mathrm{PDDS}$\thinspace s}

\label{sec:1}

\noindent In this section we prove Theorem \ref{T}, that is we prove the
existence of $t$-$\mathrm{PDDS}$\thinspace s as described in Conjecture \ref%
{N}. For the sake of completeness we note that a Minkowski's tiling that
proves part \textrm{(v)}\textit{\ }can be obtained by Corollary \ref{C}
using the group $G=\mathbb{Z}_{38}$ and the homomorphism given by $\Phi
(e_{1})=1$, $\Phi (e_{2})=11$ and $\Phi (e_{3})=7$.

\subsection{Part (i)}

\noindent Here we deal with the case when each component of a 1-$\mathrm{PDDS%
}$ is isomorphic to a path $P_{k}$ of length $k-1$, where $k\geq 2$. We
start with the case when each component of a $t$-$\mathrm{PDDS}$ is an
isolated vertex. Each $1$-$\mathrm{PDDS}[P_{1}]$ in $\Lambda _{n}$
corresponds to a perfect 1-error correcting Lee code, $\mathrm{PLC}(n,1)$.
The existence of such codes has been showed independently by several
authors. K\'{a}rteszi asked whether there exists a $\mathrm{PLC}(3,1)$.
Feller, for $n=3,$ and then Korchm\'{a}ros, and Golomb and Welch \cite{GW}
showed that there is a $\mathrm{PLC}(n,1)$ for all $n\geq 2 $. The following
stronger theorem has been proved by Moln\'{a}r \cite{Molnar}.

\begin{theorem}
\label{M}The number of non-congruent lattice-like $\mathrm{PLC}(n,1)$ codes equals
the number of Abelian groups of order $2n+1$.
\end{theorem}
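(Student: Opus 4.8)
\noindent The plan is to exhibit a bijection between congruence classes of lattice-like $\mathrm{PLC}(n,1)$ codes and isomorphism classes of Abelian groups of order $2n+1$, built on Corollary~\ref{C} (equivalently Theorem~\ref{BBB}) applied with $H=P_{1}$. Here $H^{\ast }$ is the radius-$1$ Lee sphere $X=\{O,\pm e_{1},\ldots ,\pm e_{n}\}$, which has $2n+1$ vertices and contains $O,e_{1},\ldots ,e_{n}$, and a $1$-$\mathrm{PDDS}[P_{1}]$ in $\Lambda _{n}$ is precisely a $\mathrm{PLC}(n,1)$ code, i.e.\ a tiling of $\Lambda _{n}$ by copies of $X$. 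I would first record the elementary fact that, since $2n+1$ is odd, an Abelian group $G$ of order $2n+1$ has no element of order $2$; hence whenever $g_{1},\ldots ,g_{n}\in G$ satisfy $\{0,g_{1},\ldots ,g_{n},-g_{1},\ldots ,-g_{n}\}=G$ with all $2n+1$ listed elements distinct, the sets $\{g_{i},-g_{i}\}$ are $n$ genuine two-element sets, and they are exactly the inverse-pairs of $G\setminus \{0\}$.

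\noindent \emph{From codes to groups.} Given a lattice-like $\mathrm{PLC}(n,1)$ code $C$, Theorem~\ref{BBB} provides an Abelian group $G$ of order $|V(X)|=2n+1$ and a homomorphism $\Phi :\mathbb{Z}^{n}\to G$ whose restriction to $X$ is a bijection; the underlying lattice of the tiling is $L=\ker \Phi $ and $C$ is a coset of $L$, so that $G\cong \mathbb{Z}^{n}/L$ is determined by $C$. Since translating $C$ does not change $L$, and any signed coordinate permutation $\sigma $ of $\mathbb{Z}^{n}$ is an automorphism of $(\mathbb{Z}^{n},+)$ and hence induces $\mathbb{Z}^{n}/L\cong \mathbb{Z}^{n}/\sigma (L)$, congruent codes yield isomorphic groups; so $C\mapsto \mathbb{Z}^{n}/L$ descends to a well-defined map on congruence classes.

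\noindent \emph{From groups to codes, and injectivity.} For surjectivity: given an Abelian group $G$ of order $2n+1$, choose one representative $g_{i}$ from each of the $n$ inverse-pairs of $G\setminus \{0\}$ and define $\Phi (e_{i})=g_{i}$; by the first paragraph $\Phi |_{X}$ is a bijection onto $G$, so Corollary~\ref{C} with $H=P_{1}$ produces a lattice-like $\mathrm{PLC}(n,1)$ code with associated group $G$. For injectivity, suppose $C,C'$ give isomorphic groups; composing one quotient map with a group isomorphism lets me assume both codes arise from homomorphisms $\Phi ,\Phi ':\mathbb{Z}^{n}\to G$ onto the \emph{same} $G$ with $\Phi |_{X},\Phi '|_{X}$ bijective, with $L=\ker \Phi $, $L'=\ker \Phi '$. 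Writing $g_{i}=\Phi (e_{i})$, $g'_{i}=\Phi '(e_{i})$, both $\{\{g_{i},-g_{i}\}\}_{i}$ and $\{\{g'_{i},-g'_{i}\}\}_{i}$ equal the set of inverse-pairs of $G\setminus \{0\}$, so there are a permutation $\pi $ of $\{1,\ldots ,n\}$ and signs $\delta _{i}\in \{\pm 1\}$ with $g'_{i}=\delta _{i}g_{\pi (i)}$. The signed permutation $\sigma $ with $\sigma (e_{i})=\delta _{i}e_{\pi (i)}$ then satisfies $\Phi \circ \sigma =\Phi '$ on the generators $e_{1},\ldots ,e_{n}$, hence everywhere, so $L'=\sigma ^{-1}(L)$ and $C,C'$ are congruent.

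\noindent \textbf{Main obstacle.} The existence (surjectivity) half is essentially immediate from Corollary~\ref{C}; the real content is the counting, and the crux is the injectivity step — reconstructing a genuine coordinate signed permutation of $\mathbb{Z}^{n}$ from nothing more than an abstract isomorphism of the two quotient groups. What makes it work is the rigidity forced by $\Phi (X)=G$: the images $g_{i}$ of the coordinate directions must realize precisely the inverse-pairs of $G$, so the only freedom in $\Phi $ is which representative of each pair to take, and in which order — exactly the signed permutations. A secondary point requiring care is the bookkeeping of translations versus signed permutations in the definition of ``congruent,'' so that the constructed correspondence is a bijection rather than merely a surjection.
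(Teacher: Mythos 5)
Your proposal is correct, and its core is the same as the paper's: apply Corollary~\ref{C} (Theorem~\ref{BBB}) with $H=P_{1}$, exploit that an Abelian group of odd order $2n+1$ has no involutions so its nonzero elements split into $n$ inverse-pairs, and send $e_{i}$ to one representative of each pair to make $\Phi$ bijective on the Lee sphere. The difference is in how much of the counting is actually carried out. The paper's proof only exhibits the construction (one code per group) and then asserts, without argument, that non-isomorphic groups generate non-congruent codes; it never addresses the other half of the equality, namely that \emph{every} lattice-like $\mathrm{PLC}(n,1)$ code is congruent to one of the constructed ones. You supply exactly this missing half: via Theorem~\ref{BBB} the code determines $G\cong\mathbb{Z}^{n}/L$, the bijectivity of $\Phi$ on the Lee sphere forces the images $\Phi(e_{i})$ to be a system of representatives of the inverse-pairs, and the only remaining freedom (choice of representatives and their order) is realized by signed coordinate permutations of $\mathbb{Z}^{n}$, which together with translations are Lee isometries; hence codes with isomorphic groups are congruent, and congruent codes have isomorphic groups, giving a genuine bijection between congruence classes and isomorphism classes. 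So your write-up is a completed version of the paper's sketch rather than a different method; what it buys is a proof of the stated \emph{equality} of counts, whereas the paper's text as written only yields that the number of non-congruent codes is at least the number of Abelian groups of order $2n+1$, leaving the reverse inequality to the reader.
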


\noindent To illustrate our method we prove the theorem. The following proof
is shorter than the original one due to Moln\'{a}r. Since in this case $H$
is an isolated vertex, the graph $H^{\ast }$ is of order $2n+1$. We choose a
copy of $D=(V,E)$ of $H^{\ast }$ such that $V=\{\pm e_{i}\,;\,i=1,\ldots
,n\}\cup \{O\}$. Let $G$ be an Abelian group of order $2n+1$. Choose a set $%
K=\{g_{1},\ldots ,g_{n}\}$ formed by $n$ distinct elements of $G$ such that $%
K$ contains exactly one element from each pair $g,g^{-1}$; formally, $g\in K$
if and only if $g^{-1}\notin K$. Since no element of $G$ is of order $2$,
the set $K$ is well defined. Clearly, the restriction of the homomorphism $%
\Phi :\mathbb{Z}^{n}\rightarrow G$ given by $\Phi ((a_{1},\ldots
,a_{n}))=\Phi (e_{1})^{a_{1}}\circ \ldots \circ \Phi (e_{n})^{a_{n}}$ to $V$
is a bijection. Thus, each Abelian group of order $2n+1$ generates a $%
\mathrm{PLC}(n,1)$; this code is a periodic code where $p_{i}\,$s are orders
of elements of $G.$ It is not difficult to check that non-isomorphic groups
generate non-congruent $\mathrm{PLC}(n,1)$ codes.\bigskip

\noindent We note that Szab\'{o} \cite{Sz} constructed, in the case when $%
2n+1$ is not a prime, the first non-lattice-like $\mathrm{PLC}(n,1)$ code.
This code is periodic though. In \cite{HB}, for the same case, the first
non-periodic $\mathrm{PLC}(n,1)$ code has been found. It has also been shown
in \cite{HB} that there is a unique $\mathrm{PLC}(n,1)$ code for $n=2,3$%
.\bigskip

\noindent The existence of $1$-$\mathrm{PDDS}[P_{2}]$ (called total perfect
codes in \cite{KG}) has been proved in \cite{E} in terms of diameter perfect
codes.

\begin{theorem}
A $1$-$\mathrm{PDDS}[P_{k}]$ in $\Lambda _{n}$ exists for each $n\geq 2$ and
each $k\geq 1$.
\end{theorem}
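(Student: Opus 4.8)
The plan is to apply Corollary~\ref{C} with $H=P_k$ realized in $\Lambda_n$ along the first coordinate axis, on the vertices $O,e_1,2e_1,\dots,(k-1)e_1$. First I would describe $H^{\ast}$ explicitly: it consists of the $k$ path vertices $je_1$ with $0\le j\le k-1$, the two further axis vertices $-e_1$ and $ke_1$, and the $2k(n-1)$ lateral vertices $je_1\pm e_i$ with $0\le j\le k-1$ and $2\le i\le n$. A direct count then gives $|V(H^{\ast})|=k+2+2k(n-1)=k(2n-1)+2=:N$. Taking $D$ to be precisely this copy of $H^{\ast}$, we have $O,e_1,\dots,e_n\in V(D)$, so $D$ is an admissible choice in Corollary~\ref{C}, and since $D\cong H^{\ast}$ the resulting $\mathrm{PDDS}$ will automatically be lattice-like.

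Next I would take $G=\mathbb{Z}_N$ and the homomorphism $\Phi:\mathbb{Z}^n\to\mathbb{Z}_N$ determined by $\Phi(e_1)=1$ and $\Phi(e_i)=(i-1)k+1$ for $2\le i\le n$. It remains to check that $\Phi$ restricted to $V(D)$ is a bijection; by the count above it suffices to prove injectivity, i.e.\ that the $N$ images are pairwise distinct. The images split into three blocks: the $k$ path vertices together with $-e_1,ke_1$ map onto $\{0,1,\dots,k\}\cup\{-1\}$; for each fixed $i$ the vertices $je_1+e_i$ ($0\le j\le k-1$) map onto the length-$k$ interval $\{(i-1)k+1,\dots,ik\}$, and as $i$ ranges over $2,\dots,n$ these intervals tile $\{k+1,\dots,kn\}$; finally, for each fixed $i$ the vertices $je_1-e_i$ map onto $\{k(2n-i)+1,\dots,k(2n-i+1)\}$, and as $i$ ranges over $2,\dots,n$ these tile $\{kn+1,\dots,k(2n-1)\}=\{kn+1,\dots,N-2\}$.

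The one point needing care is that no interval above wraps around modulo $N$ and that the three blocks are disjoint. This follows because $n\ge 2$ gives $kn\le k(2n-1)-1<N-1$, so the ``positive'' lateral images stay inside $\{k+1,\dots,kn\}$, the ``negative'' ones inside $\{kn+1,\dots,N-2\}$, and together with $\{0,\dots,k\}\cup\{N-1\}$ they partition $\mathbb{Z}_N$. Hence $\Phi|_{V(D)}$ is a bijection, and Corollary~\ref{C} produces a lattice-like $1$-$\mathrm{PDDS}[P_k]$ in $\Lambda_n$ for every $n\ge 2$, $k\ge 1$. I do not anticipate a genuine obstacle: the whole argument reduces to the elementary fact that $2(n-1)$ suitably translated copies of a $k$-term progression of common difference $1$ tile an interval of length $2k(n-1)$, and the only mildly delicate part is the bookkeeping that the reflected ($-e_i$) images land exactly in the residues $kn+1,\dots,N-2$ with no overlap and no wraparound.
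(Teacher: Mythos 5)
Your proposal is correct and follows essentially the same route as the paper: you take $D=H^{\ast}$ with the path laid along the $e_1$-axis, the cyclic group $G=\mathbb{Z}_{2nk-k+2}$, and the same assignment $\Phi(e_i)=(i-1)k+1$ (so $\Phi(e_1)=1$), and verify bijectivity by the identical interval-tiling computation of $\Phi(V)$. The only difference is cosmetic: you spell out the modular arithmetic for the $-e_i$ images a bit more carefully than the paper's aggregate formula does.
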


\begin{proof}
We will construct the desired $\mathrm{PDDS}$ by applying Corollary \ref{C}.
Set $H=P_{k}$. We place the graph $D=(V,E)$ that is isomorphic to $H^{\ast }$
in such a way that $V$ comprises the vertices $O$, $e_{1}$, $2e_{1}$, $%
\ldots $, $(k-1)e_{1}$ of the path $P_{k}$ and their $2nk-2k+2$ neighbors,
namely $-e_{1}$, $ke_{1}$ and $\pm e_{i}$, $e_{1}\pm e_{i}$, $\ldots $, $%
(k-1)e_{1}\pm e_{i}$ for $i=2,\ldots ,n$. Thus, $|V|=2nk-k+2$ and $D$
contains the vertices $O$ and $e_{i}$, for $i=1,\ldots ,n$, as required by
Corollary \ref{C}. We choose $G=\mathbb{Z}_{2nk-k+2}$. The element $g_{i}$
of $G$ that is assigned to the vertex $e_{i}$, for $i=1,\ldots ,n$, is $%
g_{i}=(i-1)k+1$. To finish the proof, we need to show that the restriction
of the mapping $\Phi ((a_{1},\ldots ,a_{n}))=\Phi (e_{1})^{a_{1}}\circ
\ldots \circ \Phi (e_{n})^{a_{n}}=a_{1}g_{1}+\ldots +a_{n}g_{n}$ to the set $%
V$ is a bijection. To see this, it suffices to note that $\Phi
\{O,e_{1},2e_{1},\ldots ,(k-1)e_{1}\}=\{0,1,\ldots ,k-1\}$, $\Phi
\{-e_{1},ke_{1}\}=\{k,2nk-k+1\},$ and $\Phi \{\pm e_{i},e_{1}\pm
e_{i},\ldots ,(k-1)e_{1}\pm e_{i}\}=\{\pm (i-1)k+1,\pm (i-1)k+2,\ldots ,\pm
(i-1)k+k-1,\pm ik\}$. In aggregate, $\Phi (V)=\{0,\ldots ,k\}\cup $ $%
\bigcup\limits_{i=2}^{n}\{(i-1)k+1,\ldots ,ik\}\cup
\bigcup\limits_{i=2}^{n}\{(2n-i)k+1,\ldots ,(2n-i-1)k+2\}\cup
\{2nk-k+1\}=\{0,\ldots ,2nk-k+1\}=G$. For the reader convenience we
illustrate the proof by means of three small examples for $k=3$:

$$\begin{array}{||l||ccccc|ccccc|ccccc||}\hline
&&<&e_1,e_2&>&&&<&e_1,e_3&>&&&<&e_1,e_4&>&\\\hline\hline
_{n=2}&&_7&_8&_9&&&&&&&&&&&\\
_{\mathbb{Z}_{11}}&^{10}&^\mathbf0_4&^\mathbf1_5&^\mathbf2_6&^3&&&&&&&&&&\\\hline
_{n=3}&&_{13}&_{14}&_{15}&&&_{10}&_{11}&_{12}&&&&&&\\
_{\mathbb{Z}_{17}}&^{16}&^{\,\,\,\mathbf0}_{\,\,\,4}&^{\,\,\,\mathbf1}_{\,\,\,5}&^{\,\,\,\mathbf2}_{\,\,\,6}&^{\,\,\,3}&^{16}&^{\,\,\,\mathbf0}_{\,\,\,7}&^{\,\,\,\mathbf1}_{\,\,\,8}&^{\,\,\,\mathbf2}_{\,\,\,9}&^3&&&&&\\\hline
_{n=4}&&_{19}&_{20}&_{21}&&&_{16}&_{17}&_{18}&&&_{13}&_{14}&_{15}&\\
_{\mathbb{Z}_{23}}&^{22}&^{\,\,\,\mathbf0}_{\,\,\,4}&^{\,\,\,\mathbf1}_{\,\,\,5}&^{\,\,\,\mathbf2}_{\,\,\,6}&^{\,\,\,3}&^{22}&^{\,\,\,\mathbf0}_{\,\,\,7}&^{\,\,\,\mathbf1}_{\,\,\,8}&^{\,\,\,\mathbf2}_{\,\,\,9}&^{\,\,\,3}&^{22}&^{\,\,\,\mathbf0}_{10}&^{\,\,\,\mathbf1}_{11}&^{\,\,\,\mathbf2}_{12}&^3\\\hline
\end{array}$$
\end{proof}

\subsection{Part (ii)}

\noindent In this subsection we prove the existence of a $t$-$\mathrm{PDDS}$
in $\Lambda _{2}$ whose components are all isomorphic to a path $P_{k}$,
where $t>1$ and $k>1$.


\begin{theorem}
A $t$-$\mathrm{PDDS}[P_{k}]$ in $\Lambda _{2}$ exists for each $t\geq 1$ and
$k\geq 1$.
\end{theorem}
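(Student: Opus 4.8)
The plan is to apply Corollary~\ref{C} with $n=2$, taking $H=P_k$ realised as the path on the vertices $O,e_1,2e_1,\dots,(k-1)e_1$ lying along the first axis, and $D=H^{\ast}$. Since $\Lambda_2$ carries the $\ell_1$-metric, $H^{\ast}$ is the ``elongated Lee sphere''
$$V=\bigl\{(x,y)\in\mathbb{Z}^{2}\ :\ |y|\le t,\ -(t-|y|)\le x\le (k-1)+(t-|y|)\bigr\},$$
which contains $O$, $e_1$ and $e_2$ for every $t\ge 1$, $k\ge 1$ (so the hypothesis of Corollary~\ref{C} is satisfied), and a row-by-row count gives
$$|V|=\sum_{j=-t}^{t}\bigl(k+2(t-|j|)\bigr)=(2t+1)k+2t^{2}=:N.$$
I would then take $G=\mathbb{Z}_{N}$ and the homomorphism $\Phi\colon\mathbb{Z}^{2}\to\mathbb{Z}_{N}$ determined by $\Phi(e_1)=1$ and $\Phi(e_2)=2(t+k)-1$. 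For $k=1$ this is precisely the classical lattice presentation of $\mathrm{PLC}(2,t)$, in which $\Phi(e_2)=2t+1$, and it is consistent with the $13$-element example in Section~\ref{intro}; the value $2(t+k)-1$ is the natural ``stretched'' analogue of $2t+1$.

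By Corollary~\ref{C} everything reduces to showing that the restriction of $\Phi$ to $V$ is a bijection, and since $|V|=N=|G|$ it is enough to prove surjectivity. The key observation is that, because $\Phi(e_1)=1$, the image of the $j$-th row $\{(x,j)\in V\}$ is an arc of $\ell_j:=k+2(t-|j|)$ \emph{consecutive} residues starting at $a_j:=j\,\Phi(e_2)-(t-|j|)\pmod N$; as $\ell_j\le k+2t<N$ no arc overlaps itself. One then checks that these $2t+1$ arcs tile $\mathbb{Z}_{N}$ by reading them off in the cyclic order row~$0$, row~$-t$, row~$1$, row~$-(t-1)$, \dots, row~$t-1$, row~$-1$, row~$t$ (row~$i$ in position~$2i$, row~$-(t-i)$ in position~$2i+1$) and verifying the telescoping identities
$$a_i=2i(t+k)-t,\qquad a_{-(t-i)}=a_i+\ell_i,\qquad a_{i+1}=a_{-(t-i)}+\ell_{-(t-i)}\qquad(0\le i\le t),$$
together with the wrap-around relation $a_t+\ell_t\equiv a_0\pmod N$. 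After substituting $\Phi(e_2)=2(t+k)-1$ and using $\ell_i+\ell_{-(t-i)}=2(t+k)$, each of these identities collapses to the single equality $N=2t^{2}+2tk+k$; this modular bookkeeping is the only computation involved, and it is routine.

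With the bijection established, Corollary~\ref{C} yields a $t$-$\mathrm{PDDS}[P_k]$ in $\Lambda_2$, which is moreover lattice-like since $D\simeq H^{\ast}$. I do not expect a genuine obstacle: the only subtlety is to guess the correct constant $\Phi(e_2)$ and the correct cyclic order of the rows (with $\Phi(e_2)=2t+1$, for instance, the arcs already overlap once $k\ge 3$), after which the verification is mechanical. One could instead prove injectivity directly, by noting that equal images force $x_1-x_2\equiv(y_2-y_1)\,\Phi(e_2)\pmod N$ and then checking, for each $\delta=y_2-y_1\in\{1,\dots,2t\}$, that $\delta\,\Phi(e_2)\bmod N$ lies outside the range of possible values of $x_1-x_2$ dictated by $|y_i|\le t$; but the arc-tiling argument is the cleaner one to write down.
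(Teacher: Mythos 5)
Your proof is correct and follows essentially the same route as the paper: both apply Corollary~\ref{C} with a cyclic group, give weight $1$ to one coordinate direction so that the rows of $H^{\ast}$ map to arcs of consecutive residues, and check that these arcs concatenate, in an interleaved ordering of the rows, to tile the group. The paper's main computation takes $D$ to be two copies of $H^{\ast}$ with $G=\mathbb{Z}_{4t^{2}+4tk+2k}$, $g_{1}=2t+2k-1$, $g_{2}=1$, but the single-copy variant it sketches at the end of the same proof ($D=H^{\ast}$, $G=\mathbb{Z}_{2t^{2}+2tk+k}$, $g_{1}=1$, $g_{2}=2t+1$) coincides with your construction up to a swap of the two coordinates and multiplication of the homomorphism by the unit $2t+1$ (note $(2t+1)(2t+2k-1)\equiv -1 \pmod{2t^{2}+2tk+k}$), so no genuinely different idea is involved.
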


\begin{proof}
We provide a detailed proof as we use the same approach to prove this and
the next theorem. Let $H$ be a path $P_{k}$ on vertices $\{O,e_{2},2e_{2},%
\ldots ,(k-1)e_{2}\}.$ Then $H^{\ast }$ consists of vertices of $H$ plus all
vertices at distance at most $t$ from $H;$ hence $\left\vert H^{\ast
}\right\vert =$ $2t^{2}+2tk+k.$ Clearly, $xe_{1}+ye_{2}\in H^{\ast }$ iff
\begin{eqnarray*}
-t &\leq &x<0\mbox{ and }-x-t\leq y\leq x+t+k-1 \\
&&\mbox{or} \\
0 &\leq &x\leq t\mbox{ and }x-t\leq y\leq -x+t+k-1
\end{eqnarray*}

\noindent We will construct the desired $\mathrm{PDDS}$ by applying
Corollary \ref{C} so that the graph $D=(V,E)$ consists of two disjoint
copies of $H^{\ast };$ a copy described above and a translation of this copy
by $(t,t+k).$ Thus, the other copy of $H^{\ast }$ is given by%
\begin{eqnarray*}
0 &\leq &x\leq t\mbox{ and }-x+t+k\leq x+t+2k-1 \\
&&\mbox{or} \\
t+1 &\leq &x\leq 2t\mbox{ and }x-t+k\leq y\leq -x+3t+2k-1
\end{eqnarray*}

\noindent In aggregate, $\left\vert V\right\vert =4t^{2}+4tk+2k,$ and a
vertex $xe_{1}+ye_{2}\in V$ iff%
\begin{eqnarray}
-t &\leq &x<0\mbox{ and }-x-t\leq y\leq x+t+k-1  \label{NN} \\
&&\mbox{either}  \\
0 &\leq &x\leq t\mbox{ and }x-t\leq y\leq x+t+2k-1   \\
&&\mbox{or}  \\
t+1 &\leq &x\leq 2t\mbox{ and }x-t+k\leq y\leq -x+3t+2k-1
\end{eqnarray}

\noindent To construct the desired lattice-like PDDS we choose
the cyclic group $G=$

\noindent$\mathbb{Z}_{4t^{2}+4tk+2k}$ and set $g_{1}=2t+2k-1$, and $g_{2}=1.$
Hence $\Phi (xe_{1}+ye_{2})=((2t+2k-1)x+y)$ mod $(4t^{2}+4tk+2k).$\bigskip

\noindent For fixed $x,$ by (\ref{NN}), the set $I_{x}=\{y;$ $%
xe_{1}+ye_{2}\in V\}$ is an interval. Therefore, as $g_{2}=1,\Phi (I_{x})$
comprises $\left\vert I_{x}\right\vert $ consecutive elements of the group $%
G,$ where we take that $0$ follows the element $4t^{2}+4tk+2k-1.$ To see
that the mapping $\Phi $ is a bijection on $V$ it is sufficient to show that
the intervals $I_{x},-t\leq x\leq 2t$ can be ordered in such a way that if $%
I_{z}$ immediately precedes $I_{v}$in this order then $\Phi (\min
I_{v})=\Phi (\max I_{z})+1$. An order with this property is given implicitly
below.\bigskip

\noindent $\mathrm{\mathbf{(i)}}$ for each $-t\leq x\leq 0,$ it is $\Phi
(\min I_{x})=\Phi (\max I_{x+2t})+1;$

\noindent $\mathrm{\mathbf{(ii)}}$ for each $1\leq x\leq t,$ it is $\Phi
(\min I_{x})=\Phi (\max I_{x-1})+1;$

\noindent $\mathrm{\mathbf{(iii)}}$ for each $t+1\leq x\leq 2t,$ it is $\Phi
(\min I_{x})=\Phi (\max I_{-2t-1+x})+1.$\bigskip

\noindent It is easy to prove $\mathrm{\mathbf{(i)}}$-$\mathrm{\mathbf{(iii)}%
}$ by using (\ref{NN}) and simple calculations. For the readers convenience
we work out details of $\mathrm{\mathbf{(i)}}$. If $-t\leq x\leq 0,$ then,
from the first line of (\ref{NN}), $\Phi (\min I_{x})=\Phi
(xe_{1}+(-x-t)e_{2})=(x(2t+2k-1)+(-x-t))$ mod $(4t^{2}+4tk+2k)=$

\noindent$(2(t+k-1)x-t)$ mod $(4t^{2}+4tk+2k).$ \bigskip

\noindent For $-t+1\leq x\leq 0,$ by the third line of (\ref{NN}), we get

\noindent$\Phi (\max I_{x+2t})=\Phi ((x+2t)e_{1}+(-(x+2t)+3t+2k-1)e_{2})=$

\noindent$((x+2t)((2t+2k-1)+(-(x+2t)+3t+2k-1))$ mod $(4t^{2}+4tk+2k)=$

\noindent$([2(t+k-1)x-t]+[4t^{2}+4tk+2k]-1)$ mod $%
(4t^{2}+4tk+2k)=([2(t+k-1)x-t]-1)$ mod $(4t^{2}+4tk+2k)=$ $\Phi (\min
I_{x})-1.$\bigskip

\noindent Finally, for $x=-t,$ by the second line of (\ref{NN}), $\Phi (\max
I_{x+2t})=\Phi ((x+2t)e_{1}+(x+2t+t+2k-1)e_{2})=$

\noindent$(t(2t+2k-1)+(2t+2k-1))$ mod $(4t^{2}+4tk+2k)=$

\noindent $([2(t+k-1)(-t)-t+[4t^{2}+4tk+2k]-1)$ mod $%
(4t^{2}+4tk+2k)=(2(t+k-1)(-t)-t)$ mod $(4t^{2}+4tk+2k)=$ $\Phi (-t)-1.$ The
proof is complete.\bigskip

\noindent For the reader's convenience, we provide two small examples for $%
t=2,3$ and $k=3.$

$$
\begin{array}{||lllllll||llllllllllll||}
 & _{36} & _{45}^{44} & _{\,\,8} &  &  &  &  &
 &  & _{65} & _{76}^{75} & _{\,\,9} &  &  &
&  &  &  \\
_{29}^{28} & _{38}^{37} & _{\,\,\mathbf{1}}^{\,\,\mathbf{0}} &
_{10}^{\,\,9} & _{19}^{18} &  &  &  & _{45} &
_{56}^{55} & _{67}^{66} & _{\,\,\mathbf{0}}^{77} & _{11}^{10}
& _{22}^{21} & _{33} &  &  &  &  \\
^{30} & _{40}^{39} & _{\,\,3}^{\,\,\mathbf{2}} & _{12}^{11} &
_{21}^{20} &  &  &  & _{47}^{46} & _{58}^{57} &
_{69}^{68} & _{\,\,\mathbf{2}}^{\,\,\mathbf{1}} & _{13}^{12} &
_{24}^{23} & _{35}^{34} &  &  &  &  \\
 &  & _{\,\,5}^{\,\,4} & _{14}^{13} & _{\mathbf{23}}^{22}
& _{32}^{31} & _{41} &  &  & ^{59} & _{71}^{70} &
_{\,\,4}^{\,\,3} & _{15}^{14} & _{26}^{25} & _{37}^{36} &
_{48} &  &  &  \\
 &  & _{\,\,7}^{\,\,6} & _{16}^{15} & _{\mathbf{25}}^{%
\mathbf{24}} & _{34}^{33} & _{43}^{42} &  &  &  &
& _{\,\,6}^{\,\,5} & _{17}^{16} & _{28}^{27} & _{\mathbf{39}%
}^{38} & _{50}^{49} & _{61}^{60} & _{72} &  \\
 &  &  & ^{17} & _{27}^{26} & ^{35} &  &  &
 &  &  & _{\,\,8}^{\,\,7} & _{19}^{18} & _{30}^{29}
& _{\mathbf{41}}^{\mathbf{40}} & _{52}^{51} & _{63}^{62} &
_{74}^{73} &  \\
 &  &  &  &  &  &  &  &  &  &  &
 & ^{20} & _{32}^{31} & _{43}^{42} & _{54}^{53} &
^{64} &  &  \\
 &  &  &  &  &  &  &  &  &  &  &
 &  &  & ^{44} &  &  &  &  \\
&  &  &  &  &  &  &  &  &  &  &  &  &  &  &  &  &  &
\end{array}%
$$

\noindent To prove the statement of this Theorem 8 just with $D=(V,E)=H^*$,
notice that now $|V|=2t^{2}+2tk+k$ and choose the cyclic group $G=\mathbb{Z}%
_{2t^{2}+2tk+k}$, setting $g_{1}=1$ and $g_2=2t+1.$ Hence $\Phi
(xe_{1}+ye_{2})=(x+(t+1)y)$ mod $(2t^{2}+2tk+k)$ and $\Phi$ maps $V$
bijectively onto $G$ by sending the successive intersections of $V$ with the
lines $e_2=0,\ldots,r,-t,r+1,-t+1,r+2,\ldots,-1,r+t$ from left to right onto
$-tg_1,\ldots,-g_1,O,\ldots,(|V|-t)g_1$. For the reader's convenience, we
provide two small examples for $t=2,3$ and $k=3.$
$$\begin{array}{||rrrrr||rrrrrrr||}
^{  }_{  }&^{  }_{17}&^{13}_{18}&^{  }_{19}&^{  }_{  }&^{  }_{  }&^{  }_{  }&^{  }_{24}&^{18}_{25}&^{  }_{26}&^{  }_{  }&^{  }_{  }\\
^{21}_{ 3}&^{22}_{ 4}&^{\bf0}_{\bf5}&^{ 1}_{ 6}&^{ 2}_{ 7}&^{  }_{36}&^{30}_{37}&^{31}_{38}&^{32}_{\,\,\,\bf0}&^{33}_{\,\,\,1}&^{34}_{\,\,\,2}&^{  }_{\,\,\,3}\\
^{ 8}_{  }&^{ 9}_{14}&^{\bf{10}}_{15}&^{11}_{16}&^{12}_{  }&^{ 4}_{11}&^{ 5}_{12}&^{ 6}_{13}&^{\,\,\,\bf7}_{\bf{14}}&^{\,\,\,8}_{15}&^{\,\,\,9}_{16}&^{10}_{17}\\
^{  }_{  }&^{  }_{  }&^{20}_{  }&^{  }_{  }&^{  }_{  }&^{  }_{  }&^{19}_{  }&^{20}_{27}&^{21}_{28}&^{22}_{29}&^{23}_{  }&^{  }_{  }\\
          &          &          &          &          &^{  }_{  }&^{  }_{  }&^{  }_{  }&^{35}_{  }&^{  }_{  }&^{  }_{  }&^{  }_{  }\\
\end{array}$$
\end{proof}

\subsection{Part (iii)}

\noindent Here we discuss the existence of a $t$-$\mathrm{PDDS}$ in $\Lambda
_{2}$ whose components are isomorphic to the Cartesian product of two finite
paths. The case $k=1$ of the following theorem, using a different technique,
has been also proved in \cite{E} in terms of diameter perfect codes.

\begin{theorem}
\label{abc}A $t$-$\mathrm{PDDS}$ in $\Lambda _{2}$ whose components are
isomorphic to $P_{2}\square P_{k}$ exists for each $t\geq 1$ and $k\geq 1$.
\end{theorem}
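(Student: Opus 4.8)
The plan is to apply Corollary~\ref{C} in the same spirit as the proofs of the previous two theorems, but now with a two‑dimensional "body" $H^{\ast}$ arising from $H=P_{2}\square P_{k}$. First I would place $H$ in $\Lambda_{2}$ as the rectangle on the vertices $\{x e_{1}+y e_{2}\,;\,0\le x\le 1,\ 0\le y\le k-1\}$, and then describe $H^{\ast}$ explicitly as the set of all $x e_{1}+y e_{2}$ with $d(x e_{1}+y e_{2},H)\le t$; this is a "fattened rectangle" (a rectangle with the four corners rounded off by Lee balls of radius $t$), and a short case split on the value of $x$ shows that for each fixed $x$ in the relevant range the fiber $I_{x}=\{y\,;\,x e_{1}+y e_{2}\in H^{\ast}\}$ is an \emph{integer interval}, whose endpoints are explicit piecewise‑linear functions of $x$. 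A direct count gives $|H^{\ast}| = 2t^{2}+2t(k+1)+2k =: N$ (a Lee sphere of radius $t$ has $2t^{2}+2t+1$ points, and one glues together the contributions of the $2k$ "core" columns/rows appropriately).

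Next, as in Theorem~8, I would take $G=\mathbb{Z}_{N}$ (or, where this fails, a disjoint union $D$ of a few translated copies of $H^{\ast}$, with $G$ cyclic of the correspondingly larger order — the Remark after Corollary~\ref{C} allows this and it still yields a $t$-$\mathrm{PDDS}[P_{2}\square P_{k}]$, just not a lattice‑like one). Set $g_{2}=1$, so that $\Phi$ restricted to each column $I_{x}$ is a block of $|I_{x}|$ consecutive residues; the remaining freedom is the choice of $g_{1}=\Phi(e_{1})$, which controls how these blocks are shifted relative to one another. The bijectivity of $\Phi|_{V}$ is then equivalent to the purely combinatorial statement that the columns $I_{x}$ can be linearly ordered so that each block starts exactly one past where the previous one ends, i.e. $\Phi(\min I_{v}) = \Phi(\max I_{z})+1$ whenever $I_{z}$ immediately precedes $I_{v}$. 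I would exhibit such an order implicitly by three or four families of identities of the form "$\Phi(\min I_{x}) = \Phi(\max I_{\sigma(x)})+1$", exactly as in items $\mathrm{\mathbf{(i)}}$–$\mathrm{\mathbf{(iii)}}$ of the proof of Theorem~8, where $\sigma$ is the appropriate shift ($x\mapsto x+2t$ on the lower corner region, $x\mapsto x-1$ on the core, etc.), and verify each identity by substituting the piecewise‑linear endpoint formulas and reducing mod $N$. I would also include one or two small worked examples (say $t=2$ and $k=2,3$) displayed as an array, matching the style of the earlier proofs.

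The main obstacle is not conceptual but bookkeeping: finding the correct value of $g_{1}$ and the correct stitching permutation $\sigma$ of the columns so that the blocks tile $\mathbb{Z}_{N}$ with no gaps or overlaps. Because $H^{\ast}$ now has a genuinely two‑dimensional silhouette with four rounded corners (rather than the single‑staircase shape of the $P_{k}$ case), the column lengths $|I_{x}|$ are not monotone — they increase, plateau, then decrease — so the natural left‑to‑right order fails and one must interleave the ascending‑corner columns with the descending‑corner columns, much as the $x\mapsto x+2t$ pairing did in Theorem~8. The risk is that for certain residues of $k$ or $t$ modulo small numbers no single cyclic group works with a connected $D=H^{\ast}$; in that case I fall back to $D=$ (two or three translated copies of $H^{\ast}$), which enlarges $N$ and gives enough room to stitch, at the cost of the construction no longer being lattice‑like — still perfectly acceptable for the statement of the theorem, which only asserts existence of a $t$-$\mathrm{PDDS}[P_{2}\square P_{k}]$. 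Verifying the endpoint identities themselves is then routine modular arithmetic of the kind already carried out in detail for Theorem~8, so I would state the identities, do one representative case in full, and leave the rest to the reader.
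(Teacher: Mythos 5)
Your plan reproduces the general framework (Corollary \ref{C}, a fattened copy $H^{\ast}$ of $P_{2}\square P_{k}$, a homomorphism whose restriction to the tile is a bijection), but it stops exactly where the mathematical content of this theorem begins. The proof has to exhibit a concrete group $G$, concrete values of $g_{1},g_{2}$, and a verification that $\Phi$ is a bijection on $V$; you supply none of these. Worse, you yourself flag that you do not know whether your primary route -- a single cyclic group $\mathbb{Z}_{N}$ with $g_{2}=1$, $N=2(t+1)(t+k)$, and a stitching permutation of the non-monotone column blocks -- works for all $t$ and $k$, and your fallback (``two or three translated copies of $H^{\ast}$, which enlarges $N$ and gives enough room to stitch'') is asserted, not proved: enlarging the group does not by itself produce an admissible block arrangement, and no choice of $g_{1}$ or ordering is given there either. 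So as it stands the argument establishes existence for no pair $(t,k)$ beyond what one checks by hand in examples; that is a genuine gap, not bookkeeping.

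It is worth noting how the paper escapes precisely the difficulty you identify. Instead of forcing the column blocks end-to-end inside one cyclic group, it takes $D$ to be \emph{two} copies of $H^{\ast}$ (the second translated by $(t+1,t+k)$) and the non-cyclic group $G=\mathbb{Z}_{2t+2k}\times\mathbb{Z}_{2t+2}$ with $g_{1}=(0,1)$, $g_{2}=(1,0)$: then every column $I_{x}$ with $1\leq x\leq t+1$ is an interval of length exactly $2t+2k$ and fills the coset $\mathbb{Z}_{2t+2k}\times\{x\}$ by itself, while for $t+2\leq x\leq 2t+2$ the union $I_{x}\cup I_{x-(2t+2)}$ is again an interval of length $2t+2k$ and fills the corresponding coset, because $x\equiv x-(2t+2)$ in the second factor. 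No stitching order and no case analysis on the shape of the corners is needed. For the lattice-like version with $D=H^{\ast}$ alone the paper sets $m=\gcd(t+1,t+k)$ and uses $\mathbb{Z}_{2(t+1)(t+k)}$ with $g_{1}=t+1$, $g_{2}=t+k$ only when $m=1$, switching to $\mathbb{Z}_{m}\times\mathbb{Z}_{2(t+1)(t+k)/m}$ otherwise -- which is strong evidence that your insistence on a cyclic group with $g_{2}=1$ cannot be pushed through uniformly in $t,k$ and would at minimum require a separate argument or a case split that your proposal does not contain.
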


\begin{proof}
We prove this theorem using the same approach as in Theorem 8 and indicate
at the end how to obtain the same result just with $D=H^*$. Let $H $ be the
graph $P_{2}\square P_{k}$ on vertices $\{re_{2},e_{1}+re_{2};0\leq r\leq
k-1\}.$ Then the graph $H^{\ast }$ consisting of $H$ and all vertices at
distance at most $t$ from $H$ is of order $2t^{2}+2tk+2t+2k$. It is easy to
see that $xe_{1}+ye_{2}\in H^{\ast }$ iff%
\begin{eqnarray*}
-t &\leq &x\leq 0\mbox{ and }-x-t\leq y\leq x+k+t-1 \\
&&\mbox{or} \\
1 &\leq &x\leq t+1\mbox{ and }x-t-1\leq y\leq -x+k+t
\end{eqnarray*}

\noindent We will construct the desired $\mathrm{PDDS}$ by applying
Corollary \ref{C} to the graph $D=(V,E)$ consisting of two disjoint copies
of $H^{\ast };$ a copy described above and a translation of this copy by $%
(t+1,t+k).$ Thus, the other copy of $H^{\ast }$ is given by $%
xe_{1}+ye_{2}\in H^{\ast }$ iff
\begin{eqnarray*}
1 &\leq &x\leq t+1\mbox{ and }-x+t+k+1\leq y\leq x+2k+t-2 \\
&&\mbox{or} \\
t+2 &\leq &x\leq 2t+2\mbox{ and }x+k-t-2\leq y\leq -x+2k+3t+1
\end{eqnarray*}

\noindent In aggregate, a vertex $xe_{1}+ye_{2}\in V$ iff

\begin{eqnarray}
-t &\leq &x\leq 0\mbox{ and }-x-t\leq y\leq x+k+t-1  \label{NNN} \\
&&\mbox{or}   \\
1 &\leq &x\leq t+1\mbox{ and }x-t-1\leq y\leq x+2k+t-1  \\
&&\mbox{or}   \\
t+2 &\leq &x\leq 2t+2\mbox{ and }x+k-t-2\leq y\leq -x+2k+3t+1
\end{eqnarray}

\noindent To construct the desired lattice-like PDDS we choose
the Abelian group $G=\mathbb{Z}_{2t+2k}\times $ $\mathbb{Z}_{2t+2}$ and set $%
g_{1}=(0,1)$, and $g_{2}=(1,0).$ Hence $\Phi (xe_{1}+ye_{2})=(x$ mod $%
(2t+2k),y$ mod $(2t+2)).$ To finish the proof we show that a restriction of $%
\Phi $ to $V$ is a bijection. Let, as above, $I_{x}=\{y;xe_{1}+ye_{2}\in
V\}. $ Then, for all $1\leq x\leq t+1,$ $\Phi (I_{x})=\mathbb{Z}%
_{2t+2k}\times \{x\},$ as $g_{2}=(1,0)$ and $I_{x}$ is an interval of length
$2t+2k.$

\noindent Now, for all $t+2\leq x\leq 2t+2,$ it suffices to realize that

\noindent $I_{x}\cup I_{x-(2t+2)}=[(-x+(2t+2)-t,x-(2t+2)+k+t-1]\cup \lbrack
x+k-t-2,-x+2k+3t+1]=$

\noindent $\lbrack -x+t+2,x-t+k-3]\cup \lbrack
x-t+k-2,-x+2k+3t+1]=[-x+t+2,-x+2k+3t+1].$

\noindent Thus, $I_{x}\cup I_{x-(2t+2)}$ is an interval of length $2t+2k$ as
well$.$ This in turn implies, as $x\equiv x-(2t+2)$ mod $(2t+2),$ that $\Phi
(I_{x}\cup I_{x-(2t+2)})=\mathbb{Z}_{2t+2k}\times \{x\}$ also in this case.
The proof is complete. However, after a pair of examples, we say how to make
out with $D=H^*$. \bigskip

\noindent For the reader's convenience, we illustrate the proof with some
small examples. For $t=2$ and $k=1,2$, we take $G=\mathbb{Z}_{4+2k}\times%
\mathbb{Z}_6$ and $\Phi$ assigned as follows:
$$
\begin{array}{||ccccccccc||ccccccccc||}
 & _{5,5} & _{5,0}^{4,0} & _{5,1}^{4,1} & _{5,2} &  &
 &  &  &  & _{7,5} & _{7,0}^{6,0} & _{7,1}^{6,1}
& _{7,2} &  &  &  &  \\
^{0,4} & _{1,5}^{0,5} & _{1,0}^{\mathbf{0,0}} & _{1,1}^{\mathbf{%
0,1}} & _{1,2}^{0,2} & _{1,3}^{0,3} & _{1,4} &  &  &
_{1,4}^{0,4} & _{1,5}^{0,5} & _{\mathbf{1,0}}^{\mathbf{0,0}} &
_{\mathbf{1,1}}^{\mathbf{0,1}} & _{1,2}^{0,2} & _{1,3}^{0,3} &
 &  &  \\
 &  & ^{2,0} & _{3,1}^{2,1} & _{3,2}^{2,2} & _{\mathbf{%
3,3}}^{2,3} & _{\mathbf{3,4}}^{2,4} & _{3,5}^{2,5} & _{3,6} &
 & ^{2,5} & _{3,0}^{2,0} & _{3,1}^{2,1} & _{3,2}^{2,2} &
_{3,3}^{2,3} & _{3,4}^{2,4} & _{3,5} &  \\
 &  &  &  & ^{4,2} & _{5,3}^{4,3} & _{5,4}^{4,4} &
^{4,5} &  &  &  &  & _{5,1}^{4,1} & _{5,2}^{4,2}
& _{\mathbf{5,3}}^{\mathbf{4,3}} & _{\mathbf{5,4}}^{\mathbf{4,4}} &
_{5,5}^{4,5} & _{5,6}^{4,6} \\
 &  &  &  &  &  &  &  &  &  &  &
 &  & ^{6,2} & _{7,3}^{6,3} & _{7,4}^{6,4} & ^{6,5}
&  \\
&  &  &  &  &  &  &  &  &  &  &  &  &  &  &  &  &
\end{array}%
$$

\noindent To prove the statement of this Theorem 9 just with $D=(V,e)=H^*$,
note that $|V|=2(t+1)(t+k)$ and denote $m=\gcd(t+1,t+k)$. Then take:

\begin{enumerate}
\item $G=\mathbb{Z}_{2(t+1)(t+k)}$, $g_1=t+1$ and $g_2=t+k$, if $m=1$;

\item $G =\mathbb{Z}_m\times\mathbb{Z}_n$, where $n =\frac{2(t+1)(t+k)}{m}$%
\hspace{10mm}, if $m \neq 1$; now take:

\begin{enumerate}
\item $g_1=(1,n)$\hspace{8mm} and $g_2=(0,1)$\hspace{11mm}, if $m|t+k$;

\item $g_1=(1,\frac{n}{2(2t+1)})$ and $g_2=(1,\frac{2t+1}{m})$\hspace{6mm},
otherwise.
\end{enumerate}
\end{enumerate}

\noindent We leave the details of the proof of this approach of Theorem 9
to the reader and just give three small examples of it, for $(t,k)=(2,2)$, $%
(2,4)$, $(3,3)$, where $G=\mathbb{Z}_{24}$, $\mathbb{Z}_3\times\mathbb{Z}%
_{12}$, $\mathbb{Z}_2\times\mathbb{Z}_{24}$, respectively:
$$\begin{array}{cccccc||cccccc||cccccccc}
&_{11}&^{\,\,6}_{15}&^{10}_{19}&_{23}&&&_{2,9}&^{\,0,10}_{\,0,11}&^{1,0}_{1,1}&_{2,3}&&&&_{1,17}&^{1,18}_{0,20}&^{0,21}_{1,23}&_{0,2}&&\\
^{16}_{\,\,1}&^{20}_{\,\,5}&^{\bf0}_{\bf9}&^{\,\,\bf4}_{\bf{13}}&^{\,\,8}_{17}&^{12}_{21}&^{1,8}_{1,9}&^{2,10}_{2,1}&^{\bf{0,0}}_{\bf{0,1}}&^{\bf{1,2}}_{\bf{1,3}}&^{2,4}_{2,5}&^{0,6}_{0,7}&_{1,15}&^{1,16}_{0,18}&^{0,19}_{1,21}&^{1,22}_{\bf{0,0}}&^{0,1}_{\bf{1,3}}&^{1,4}_{0,6}&^{0,7}_{1,9}&_{0,12}\\
&^{14}&^{18}_{\,\,3}&^{22}_{\,\,7}&^{26}&&^{1,10}_{1,11}&^{2,2}_{2,3}&^{\bf{0,2}}_{\bf{0,3}}&^{\bf{1,4}}_{\bf{1,5}}&^{2,6}_{2,7}&^{0,8}_{0,9}&^{0,17}_{1,19}&^{1,20}_{0,22}&^{0,23}_{1,1}&^{\bf{1,2}}_{\bf{0,4}}&^{\bf{0,5}}_{\bf{1,7}}&^{1,8}_{0,10}&^{0,11}_{1,13}&^{1,14}_{0,16}\\
&&&&&&&^{2,4}&^{0,4}_{0,5}&^{1,6}_{1,7}&^{2,8}&&&^{1,0}&^{0,3}_{1,5}&^{1,6}_{0,8}&^{0,9}_{1,11}&^{1,12}_{0,14}&^{0,15}&\\
&&&&&&&&&&&&&&&^{1,10}&^{0,13}&&&\\
\end{array}$$
\end{proof}

\subsection{Part (iv)}

\noindent In this subsection we discuss the existence of $t$-$\mathrm{PDDS}$%
\thinspace s in $\Lambda _{n}$ whose components are isomorphic to $%
P_{2}\square P_{2}$. Note that for $n=2$ this case overlaps with the
previous part.

\begin{theorem}
Let $n=3k+2$, where $k\geq 0$. Then, there exists a lattice-like $1$-$%
\mathrm{PDDS}$ in $\Lambda _{n}$ whose components are isomorphic to $%
P_{2}\square P_{2}$.
\end{theorem}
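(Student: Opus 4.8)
The plan is to invoke Corollary~\ref{C} with $D=H^{\ast}$, so that the $1$-$\mathrm{PDDS}$ we obtain is automatically lattice-like. Realize $H=P_{2}\square P_{2}$ as the square on the vertices $B=\{O,e_{1},e_{2},e_{1}+e_{2}\}$ of $\Lambda_{n}$. Then $H^{\ast}=B+\bigl(\{O\}\cup\{\pm e_{i}:1\le i\le n\}\bigr)$, and a direct count — a vertex of $H^{\ast}$ either lies in $B$, or is one unit step out of $B$ along a single axis — gives $|V(H^{\ast})|=4+4+4+8(n-2)=8n-4$. Since $n=3k+2$ this equals $12(2k+1)$, which points to the Abelian group $G=\mathbb{Z}_{12}\times\mathbb{Z}_{2k+1}$.

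Next I would define the homomorphism $\Phi:\mathbb{Z}^{n}\to G$ by $\Phi(e_{1})=(2,0)$, $\Phi(e_{2})=(3,0)$, and, splitting the remaining $3k$ axes into $k$ consecutive triples, $\Phi(e_{3a})=(0,a)$, $\Phi(e_{3a+1})=(4,a)$, $\Phi(e_{3a+2})=(8,a)$ for $a=1,\dots,k$. The fact that makes these choices work is the elementary tiling identity $\{0,2,3,5\}+\{0,4,8\}=\mathbb{Z}_{12}$; equivalently, $\Phi(B)=\{0,2,3,5\}$ meets every coset of $\langle 4\rangle\le\mathbb{Z}_{12}$ exactly once, so $\Phi(B)+\bigl(c+\langle 4\rangle\bigr)=\mathbb{Z}_{12}$ for every $c$. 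To check that $\Phi$ restricted to $V(H^{\ast})$ is a bijection I would sort the vertices of $H^{\ast}$ by their $\mathbb{Z}_{2k+1}$-coordinate. The $12$ vertices of $H^{\ast}$ all of whose axis-$j$ coordinates $(j\ge 3)$ vanish are exactly those of the thickened square in $\Lambda_{2}$; they land in the fibre $\mathbb{Z}_{12}\times\{0\}$, and map bijectively onto it by the lattice-like $\Lambda_{2}$ computation of Theorem~\ref{abc} in the case $t=1$, $k=2$, which uses precisely $g_{1}=2$, $g_{2}=3$. For a fixed $a$, the six sets $B\pm e_{3a}$, $B\pm e_{3a+1}$, $B\pm e_{3a+2}$ map onto the three translates $\{0,2,3,5\}+c$ with $c\in\{0,4,8\}$ inside the fibre $\mathbb{Z}_{12}\times\{a\}$ and onto the three translates with $c\in\{0,-4,-8\}=\{0,4,8\}$ inside $\mathbb{Z}_{12}\times\{-a\}$; by the identity above each of these two triples partitions its fibre.

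Since $2k+1$ is odd we have $a\ne -a$, so the fibres $\pm a$ for $a=1,\dots,k$ together with the fibre $0$ exhaust $\mathbb{Z}_{2k+1}$; hence each element of $G$ is hit exactly once and $\Phi|_{V(H^{\ast})}$ is a bijection. As $D=H^{\ast}$ contains $O,e_{1},\dots,e_{n}$, Corollary~\ref{C} then yields a $1$-$\mathrm{PDDS}[P_{2}\square P_{2}]$ in $\Lambda_{n}$, lattice-like because $D\cong H^{\ast}$. The only genuine difficulty I anticipate is guessing the group and the assignment in the first place: once one notices that $8n-4=12(2k+1)$ and that a $4$-element tile needs $12/4=3$ translates to cover a size-$12$ fibre — this is exactly where the ``$3$'' in $n=3k+2$ enters — the rest is routine verification. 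One should also treat the degenerate case $k=0$ explicitly, where $\mathbb{Z}_{2k+1}$ is trivial, $G=\mathbb{Z}_{12}$, and the construction collapses to the $\Lambda_{2}$ one of Theorem~\ref{abc}.
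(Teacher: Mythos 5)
Your proposal is correct, and while it runs on the same engine as the paper --- Corollary~\ref{C} applied with $D=H^{\ast}$ (a single copy, which is exactly what makes the resulting $1$-$\mathrm{PDDS}$ lattice-like) --- the algebraic realization is genuinely different. The paper takes the cyclic group $G=\mathbb{Z}_{24k+12}$ with $g_{1}=2+4k$, $g_{2}=3+6k$, $g_{2+i}=2+4k+i$, $g_{2+k+i}=2+4k-i$, $g_{2+2k+i}=6+11k+i$ for $i=1,\dots,k$, and verifies bijectivity of $\Phi$ on the $24k+12$ vertices by an exhaustive table of values. You instead take $G=\mathbb{Z}_{12}\times\mathbb{Z}_{2k+1}$ (in general not isomorphic to the paper's group, e.g.\ when $3\mid 2k+1$, which Corollary~\ref{C} permits since any Abelian group of order $|V|$ will do) and reduce the verification to two transparent facts: the tiling identity $\{0,2,3,5\}+\{0,4,8\}=\mathbb{Z}_{12}$, which makes the three translates $B+e_{3a}$, $B+e_{3a+1}$, $B+e_{3a+2}$ (images $\{0,2,3,5\}$, $\{4,6,7,9\}$, $\{8,10,11,1\}$) partition the fibre $\mathbb{Z}_{12}\times\{a\}$ and likewise $B-e_{3a}$, $B-e_{3a+1}$, $B-e_{3a+2}$ partition $\mathbb{Z}_{12}\times\{-a\}$; and the distinctness of the twelve values $0,2,3,5,10,4,1,7,9,6,11,8$ on the thickened square in the zero fibre. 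Both check out, the fibres $0,\pm1,\dots,\pm k$ exhaust $\mathbb{Z}_{2k+1}$ because $2k+1$ is odd, and $12(2k+1)=24k+12=|V(H^{\ast})|$, so $\Phi|_{V(H^{\ast})}$ is a bijection and Corollary~\ref{C} applies. Your route buys a structural explanation of where $n\equiv 2\pmod 3$ enters (each size-$12$ fibre needs three translates of the $4$-element tile, i.e.\ one triple of axes) and a verification that scales without a table; the paper's choice buys a single cyclic group. One small caveat: for the zero fibre you appeal to the $D=H^{\ast}$ variant of Theorem~\ref{abc} (with $t=1$ and path parameter $2$), whose details the paper explicitly leaves to the reader, so it is cleaner to state that this base case is exactly the twelve-element computation just listed rather than a citation to a fully proved statement.
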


\begin{proof}
We will construct the desired $\mathrm{PDDS}$ by applying Corollary \ref{C}.
Set $H=P_{2}\square P_{2}$. We place the graph $D=(V,E)$ that is isomorphic
to $H^{\ast }$ in such a way that $V$ comprises the vertices $O$, $e_{1}$, $%
e_{2}$ and $e_{1}+e_{2}$ and their $24k+8$ neighbors; namely, $-e_{1}$, $%
2e_{1}$, $e_{2}-e_{1}$, $e_{2}+2e_{1}$, $-e_{2}$, $2e_{2}$, $e_{1}-e_{2}$, $%
e_{1}+2e_{2},$ and, if $k>0,$ then also vertices $\pm e_{i}$, $e_{1}\pm
e_{i} $, $e_{2}\pm e_{i}$ and $e_{1}+e_{2}\pm e_{i}$ for $i=3,\ldots ,3k+2$.
Thus, $|V|=24k+12,$ and $D$ contains the vertices $O$ and $e_{i}$, for $%
i=1,\ldots ,n$, as required by Corollary \ref{C}. We set $G=\mathbb{Z}%
_{24k+12}$. The elements $g_{i}$ of $G$ that are assigned to the vertices $%
e_{i}$, for $i=1,\ldots ,n$, are: $g_{1}=2+4k$, $g_{2}=3+6k$, and, if $k>0,$
then $g_{2+i}=2+4k+i$, $g_{2+k+i}=2+4k-i$ and $g_{2+2k+i}=6+11k+i$, for $%
i=1,\ldots ,k$. To finish the proof, we need to show that the restriction of
the mapping $\Phi ((a_{1},\ldots ,a_{n}))=\Phi (e_{1})^{a_{1}}\circ \ldots
\circ \Phi (e_{n})^{a_{n}}=a_{1}g_{1}+\ldots +a_{n}g_{n}$ to the set $V$ is
a bijection. To see this, it suffices to check the table below (broken into two parts to be pasted together horizontally) that shows
that each element of $\mathbb{Z}_{24k+12}$ belongs to the set $\Phi (V).$ In
the table the symbol $[a,b]$ stands for the set $\{a,a+1,a+2,...,b\}.$ In
all cells of the table, the index $i$ runs through the interval $[1,12+24k],$
where $12+24k\equiv 0$ in $G=\mathbb{Z}_{24k+12}$ and integers on the
columns corresponding to $G$ shown in increasing order from left to right,
line by line, and then from top to bottom:
$$
\begin{array}{|l|l|l||l}
\hline
V & \Phi(V) & G &\\\hline
^{e_1-e_{2+k+i}}_{e_2-e_{2+k+i}} &
^i_{1+2k +i} & ^{[1,k]}_{[2+2k,1+3k]} & ^{\ldots}_{\ldots}\\
^{e_{1+k+i}}_{e_1+e_2-e_{2+k+i}} &
^{2+4k+i}_{3+6k+i} & ^{[3+4k,2+
5k]}_{[4+6k,3+7k]} & ^{\ldots}_{\ldots}\\
^{e_1+e_{2+i}}_{e_2+e_{2+i}} & ^{4+8k+
i}_{5+10k+i} & ^{[5+8k,4+9k]}_{[6+10k,5+
11k]} & ^{\ldots}_{\ldots}\\
^{-e_{2+2k+i}}_{e_1+e_2+e_{2+i}} &
^{7+13k-i}_{7+14k+i} & ^{[7+12k,6+
13k]}_{[8+14k,7+15k]} & ^{\ldots}_{\ldots}\\
^{e_2+e_{2+k+i}}_{e_2-e_{2+2k+i}} &
^{9+17k-i}_{10+19k-i} & ^{[9+16k,8+
17k]}_{[10+18k,9+19k]} & ^{\ldots}_{\ldots}\\
^{-e_{2+k+i}}_{e_1+e_2-e_{2+2k+i}} &
^{10+20k+i}_{12+23k-i} & ^{[11+20k,10+
21k]}_{[12+22k,11+23k]}&^{\ldots}_{\ldots}\\
  &  & &\\\hline
\end{array}%
$$
$$
\begin{array}{l||l|l|l||l|l|l|l|}
\hline
&V & \Phi(V) & G & V &\Phi(V) & G \\ \hline
^{\ldots}_{\ldots}&
^{e_2-e_{2+i}}_{e_{2+k+i}} & ^{1+2k-
i}_{2+4k-i} & ^{[k+1,2k]}_{[2+3k,1+4k]}
&  ^{e_2-e_1}_{e_1} & ^{1+2k}_{2+4k} &
^{1+2k}_{2+4k} \\
^{\ldots}_{\ldots}&^{e_1+e_2-e_{2+i}}_{e_1+e_{2+k+i}} &
^{3+6k-i}_{4+8k-i} & ^{[3+5k,2+
6k]}_{[4+7k,3+8k]} &  ^{e_2}_{2e_1} &
^{3+6k}_{4+8k} & ^{3+6k}_{4+8k} \\
^{\ldots}_{\ldots}&^{e_2+e_{2+k+i}}_{e_{2+2k+i}} &
^{5+10k-i}_{5+11k+i} & ^{[5+9k,4+
10k]}_{[6+11k,5+12k]} &  ^{e_1+e_2}_{2e_2} &
^{5+10k}_{6+12k} & ^{5+10k}_{6+12k} \\
^{\ldots}_{\ldots}&^{e_1+e_2+e_{2+k+i}}_{e_1+e_{2+2k+i}} &
^{7+14k-i}_{7+15k+i} & ^{[7+13k,6+
14k]}_{[8+15k,7+16k]} &
^{2e_1+e_2}_{e_1+2e_2} & ^{7+14k}_{8+16k}
& ^{7+14k}_{8+16k} \\
^{\ldots}_{\ldots}&^{e_2+e_{2+2k+i}}_{-e_{2+i}} &
^{8+17k+i}_{10+20k-i} & ^{[9+17k,8+
18k]}_{[10+19k,9+20k]} &  ^{-e_2}_{-e_1} &
^{9+18k}_{10+20k} & ^{9+18k}_{10+20k} \\
^{\ldots}_{\ldots}&^{e_1+e_2+e_{2+2k+i}}_{e_1-e_{2+i}} &
^{10+21k+i}_{12+24k-i} & ^{[11+21k,10+
22k]}_{[12+23k,11+24k]} &  ^{e_1-e_2}_{O} &
^{11+22k}_{12+24k} & ^{11+22k}_{12+24k}
\\
&&  &  &  &  &  \\ \hline
\end{array}%
$$

\noindent As usual at the end of the proof we provide three small examples
for $n=2,5,$ and $8,$ to illustrate it.
$$\begin{array}{||cccc||}\hline
<&e_1,&e_2&>\\\hline
_{10} & ^9_\mathbf0 &^{11}_{\hspace*{1mm}\mathbf2} & _4                     \\
^{\hspace*{2mm}1} & ^\mathbf3_6 & ^{\hspace*{1mm}\mathbf5}_{\hspace*{1mm}8} & ^7\\\hline
\end{array}$$

$$\begin{array}{||cccc||cc|cc||cc|cc||cc|cc||}\hline
<&e_1,&e_2&>&+&e_3&-&e_3&+&e_4&-&e_4&-&e_5&+&e_5\\\hline
_{30}&^{\,27}_{\,\,\,\mathbf0}&^{\,33}_{\,\,\,\mathbf6}&_{12}&_{\,\,\,\mathbf7}&_{13}&_{29}&_{35}&_{\,\,\,\mathbf5}&_{11}&_{31}&_{\,\,\,1}&_\mathbf{17}&_{23}&_{19}&_{25}\\
^{\,\,\,3}&^{\,\,\,\mathbf9}_{\,18}&^\mathbf{15}_{\,24}&^{21}&^{16}&^{22}&^{\,\,\,2}&^{\,\,\,8}&^{14}&^{20}&^{\,\,\,4}&^{10}&^{26}&^{32}&^{28}&^{34}\\\hline
\end{array}$$

$$\begin{array}{||cccc||cc|cc||cc|cc||cc|cc||}\hline
<&e_1,&e_2&>&+&e_3&-&e_3&+&e_5&-&e_5&-&e_7&+&e_7\\\hline
_{50}&^{\,45}_{\,\,\,\mathbf0}&^{\,55}_\mathbf{10}&_{20}&_\mathbf{11}&_{21}&_{49}&_{59}&_{\,\,\,\mathbf9}&_{19}&_{51}&_{\,\,\,1}&_\mathbf{29}&_{39}&_{31}&_{41}\\
^{\,\,\,5}&^\mathbf{15}_{\,30}&^\mathbf{25}_{\,40}&^{35} &^{26}&^{36}&^{\,\,\,4}&^{14}&^{24}&^{34}&^{\,\,\,6}&^{16}&^{44}&^{54}&^{46}&^{56}\\\hline\hline
&&&&+&e_4&-&e_4&+&e_6&-&e_6&-&e_8&+&e_8\\\hline
&&&&&&&&&&&&&&&\\
&&& &^\mathbf{12}_{27}&^{22}_{37}&^{48}_{\,\,\,3}&^{58}_{13}&^{\,\,\,\mathbf8}_{23}&^{18}_{33}&^{52}_{\,\,\,7}&^{\,\,\,2}_{17}&^\mathbf{28}_{43}&^{38}_{53}&^{32}_{47}&^{42}_{57}\\
&&& &&&&&&&&&&&&\\\hline
\end{array}$$
\end{proof}

\section{Proof of Theorem \protect\ref{T1}}

\noindent In this section we prove
Theorem %
\ref{T1}. 

\begin{proof}
Suppose that there is a $t$-$\mathrm{PDDS}$ $R$ in $\Lambda _{2}$ whose
components are isomorphic to $P_{k}\square P_{s},$ where $k\geq s\geq 3.$
Let $H^{\ast }$ be an induced subgraph of $\Lambda _{n}$ comprising the
vertices of a copy $H$ of $P_{k}\square P_{s}$ and all vertices at distance
at most $t$ from $H$. Clearly $R$ generates a decomposition of $\mathbb{Z}%
^{2}$ into copies of $H^{\ast }$. Although $R$ is not necessarily
lattice-like, all components of $R$ have to be either "parallel"  to the $x$%
-axis, or to be "parallel" to the $y$-axis. Assume wlog that $R$ contains a
component $P_{k}\square P_{s}$ comprising vertices $(x,y),$ where $1\leq
x\leq k,t+1\leq y\leq t+s;$ see the figure below for examples of this
situation for $k=6$, $s=3$ and $t=3$. Consider a set of vertices $%
A=\{(x,0),1\leq x\leq k\}.$ We will show that the vertices of $A$ cannot be
covered by vertex-disjoint copies of $H^{\ast }.$ Assume that a copy of $%
H^{\ast }$ covers only vertices $(x,0),1\leq x\leq m,m<k,$ see the left
example below, where $m=4$. Then the vertex $(m+1,0)$ cannot be covered in $%
R.$ However, if all vertices in $A$ are covered in $R$ by the same copy of $%
H^{\ast }$ (in this case the two copies of $H^{\ast }$ have to be "parallel"
as $k\geq s$) , then the vertices $(k+1,0)$ and $(k+1,1)$ can be covered
only if $s=2,$ a contradiction as we consider the case $s\geq 3.$ See the
right example in the figure.
\end{proof}

\begin{figure}[htp]
\unitlength=0.60mm \special{em:linewidth 0.4pt} \linethickness{0.4pt}
\begin{picture}(165.00,87.00)
\put(59.00,56.00){\circle{2.00}}
\put(64.00,56.00){\circle{2.00}}
\put(59.00,61.00){\circle*{2.00}}
\put(64.00,61.00){\circle*{2.00}}
\put(69.00,61.00){\circle*{2.00}}
\put(74.00,61.00){\circle*{2.00}}
\put(79.00,61.00){\circle*{2.00}}
\put(59.00,66.00){\circle*{2.00}}
\put(64.00,66.00){\circle*{2.00}}
\put(69.00,66.00){\circle*{2.00}}
\put(74.00,66.00){\circle*{2.00}}
\put(79.00,66.00){\circle*{2.00}}
\put(59.00,71.00){\circle*{2.00}}
\put(64.00,71.00){\circle*{2.00}}
\put(69.00,71.00){\circle*{2.00}}
\put(74.00,71.00){\circle*{2.00}}
\put(79.00,71.00){\circle*{2.00}}
\emline{59.00}{60.00}{1}{59.00}{57.00}{2}
\put(69.00,56.00){\circle{2.00}}
\put(74.00,56.00){\circle{2.00}}
\put(79.00,56.00){\circle{2.00}}
\emline{64.00}{60.00}{3}{64.00}{57.00}{4}
\emline{69.00}{60.00}{5}{69.00}{57.00}{6}
\emline{74.00}{60.00}{7}{74.00}{57.00}{8}
\emline{79.00}{60.00}{9}{79.00}{57.00}{10}
\emline{60.00}{56.00}{11}{63.00}{56.00}{12}
\emline{65.00}{56.00}{13}{68.00}{56.00}{14}
\emline{70.00}{56.00}{15}{73.00}{56.00}{16}
\emline{75.00}{56.00}{17}{78.00}{56.00}{18}
\put(59.00,51.00){\circle{2.00}}
\put(64.00,51.00){\circle{2.00}}
\emline{59.00}{55.00}{19}{59.00}{52.00}{20}
\put(69.00,51.00){\circle{2.00}}
\put(74.00,51.00){\circle{2.00}}
\put(79.00,51.00){\circle{2.00}}
\emline{64.00}{55.00}{21}{64.00}{52.00}{22}
\emline{69.00}{55.00}{23}{69.00}{52.00}{24}
\emline{74.00}{55.00}{25}{74.00}{52.00}{26}
\emline{79.00}{55.00}{27}{79.00}{52.00}{28}
\emline{60.00}{51.00}{29}{63.00}{51.00}{30}
\emline{65.00}{51.00}{31}{68.00}{51.00}{32}
\emline{70.00}{51.00}{33}{73.00}{51.00}{34}
\emline{75.00}{51.00}{35}{78.00}{51.00}{36}
\put(84.00,61.00){\circle{2.00}}
\emline{84.00}{65.00}{37}{84.00}{62.00}{38}
\emline{80.00}{61.00}{39}{83.00}{61.00}{40}
\put(84.00,56.00){\circle{2.00}}
\emline{84.00}{60.00}{41}{84.00}{57.00}{42}
\emline{80.00}{56.00}{43}{83.00}{56.00}{44}
\put(84.00,71.00){\circle{2.00}}
\emline{84.00}{75.00}{45}{84.00}{72.00}{46}
\emline{80.00}{71.00}{47}{83.00}{71.00}{48}
\put(84.00,66.00){\circle{2.00}}
\emline{84.00}{70.00}{49}{84.00}{67.00}{50}
\emline{80.00}{66.00}{51}{83.00}{66.00}{52}
\put(89.00,61.00){\circle{2.00}}
\emline{89.00}{65.00}{53}{89.00}{62.00}{54}
\emline{85.00}{61.00}{55}{88.00}{61.00}{56}
\put(89.00,71.00){\circle{2.00}}
\emline{85.00}{71.00}{57}{88.00}{71.00}{58}
\put(89.00,66.00){\circle{2.00}}
\emline{89.00}{70.00}{59}{89.00}{67.00}{60}
\emline{85.00}{66.00}{61}{88.00}{66.00}{62}
\put(59.00,76.00){\circle{2.00}}
\put(64.00,76.00){\circle{2.00}}
\emline{59.00}{80.00}{63}{59.00}{77.00}{64}
\put(69.00,76.00){\circle{2.00}}
\put(74.00,76.00){\circle{2.00}}
\put(79.00,76.00){\circle{2.00}}
\emline{64.00}{80.00}{65}{64.00}{77.00}{66}
\emline{69.00}{80.00}{67}{69.00}{77.00}{68}
\emline{74.00}{80.00}{69}{74.00}{77.00}{70}
\emline{79.00}{80.00}{71}{79.00}{77.00}{72}
\emline{60.00}{76.00}{73}{63.00}{76.00}{74}
\emline{65.00}{76.00}{75}{68.00}{76.00}{76}
\emline{70.00}{76.00}{77}{73.00}{76.00}{78}
\emline{75.00}{76.00}{79}{78.00}{76.00}{80}
\emline{59.00}{75.00}{81}{59.00}{72.00}{82}
\emline{64.00}{75.00}{83}{64.00}{72.00}{84}
\emline{69.00}{75.00}{85}{69.00}{72.00}{86}
\emline{74.00}{75.00}{87}{74.00}{72.00}{88}
\emline{79.00}{75.00}{89}{79.00}{72.00}{90}
\put(84.00,76.00){\circle{2.00}}
\emline{80.00}{76.00}{91}{83.00}{76.00}{92}
\put(59.00,81.00){\circle{2.00}}
\put(64.00,81.00){\circle{2.00}}
\put(69.00,81.00){\circle{2.00}}
\put(74.00,81.00){\circle{2.00}}
\put(79.00,81.00){\circle{2.00}}
\emline{60.00}{81.00}{93}{63.00}{81.00}{94}
\emline{65.00}{81.00}{95}{68.00}{81.00}{96}
\emline{70.00}{81.00}{97}{73.00}{81.00}{98}
\emline{75.00}{81.00}{99}{78.00}{81.00}{100}
\put(54.00,56.00){\circle{2.00}}
\emline{54.00}{60.00}{101}{54.00}{57.00}{102}
\emline{54.00}{75.00}{103}{54.00}{72.00}{104}
\put(54.00,76.00){\circle{2.00}}
\emline{55.00}{56.00}{105}{58.00}{56.00}{106}
\emline{55.00}{76.00}{107}{58.00}{76.00}{108}
\put(49.00,61.00){\circle{2.00}}
\emline{49.00}{65.00}{109}{49.00}{62.00}{110}
\put(49.00,71.00){\circle{2.00}}
\put(49.00,66.00){\circle{2.00}}
\emline{49.00}{70.00}{111}{49.00}{67.00}{112}
\emline{50.00}{61.00}{113}{53.00}{61.00}{114}
\emline{50.00}{71.00}{115}{53.00}{71.00}{116}
\emline{50.00}{66.00}{117}{53.00}{66.00}{118}
\put(49.00,11.00){\circle{2.00}}
\put(54.00,11.00){\circle{2.00}}
\put(49.00,16.00){\circle*{2.00}}
\put(54.00,16.00){\circle*{2.00}}
\put(59.00,16.00){\circle*{2.00}}
\put(64.00,16.00){\circle*{2.00}}
\put(69.00,16.00){\circle*{2.00}}
\put(49.00,21.00){\circle*{2.00}}
\put(54.00,21.00){\circle*{2.00}}
\put(59.00,21.00){\circle*{2.00}}
\put(64.00,21.00){\circle*{2.00}}
\put(69.00,21.00){\circle*{2.00}}
\put(49.00,26.00){\circle*{2.00}}
\put(54.00,26.00){\circle*{2.00}}
\put(59.00,26.00){\circle*{2.00}}
\put(64.00,26.00){\circle*{2.00}}
\put(69.00,26.00){\circle*{2.00}}
\emline{49.00}{15.00}{119}{49.00}{12.00}{120}
\put(59.00,11.00){\circle{2.00}}
\put(64.00,11.00){\circle{2.00}}
\put(69.00,11.00){\circle{2.00}}
\emline{54.00}{15.00}{121}{54.00}{12.00}{122}
\emline{59.00}{15.00}{123}{59.00}{12.00}{124}
\emline{64.00}{15.00}{125}{64.00}{12.00}{126}
\emline{69.00}{15.00}{127}{69.00}{12.00}{128}
\emline{50.00}{11.00}{129}{53.00}{11.00}{130}
\emline{55.00}{11.00}{131}{58.00}{11.00}{132}
\emline{60.00}{11.00}{133}{63.00}{11.00}{134}
\emline{65.00}{11.00}{135}{68.00}{11.00}{136}
\put(49.00,6.00){\circle{2.00}}
\put(54.00,6.00){\circle{2.00}}
\emline{49.00}{10.00}{137}{49.00}{7.00}{138}
\put(59.00,6.00){\circle{2.00}}
\put(64.00,6.00){\circle{2.00}}
\put(69.00,6.00){\circle{2.00}}
\emline{54.00}{10.00}{139}{54.00}{7.00}{140}
\emline{59.00}{10.00}{141}{59.00}{7.00}{142}
\emline{64.00}{10.00}{143}{64.00}{7.00}{144}
\emline{69.00}{10.00}{145}{69.00}{7.00}{146}
\emline{50.00}{6.00}{147}{53.00}{6.00}{148}
\emline{55.00}{6.00}{149}{58.00}{6.00}{150}
\emline{60.00}{6.00}{151}{63.00}{6.00}{152}
\emline{65.00}{6.00}{153}{68.00}{6.00}{154}
\put(74.00,16.00){\circle{2.00}}
\emline{74.00}{20.00}{155}{74.00}{17.00}{156}
\emline{70.00}{16.00}{157}{73.00}{16.00}{158}
\put(74.00,11.00){\circle{2.00}}
\emline{74.00}{15.00}{159}{74.00}{12.00}{160}
\emline{70.00}{11.00}{161}{73.00}{11.00}{162}
\put(74.00,26.00){\circle{2.00}}
\emline{74.00}{35.00}{163}{74.00}{32.00}{164}
\emline{70.00}{26.00}{165}{73.00}{26.00}{166}
\put(74.00,21.00){\circle{2.00}}
\emline{74.00}{25.00}{167}{74.00}{22.00}{168}
\emline{70.00}{21.00}{169}{73.00}{21.00}{170}
\put(79.00,16.00){\circle{2.00}}
\emline{79.00}{20.00}{171}{79.00}{17.00}{172}
\emline{75.00}{16.00}{173}{78.00}{16.00}{174}
\put(79.00,31.00){\circle{2.00}}
\emline{75.00}{31.00}{175}{78.00}{31.00}{176}
\put(79.00,21.00){\circle{2.00}}
\emline{79.00}{30.00}{177}{79.00}{27.00}{178}
\emline{75.00}{21.00}{179}{78.00}{21.00}{180}
\put(49.00,31.00){\circle{2.00}}
\put(54.00,31.00){\circle{2.00}}
\emline{49.00}{40.00}{181}{49.00}{37.00}{182}
\put(59.00,31.00){\circle{2.00}}
\put(64.00,31.00){\circle{2.00}}
\put(69.00,31.00){\circle{2.00}}
\emline{54.00}{40.00}{183}{54.00}{37.00}{184}
\emline{59.00}{40.00}{185}{59.00}{37.00}{186}
\emline{64.00}{40.00}{187}{64.00}{37.00}{188}
\emline{69.00}{40.00}{189}{69.00}{37.00}{190}
\emline{50.00}{31.00}{191}{53.00}{31.00}{192}
\emline{55.00}{31.00}{193}{58.00}{31.00}{194}
\emline{60.00}{31.00}{195}{63.00}{31.00}{196}
\emline{65.00}{31.00}{197}{68.00}{31.00}{198}
\emline{49.00}{30.00}{199}{49.00}{27.00}{200}
\emline{54.00}{30.00}{201}{54.00}{27.00}{202}
\emline{59.00}{30.00}{203}{59.00}{27.00}{204}
\emline{64.00}{30.00}{205}{64.00}{27.00}{206}
\emline{69.00}{30.00}{207}{69.00}{27.00}{208}
\emline{70.00}{36.00}{209}{73.00}{36.00}{210}
\put(49.00,41.00){\circle{2.00}}
\put(54.00,41.00){\circle{2.00}}
\put(59.00,41.00){\circle{2.00}}
\put(64.00,41.00){\circle{2.00}}
\put(69.00,41.00){\circle{2.00}}
\emline{50.00}{41.00}{211}{53.00}{41.00}{212}
\emline{55.00}{41.00}{213}{58.00}{41.00}{214}
\emline{60.00}{41.00}{215}{63.00}{41.00}{216}
\emline{65.00}{41.00}{217}{68.00}{41.00}{218}
\put(44.00,11.00){\circle{2.00}}
\emline{44.00}{15.00}{219}{44.00}{12.00}{220}
\emline{44.00}{30.00}{221}{44.00}{27.00}{222}
\put(44.00,31.00){\circle{2.00}}
\emline{45.00}{11.00}{223}{48.00}{11.00}{224}
\emline{45.00}{31.00}{225}{48.00}{31.00}{226}
\put(39.00,16.00){\circle{2.00}}
\emline{39.00}{20.00}{227}{39.00}{17.00}{228}
\put(39.00,26.00){\circle{2.00}}
\put(39.00,21.00){\circle{2.00}}
\emline{39.00}{25.00}{229}{39.00}{22.00}{230}
\emline{40.00}{16.00}{231}{43.00}{16.00}{232}
\emline{40.00}{26.00}{233}{43.00}{26.00}{234}
\emline{40.00}{21.00}{235}{43.00}{21.00}{236}
\put(74.00,41.00){\circle{2.00}}
\put(77.00,41.00){\makebox(0,0)[cc]{\small{?}}}
\put(129.00,56.00){\circle{2.00}}
\put(134.00,56.00){\circle{2.00}}
\put(129.00,61.00){\circle*{2.00}}
\put(134.00,61.00){\circle*{2.00}}
\put(139.00,61.00){\circle*{2.00}}
\put(144.00,61.00){\circle*{2.00}}
\put(149.00,61.00){\circle*{2.00}}
\put(129.00,66.00){\circle*{2.00}}
\put(134.00,66.00){\circle*{2.00}}
\put(139.00,66.00){\circle*{2.00}}
\put(144.00,66.00){\circle*{2.00}}
\put(149.00,66.00){\circle*{2.00}}
\put(129.00,71.00){\circle*{2.00}}
\put(134.00,71.00){\circle*{2.00}}
\put(139.00,71.00){\circle*{2.00}}
\put(144.00,71.00){\circle*{2.00}}
\put(149.00,71.00){\circle*{2.00}}
\emline{129.00}{60.00}{237}{129.00}{57.00}{238}
\put(139.00,56.00){\circle{2.00}}
\put(144.00,56.00){\circle{2.00}}
\put(149.00,56.00){\circle{2.00}}
\emline{134.00}{60.00}{239}{134.00}{57.00}{240}
\emline{139.00}{60.00}{241}{139.00}{57.00}{242}
\emline{144.00}{60.00}{243}{144.00}{57.00}{244}
\emline{149.00}{60.00}{245}{149.00}{57.00}{246}
\emline{130.00}{56.00}{247}{133.00}{56.00}{248}
\emline{135.00}{56.00}{249}{138.00}{56.00}{250}
\emline{140.00}{56.00}{251}{143.00}{56.00}{252}
\emline{145.00}{56.00}{253}{148.00}{56.00}{254}
\put(129.00,51.00){\circle{2.00}}
\put(134.00,51.00){\circle{2.00}}
\emline{129.00}{55.00}{255}{129.00}{52.00}{256}
\put(139.00,51.00){\circle{2.00}}
\put(144.00,51.00){\circle{2.00}}
\put(149.00,51.00){\circle{2.00}}
\emline{134.00}{55.00}{257}{134.00}{52.00}{258}
\emline{139.00}{55.00}{259}{139.00}{52.00}{260}
\emline{144.00}{55.00}{261}{144.00}{52.00}{262}
\emline{149.00}{55.00}{263}{149.00}{52.00}{264}
\emline{130.00}{51.00}{265}{133.00}{51.00}{266}
\emline{135.00}{51.00}{267}{138.00}{51.00}{268}
\emline{140.00}{51.00}{269}{143.00}{51.00}{270}
\emline{145.00}{51.00}{271}{148.00}{51.00}{272}
\put(154.00,61.00){\circle{2.00}}
\emline{154.00}{65.00}{273}{154.00}{62.00}{274}
\emline{150.00}{61.00}{275}{153.00}{61.00}{276}
\put(154.00,56.00){\circle{2.00}}
\emline{154.00}{60.00}{277}{154.00}{57.00}{278}
\emline{150.00}{56.00}{279}{153.00}{56.00}{280}
\put(154.00,71.00){\circle{2.00}}
\emline{154.00}{75.00}{281}{154.00}{72.00}{282}
\emline{150.00}{71.00}{283}{153.00}{71.00}{284}
\put(154.00,66.00){\circle{2.00}}
\emline{154.00}{70.00}{285}{154.00}{67.00}{286}
\emline{150.00}{66.00}{287}{153.00}{66.00}{288}
\put(159.00,61.00){\circle{2.00}}
\emline{159.00}{65.00}{289}{159.00}{62.00}{290}
\emline{155.00}{61.00}{291}{158.00}{61.00}{292}
\put(159.00,71.00){\circle{2.00}}
\emline{155.00}{71.00}{293}{158.00}{71.00}{294}
\put(159.00,66.00){\circle{2.00}}
\emline{159.00}{70.00}{295}{159.00}{67.00}{296}
\emline{155.00}{66.00}{297}{158.00}{66.00}{298}
\put(129.00,76.00){\circle{2.00}}
\put(134.00,76.00){\circle{2.00}}
\emline{129.00}{80.00}{299}{129.00}{77.00}{300}
\put(139.00,76.00){\circle{2.00}}
\put(144.00,76.00){\circle{2.00}}
\put(149.00,76.00){\circle{2.00}}
\emline{134.00}{80.00}{301}{134.00}{77.00}{302}
\emline{139.00}{80.00}{303}{139.00}{77.00}{304}
\emline{144.00}{80.00}{305}{144.00}{77.00}{306}
\emline{149.00}{80.00}{307}{149.00}{77.00}{308}
\emline{130.00}{76.00}{309}{133.00}{76.00}{310}
\emline{135.00}{76.00}{311}{138.00}{76.00}{312}
\emline{140.00}{76.00}{313}{143.00}{76.00}{314}
\emline{145.00}{76.00}{315}{148.00}{76.00}{316}
\emline{129.00}{75.00}{317}{129.00}{72.00}{318}
\emline{134.00}{75.00}{319}{134.00}{72.00}{320}
\emline{139.00}{75.00}{321}{139.00}{72.00}{322}
\emline{144.00}{75.00}{323}{144.00}{72.00}{324}
\emline{149.00}{75.00}{325}{149.00}{72.00}{326}
\put(154.00,76.00){\circle{2.00}}
\emline{150.00}{76.00}{327}{153.00}{76.00}{328}
\put(129.00,81.00){\circle{2.00}}
\put(134.00,81.00){\circle{2.00}}
\put(139.00,81.00){\circle{2.00}}
\put(144.00,81.00){\circle{2.00}}
\put(149.00,81.00){\circle{2.00}}
\emline{130.00}{81.00}{329}{133.00}{81.00}{330}
\emline{135.00}{81.00}{331}{138.00}{81.00}{332}
\emline{140.00}{81.00}{333}{143.00}{81.00}{334}
\emline{145.00}{81.00}{335}{148.00}{81.00}{336}
\put(124.00,56.00){\circle{2.00}}
\emline{124.00}{60.00}{337}{124.00}{57.00}{338}
\emline{124.00}{75.00}{339}{124.00}{72.00}{340}
\put(124.00,76.00){\circle{2.00}}
\emline{125.00}{56.00}{341}{128.00}{56.00}{342}
\emline{125.00}{76.00}{343}{128.00}{76.00}{344}
\put(119.00,61.00){\circle{2.00}}
\emline{119.00}{65.00}{345}{119.00}{62.00}{346}
\put(119.00,71.00){\circle{2.00}}
\put(119.00,66.00){\circle{2.00}}
\emline{119.00}{70.00}{347}{119.00}{67.00}{348}
\emline{120.00}{61.00}{349}{123.00}{61.00}{350}
\emline{120.00}{71.00}{351}{123.00}{71.00}{352}
\emline{120.00}{66.00}{353}{123.00}{66.00}{354}
\put(129.00,11.00){\circle{2.00}}
\put(134.00,11.00){\circle{2.00}}
\put(129.00,16.00){\circle*{2.00}}
\put(134.00,16.00){\circle*{2.00}}
\put(139.00,16.00){\circle*{2.00}}
\put(144.00,16.00){\circle*{2.00}}
\put(149.00,16.00){\circle*{2.00}}
\put(129.00,21.00){\circle*{2.00}}
\put(134.00,21.00){\circle*{2.00}}
\put(139.00,21.00){\circle*{2.00}}
\put(144.00,21.00){\circle*{2.00}}
\put(149.00,21.00){\circle*{2.00}}
\put(129.00,26.00){\circle*{2.00}}
\put(134.00,26.00){\circle*{2.00}}
\put(139.00,26.00){\circle*{2.00}}
\put(144.00,26.00){\circle*{2.00}}
\put(149.00,26.00){\circle*{2.00}}
\emline{129.00}{15.00}{355}{129.00}{12.00}{356}
\put(139.00,11.00){\circle{2.00}}
\put(144.00,11.00){\circle{2.00}}
\put(149.00,11.00){\circle{2.00}}
\emline{134.00}{15.00}{357}{134.00}{12.00}{358}
\emline{139.00}{15.00}{359}{139.00}{12.00}{360}
\emline{144.00}{15.00}{361}{144.00}{12.00}{362}
\emline{149.00}{15.00}{363}{149.00}{12.00}{364}
\emline{130.00}{11.00}{365}{133.00}{11.00}{366}
\emline{135.00}{11.00}{367}{138.00}{11.00}{368}
\emline{140.00}{11.00}{369}{143.00}{11.00}{370}
\emline{145.00}{11.00}{371}{148.00}{11.00}{372}
\put(129.00,6.00){\circle{2.00}}
\put(134.00,6.00){\circle{2.00}}
\emline{129.00}{10.00}{373}{129.00}{7.00}{374}
\put(139.00,6.00){\circle{2.00}}
\put(144.00,6.00){\circle{2.00}}
\put(149.00,6.00){\circle{2.00}}
\emline{134.00}{10.00}{375}{134.00}{7.00}{376}
\emline{139.00}{10.00}{377}{139.00}{7.00}{378}
\emline{144.00}{10.00}{379}{144.00}{7.00}{380}
\emline{149.00}{10.00}{381}{149.00}{7.00}{382}
\emline{130.00}{6.00}{383}{133.00}{6.00}{384}
\emline{135.00}{6.00}{385}{138.00}{6.00}{386}
\emline{140.00}{6.00}{387}{143.00}{6.00}{388}
\emline{145.00}{6.00}{389}{148.00}{6.00}{390}
\put(154.00,16.00){\circle{2.00}}
\emline{154.00}{20.00}{391}{154.00}{17.00}{392}
\emline{150.00}{16.00}{393}{153.00}{16.00}{394}
\put(154.00,11.00){\circle{2.00}}
\emline{154.00}{15.00}{395}{154.00}{12.00}{396}
\emline{150.00}{11.00}{397}{153.00}{11.00}{398}
\put(154.00,26.00){\circle{2.00}}
\emline{154.00}{35.00}{399}{154.00}{32.00}{400}
\emline{150.00}{26.00}{401}{153.00}{26.00}{402}
\put(154.00,21.00){\circle{2.00}}
\emline{154.00}{25.00}{403}{154.00}{22.00}{404}
\emline{150.00}{21.00}{405}{153.00}{21.00}{406}
\put(159.00,16.00){\circle{2.00}}
\emline{159.00}{20.00}{407}{159.00}{17.00}{408}
\emline{155.00}{16.00}{409}{158.00}{16.00}{410}
\put(159.00,31.00){\circle{2.00}}
\emline{155.00}{31.00}{411}{158.00}{31.00}{412}
\put(159.00,21.00){\circle{2.00}}
\emline{159.00}{30.00}{413}{159.00}{27.00}{414}
\emline{155.00}{21.00}{415}{158.00}{21.00}{416}
\put(129.00,31.00){\circle{2.00}}
\put(134.00,31.00){\circle{2.00}}
\emline{129.00}{40.00}{417}{129.00}{37.00}{418}
\put(139.00,31.00){\circle{2.00}}
\put(144.00,31.00){\circle{2.00}}
\put(149.00,31.00){\circle{2.00}}
\emline{134.00}{40.00}{419}{134.00}{37.00}{420}
\emline{139.00}{40.00}{421}{139.00}{37.00}{422}
\emline{144.00}{40.00}{423}{144.00}{37.00}{424}
\emline{149.00}{40.00}{425}{149.00}{37.00}{426}
\emline{130.00}{31.00}{427}{133.00}{31.00}{428}
\emline{135.00}{31.00}{429}{138.00}{31.00}{430}
\emline{140.00}{31.00}{431}{143.00}{31.00}{432}
\emline{145.00}{31.00}{433}{148.00}{31.00}{434}
\emline{129.00}{30.00}{435}{129.00}{27.00}{436}
\emline{134.00}{30.00}{437}{134.00}{27.00}{438}
\emline{139.00}{30.00}{439}{139.00}{27.00}{440}
\emline{144.00}{30.00}{441}{144.00}{27.00}{442}
\emline{149.00}{30.00}{443}{149.00}{27.00}{444}
\put(154.00,36.00){\circle{2.00}}
\emline{150.00}{36.00}{445}{153.00}{36.00}{446}
\put(129.00,41.00){\circle{2.00}}
\put(134.00,41.00){\circle{2.00}}
\put(139.00,41.00){\circle{2.00}}
\put(144.00,41.00){\circle{2.00}}
\put(149.00,41.00){\circle{2.00}}
\emline{130.00}{41.00}{447}{133.00}{41.00}{448}
\emline{135.00}{41.00}{449}{138.00}{41.00}{450}
\emline{140.00}{41.00}{451}{143.00}{41.00}{452}
\emline{145.00}{41.00}{453}{148.00}{41.00}{454}
\put(124.00,11.00){\circle{2.00}}
\emline{124.00}{15.00}{455}{124.00}{12.00}{456}
\emline{124.00}{30.00}{457}{124.00}{27.00}{458}
\put(124.00,31.00){\circle{2.00}}
\emline{125.00}{11.00}{459}{128.00}{11.00}{460}
\emline{125.00}{31.00}{461}{128.00}{31.00}{462}
\put(119.00,16.00){\circle{2.00}}
\emline{119.00}{20.00}{463}{119.00}{17.00}{464}
\put(119.00,26.00){\circle{2.00}}
\put(119.00,21.00){\circle{2.00}}
\emline{119.00}{25.00}{465}{119.00}{22.00}{466}
\emline{120.00}{16.00}{467}{123.00}{16.00}{468}
\emline{120.00}{26.00}{469}{123.00}{26.00}{470}
\emline{120.00}{21.00}{471}{123.00}{21.00}{472}
\put(154.00,41.00){\circle{2.00}}
\put(157.00,41.00){\makebox(0,0)[cc]{\small{?}}}
\put(154.00,51.00){\circle{2.00}}
\put(44.00,6.00){\circle{2.00}}
\emline{45.00}{6.00}{473}{48.00}{6.00}{474}
\emline{44.00}{7.00}{475}{44.00}{10.00}{476}
\emline{43.00}{11.00}{477}{40.00}{11.00}{478}
\put(39.00,11.00){\circle{2.00}}
\emline{39.00}{12.00}{479}{39.00}{15.00}{480}
\put(44.00,16.00){\circle*{2.00}}
\put(44.00,21.00){\circle*{2.00}}
\put(44.00,26.00){\circle*{2.00}}
\put(54.00,61.00){\circle*{2.00}}
\put(54.00,66.00){\circle*{2.00}}
\put(54.00,71.00){\circle*{2.00}}
\put(54.00,51.00){\circle{2.00}}
\emline{54.00}{55.00}{481}{54.00}{52.00}{482}
\emline{55.00}{51.00}{483}{58.00}{51.00}{484}
\emline{54.00}{80.00}{485}{54.00}{77.00}{486}
\put(54.00,81.00){\circle{2.00}}
\emline{55.00}{81.00}{487}{58.00}{81.00}{488}
\put(49.00,56.00){\circle{2.00}}
\emline{49.00}{60.00}{489}{49.00}{57.00}{490}
\emline{49.00}{75.00}{491}{49.00}{72.00}{492}
\put(49.00,76.00){\circle{2.00}}
\emline{50.00}{56.00}{493}{53.00}{56.00}{494}
\emline{50.00}{76.00}{495}{53.00}{76.00}{496}
\put(44.00,61.00){\circle{2.00}}
\emline{44.00}{65.00}{497}{44.00}{62.00}{498}
\put(44.00,71.00){\circle{2.00}}
\put(44.00,66.00){\circle{2.00}}
\emline{44.00}{70.00}{499}{44.00}{67.00}{500}
\emline{45.00}{61.00}{501}{48.00}{61.00}{502}
\emline{45.00}{71.00}{503}{48.00}{71.00}{504}
\emline{45.00}{66.00}{505}{48.00}{66.00}{506}
\emline{44.00}{40.00}{507}{44.00}{37.00}{508}
\put(44.00,41.00){\circle{2.00}}
\emline{45.00}{41.00}{509}{48.00}{41.00}{510}
\emline{39.00}{35.00}{511}{39.00}{32.00}{512}
\put(39.00,36.00){\circle{2.00}}
\emline{40.00}{36.00}{513}{43.00}{36.00}{514}
\put(34.00,16.00){\circle{2.00}}
\emline{34.00}{20.00}{515}{34.00}{17.00}{516}
\put(34.00,31.00){\circle{2.00}}
\put(34.00,21.00){\circle{2.00}}
\emline{34.00}{30.00}{517}{34.00}{27.00}{518}
\emline{35.00}{16.00}{519}{38.00}{16.00}{520}
\emline{35.00}{31.00}{521}{38.00}{31.00}{522}
\emline{35.00}{21.00}{523}{38.00}{21.00}{524}
\put(124.00,51.00){\circle{2.00}}
\emline{124.00}{55.00}{525}{124.00}{52.00}{526}
\emline{125.00}{51.00}{527}{128.00}{51.00}{528}
\emline{124.00}{80.00}{529}{124.00}{77.00}{530}
\put(124.00,81.00){\circle{2.00}}
\emline{125.00}{81.00}{531}{128.00}{81.00}{532}
\put(119.00,56.00){\circle{2.00}}
\emline{119.00}{60.00}{533}{119.00}{57.00}{534}
\emline{119.00}{75.00}{535}{119.00}{72.00}{536}
\put(119.00,76.00){\circle{2.00}}
\emline{120.00}{56.00}{537}{123.00}{56.00}{538}
\emline{120.00}{76.00}{539}{123.00}{76.00}{540}
\put(114.00,61.00){\circle{2.00}}
\emline{114.00}{65.00}{541}{114.00}{62.00}{542}
\put(114.00,71.00){\circle{2.00}}
\put(114.00,66.00){\circle{2.00}}
\emline{114.00}{70.00}{543}{114.00}{67.00}{544}
\emline{115.00}{61.00}{545}{118.00}{61.00}{546}
\emline{115.00}{71.00}{547}{118.00}{71.00}{548}
\emline{115.00}{66.00}{549}{118.00}{66.00}{550}
\put(124.00,16.00){\circle*{2.00}}
\put(124.00,21.00){\circle*{2.00}}
\put(124.00,26.00){\circle*{2.00}}
\put(124.00,6.00){\circle{2.00}}
\emline{124.00}{10.00}{551}{124.00}{7.00}{552}
\emline{125.00}{6.00}{553}{128.00}{6.00}{554}
\emline{124.00}{40.00}{555}{124.00}{37.00}{556}
\put(124.00,41.00){\circle{2.00}}
\emline{125.00}{41.00}{557}{128.00}{41.00}{558}
\put(119.00,11.00){\circle{2.00}}
\emline{119.00}{15.00}{559}{119.00}{12.00}{560}
\emline{119.00}{35.00}{561}{119.00}{32.00}{562}
\put(119.00,36.00){\circle{2.00}}
\emline{120.00}{11.00}{563}{123.00}{11.00}{564}
\emline{120.00}{36.00}{565}{123.00}{36.00}{566}
\put(114.00,16.00){\circle{2.00}}
\emline{114.00}{20.00}{567}{114.00}{17.00}{568}
\put(114.00,31.00){\circle{2.00}}
\put(114.00,21.00){\circle{2.00}}
\emline{114.00}{30.00}{569}{114.00}{27.00}{570}
\emline{115.00}{16.00}{571}{118.00}{16.00}{572}
\emline{115.00}{31.00}{573}{118.00}{31.00}{574}
\emline{115.00}{21.00}{575}{118.00}{21.00}{576}
\put(124.00,61.00){\circle*{2.00}}
\put(124.00,66.00){\circle*{2.00}}
\put(124.00,71.00){\circle*{2.00}}
\put(49.00,1.00){\circle{2.00}}
\put(54.00,1.00){\circle{2.00}}
\emline{49.00}{5.00}{577}{49.00}{2.00}{578}
\put(59.00,1.00){\circle{2.00}}
\put(64.00,1.00){\circle{2.00}}
\put(69.00,1.00){\circle{2.00}}
\emline{54.00}{5.00}{579}{54.00}{2.00}{580}
\emline{59.00}{5.00}{581}{59.00}{2.00}{582}
\emline{64.00}{5.00}{583}{64.00}{2.00}{584}
\emline{69.00}{5.00}{585}{69.00}{2.00}{586}
\emline{50.00}{1.00}{587}{53.00}{1.00}{588}
\emline{55.00}{1.00}{589}{58.00}{1.00}{590}
\emline{60.00}{1.00}{591}{63.00}{1.00}{592}
\emline{65.00}{1.00}{593}{68.00}{1.00}{594}
\put(74.00,6.00){\circle{2.00}}
\emline{74.00}{10.00}{595}{74.00}{7.00}{596}
\emline{70.00}{6.00}{597}{73.00}{6.00}{598}
\put(79.00,11.00){\circle{2.00}}
\emline{79.00}{15.00}{599}{79.00}{12.00}{600}
\emline{75.00}{11.00}{601}{78.00}{11.00}{602}
\put(129.00,1.00){\circle{2.00}}
\put(134.00,1.00){\circle{2.00}}
\emline{129.00}{5.00}{603}{129.00}{2.00}{604}
\put(139.00,1.00){\circle{2.00}}
\put(144.00,1.00){\circle{2.00}}
\put(149.00,1.00){\circle{2.00}}
\emline{134.00}{5.00}{605}{134.00}{2.00}{606}
\emline{139.00}{5.00}{607}{139.00}{2.00}{608}
\emline{144.00}{5.00}{609}{144.00}{2.00}{610}
\emline{149.00}{5.00}{611}{149.00}{2.00}{612}
\emline{130.00}{1.00}{613}{133.00}{1.00}{614}
\emline{135.00}{1.00}{615}{138.00}{1.00}{616}
\emline{140.00}{1.00}{617}{143.00}{1.00}{618}
\emline{145.00}{1.00}{619}{148.00}{1.00}{620}
\put(154.00,6.00){\circle{2.00}}
\emline{154.00}{10.00}{621}{154.00}{7.00}{622}
\emline{150.00}{6.00}{623}{153.00}{6.00}{624}
\put(159.00,11.00){\circle{2.00}}
\emline{159.00}{15.00}{625}{159.00}{12.00}{626}
\emline{155.00}{11.00}{627}{158.00}{11.00}{628}
\put(44.00,1.00){\circle{2.00}}
\emline{45.00}{1.00}{629}{48.00}{1.00}{630}
\emline{44.00}{2.00}{631}{44.00}{5.00}{632}
\emline{43.00}{6.00}{633}{40.00}{6.00}{634}
\put(39.00,6.00){\circle{2.00}}
\emline{39.00}{7.00}{635}{39.00}{10.00}{636}
\put(34.00,11.00){\circle{2.00}}
\emline{34.00}{15.00}{637}{34.00}{12.00}{638}
\emline{35.00}{11.00}{639}{38.00}{11.00}{640}
\put(124.00,1.00){\circle{2.00}}
\emline{124.00}{5.00}{641}{124.00}{2.00}{642}
\emline{125.00}{1.00}{643}{128.00}{1.00}{644}
\put(119.00,6.00){\circle{2.00}}
\emline{119.00}{10.00}{645}{119.00}{7.00}{646}
\emline{120.00}{6.00}{647}{123.00}{6.00}{648}
\put(114.00,11.00){\circle{2.00}}
\emline{114.00}{15.00}{649}{114.00}{12.00}{650}
\emline{115.00}{11.00}{651}{118.00}{11.00}{652}
\emline{84.00}{80.00}{653}{84.00}{77.00}{654}
\put(89.00,76.00){\circle{2.00}}
\emline{85.00}{76.00}{655}{88.00}{76.00}{656}
\emline{89.00}{75.00}{657}{89.00}{72.00}{658}
\emline{59.00}{85.00}{659}{59.00}{82.00}{660}
\emline{64.00}{85.00}{661}{64.00}{82.00}{662}
\emline{69.00}{85.00}{663}{69.00}{82.00}{664}
\emline{74.00}{85.00}{665}{74.00}{82.00}{666}
\emline{79.00}{85.00}{667}{79.00}{82.00}{668}
\put(84.00,81.00){\circle{2.00}}
\emline{80.00}{81.00}{669}{83.00}{81.00}{670}
\put(59.00,86.00){\circle{2.00}}
\put(64.00,86.00){\circle{2.00}}
\put(69.00,86.00){\circle{2.00}}
\put(74.00,86.00){\circle{2.00}}
\put(79.00,86.00){\circle{2.00}}
\emline{60.00}{86.00}{671}{63.00}{86.00}{672}
\emline{65.00}{86.00}{673}{68.00}{86.00}{674}
\emline{70.00}{86.00}{675}{73.00}{86.00}{676}
\emline{75.00}{86.00}{677}{78.00}{86.00}{678}
\emline{154.00}{80.00}{679}{154.00}{77.00}{680}
\put(159.00,76.00){\circle{2.00}}
\emline{155.00}{76.00}{681}{158.00}{76.00}{682}
\emline{159.00}{75.00}{683}{159.00}{72.00}{684}
\emline{129.00}{85.00}{685}{129.00}{82.00}{686}
\emline{134.00}{85.00}{687}{134.00}{82.00}{688}
\emline{139.00}{85.00}{689}{139.00}{82.00}{690}
\emline{144.00}{85.00}{691}{144.00}{82.00}{692}
\emline{149.00}{85.00}{693}{149.00}{82.00}{694}
\put(154.00,81.00){\circle{2.00}}
\emline{150.00}{81.00}{695}{153.00}{81.00}{696}
\put(129.00,86.00){\circle{2.00}}
\put(134.00,86.00){\circle{2.00}}
\put(139.00,86.00){\circle{2.00}}
\put(144.00,86.00){\circle{2.00}}
\put(149.00,86.00){\circle{2.00}}
\emline{130.00}{86.00}{697}{133.00}{86.00}{698}
\emline{135.00}{86.00}{699}{138.00}{86.00}{700}
\emline{140.00}{86.00}{701}{143.00}{86.00}{702}
\emline{145.00}{86.00}{703}{148.00}{86.00}{704}
\emline{54.00}{85.00}{705}{54.00}{82.00}{706}
\put(54.00,86.00){\circle{2.00}}
\emline{55.00}{86.00}{707}{58.00}{86.00}{708}
\emline{49.00}{80.00}{709}{49.00}{77.00}{710}
\put(49.00,81.00){\circle{2.00}}
\emline{50.00}{81.00}{711}{53.00}{81.00}{712}
\put(44.00,76.00){\circle{2.00}}
\emline{44.00}{75.00}{713}{44.00}{72.00}{714}
\emline{45.00}{76.00}{715}{48.00}{76.00}{716}
\emline{124.00}{85.00}{717}{124.00}{82.00}{718}
\put(124.00,86.00){\circle{2.00}}
\emline{125.00}{86.00}{719}{128.00}{86.00}{720}
\emline{119.00}{80.00}{721}{119.00}{77.00}{722}
\put(119.00,81.00){\circle{2.00}}
\emline{120.00}{81.00}{723}{123.00}{81.00}{724}
\put(114.00,76.00){\circle{2.00}}
\emline{114.00}{75.00}{725}{114.00}{72.00}{726}
\emline{115.00}{76.00}{727}{118.00}{76.00}{728}
\emline{74.00}{30.00}{729}{74.00}{27.00}{730}
\put(79.00,26.00){\circle{2.00}}
\emline{75.00}{26.00}{731}{78.00}{26.00}{732}
\emline{79.00}{25.00}{733}{79.00}{22.00}{734}
\emline{49.00}{35.00}{735}{49.00}{32.00}{736}
\emline{54.00}{35.00}{737}{54.00}{32.00}{738}
\emline{59.00}{35.00}{739}{59.00}{32.00}{740}
\emline{64.00}{35.00}{741}{64.00}{32.00}{742}
\emline{69.00}{35.00}{743}{69.00}{32.00}{744}
\put(74.00,31.00){\circle{2.00}}
\emline{70.00}{31.00}{745}{73.00}{31.00}{746}
\put(49.00,36.00){\circle{2.00}}
\put(54.00,36.00){\circle{2.00}}
\put(59.00,36.00){\circle{2.00}}
\put(64.00,36.00){\circle{2.00}}
\put(69.00,36.00){\circle{2.00}}
\emline{50.00}{36.00}{747}{53.00}{36.00}{748}
\emline{55.00}{36.00}{749}{58.00}{36.00}{750}
\emline{60.00}{36.00}{751}{63.00}{36.00}{752}
\emline{65.00}{36.00}{753}{68.00}{36.00}{754}
\put(74.00,36.00){\circle{2.00}}
\emline{154.00}{30.00}{755}{154.00}{27.00}{756}
\put(159.00,26.00){\circle{2.00}}
\emline{155.00}{26.00}{757}{158.00}{26.00}{758}
\emline{159.00}{25.00}{759}{159.00}{22.00}{760}
\emline{129.00}{35.00}{761}{129.00}{32.00}{762}
\emline{134.00}{35.00}{763}{134.00}{32.00}{764}
\emline{139.00}{35.00}{765}{139.00}{32.00}{766}
\emline{144.00}{35.00}{767}{144.00}{32.00}{768}
\emline{149.00}{35.00}{769}{149.00}{32.00}{770}
\put(154.00,31.00){\circle{2.00}}
\emline{150.00}{31.00}{771}{153.00}{31.00}{772}
\put(129.00,36.00){\circle{2.00}}
\put(134.00,36.00){\circle{2.00}}
\put(139.00,36.00){\circle{2.00}}
\put(144.00,36.00){\circle{2.00}}
\put(149.00,36.00){\circle{2.00}}
\emline{130.00}{36.00}{773}{133.00}{36.00}{774}
\emline{135.00}{36.00}{775}{138.00}{36.00}{776}
\emline{140.00}{36.00}{777}{143.00}{36.00}{778}
\emline{145.00}{36.00}{779}{148.00}{36.00}{780}
\emline{44.00}{35.00}{781}{44.00}{32.00}{782}
\put(44.00,36.00){\circle{2.00}}
\emline{45.00}{36.00}{783}{48.00}{36.00}{784}
\emline{39.00}{30.00}{785}{39.00}{27.00}{786}
\put(39.00,31.00){\circle{2.00}}
\emline{40.00}{31.00}{787}{43.00}{31.00}{788}
\put(34.00,26.00){\circle{2.00}}
\emline{34.00}{25.00}{789}{34.00}{22.00}{790}
\emline{35.00}{26.00}{791}{38.00}{26.00}{792}
\emline{124.00}{35.00}{793}{124.00}{32.00}{794}
\put(124.00,36.00){\circle{2.00}}
\emline{125.00}{36.00}{795}{128.00}{36.00}{796}
\emline{119.00}{30.00}{797}{119.00}{27.00}{798}
\put(119.00,31.00){\circle{2.00}}
\emline{120.00}{31.00}{799}{123.00}{31.00}{800}
\put(114.00,26.00){\circle{2.00}}
\emline{114.00}{25.00}{801}{114.00}{22.00}{802}
\emline{115.00}{26.00}{803}{118.00}{26.00}{804}
\put(59.00,46.00){\circle{2.00}}
\put(64.00,46.00){\circle{2.00}}
\emline{59.00}{50.00}{805}{59.00}{47.00}{806}
\put(69.00,46.00){\circle{2.00}}
\put(74.00,46.00){\circle{2.00}}
\put(79.00,46.00){\circle{2.00}}
\emline{64.00}{50.00}{807}{64.00}{47.00}{808}
\emline{69.00}{50.00}{809}{69.00}{47.00}{810}
\emline{74.00}{50.00}{811}{74.00}{47.00}{812}
\emline{79.00}{50.00}{813}{79.00}{47.00}{814}
\emline{60.00}{46.00}{815}{63.00}{46.00}{816}
\emline{65.00}{46.00}{817}{68.00}{46.00}{818}
\emline{70.00}{46.00}{819}{73.00}{46.00}{820}
\emline{75.00}{46.00}{821}{78.00}{46.00}{822}
\put(84.00,51.00){\circle{2.00}}
\emline{84.00}{55.00}{823}{84.00}{52.00}{824}
\emline{80.00}{51.00}{825}{83.00}{51.00}{826}
\put(89.00,56.00){\circle{2.00}}
\emline{89.00}{60.00}{827}{89.00}{57.00}{828}
\emline{85.00}{56.00}{829}{88.00}{56.00}{830}
\put(129.00,46.00){\circle{2.00}}
\put(134.00,46.00){\circle{2.00}}
\emline{129.00}{50.00}{831}{129.00}{47.00}{832}
\put(139.00,46.00){\circle{2.00}}
\put(144.00,46.00){\circle{2.00}}
\put(149.00,46.00){\circle{2.00}}
\emline{134.00}{50.00}{833}{134.00}{47.00}{834}
\emline{139.00}{50.00}{835}{139.00}{47.00}{836}
\emline{144.00}{50.00}{837}{144.00}{47.00}{838}
\emline{149.00}{50.00}{839}{149.00}{47.00}{840}
\emline{130.00}{46.00}{841}{133.00}{46.00}{842}
\emline{135.00}{46.00}{843}{138.00}{46.00}{844}
\emline{140.00}{46.00}{845}{143.00}{46.00}{846}
\emline{145.00}{46.00}{847}{148.00}{46.00}{848}
\put(154.00,46.00){\circle{2.00}}
\put(157.00,46.00){\makebox(0,0)[cc]{\small{?}}}
\put(54.00,46.00){\circle{2.00}}
\emline{54.00}{50.00}{849}{54.00}{47.00}{850}
\emline{55.00}{46.00}{851}{58.00}{46.00}{852}
\put(49.00,51.00){\circle{2.00}}
\emline{49.00}{55.00}{853}{49.00}{52.00}{854}
\emline{50.00}{51.00}{855}{53.00}{51.00}{856}
\put(44.00,56.00){\circle{2.00}}
\emline{44.00}{60.00}{857}{44.00}{57.00}{858}
\emline{45.00}{56.00}{859}{48.00}{56.00}{860}
\put(124.00,46.00){\circle{2.00}}
\emline{124.00}{50.00}{861}{124.00}{47.00}{862}
\emline{125.00}{46.00}{863}{128.00}{46.00}{864}
\emline{150.00}{51.00}{865}{153.00}{51.00}{866}
\emline{154.00}{55.00}{867}{154.00}{52.00}{868}
\put(159.00,56.00){\circle{2.00}}
\emline{159.00}{60.00}{869}{159.00}{57.00}{870}
\emline{155.00}{56.00}{871}{158.00}{56.00}{872}
\emline{119.00}{60.00}{873}{119.00}{57.00}{874}
\put(119.00,51.00){\circle{2.00}}
\emline{119.00}{55.00}{875}{119.00}{52.00}{876}
\emline{120.00}{51.00}{877}{123.00}{51.00}{878}
\put(114.00,56.00){\circle{2.00}}
\emline{114.00}{60.00}{879}{114.00}{57.00}{880}
\emline{115.00}{56.00}{881}{118.00}{56.00}{882}
\put(29.00,16.00){\circle{2.00}}
\emline{29.00}{20.00}{883}{29.00}{17.00}{884}
\put(29.00,21.00){\circle{2.00}}
\emline{30.00}{16.00}{885}{33.00}{16.00}{886}
\emline{30.00}{21.00}{887}{33.00}{21.00}{888}
\put(29.00,26.00){\circle{2.00}}
\emline{29.00}{25.00}{889}{29.00}{22.00}{890}
\emline{30.00}{26.00}{891}{33.00}{26.00}{892}
\put(94.00,61.00){\circle{2.00}}
\emline{94.00}{65.00}{893}{94.00}{62.00}{894}
\emline{90.00}{61.00}{895}{93.00}{61.00}{896}
\put(94.00,71.00){\circle{2.00}}
\emline{90.00}{71.00}{897}{93.00}{71.00}{898}
\put(94.00,66.00){\circle{2.00}}
\emline{94.00}{70.00}{899}{94.00}{67.00}{900}
\emline{90.00}{66.00}{901}{93.00}{66.00}{902}
\put(84.00,16.00){\circle{2.00}}
\emline{84.00}{20.00}{903}{84.00}{17.00}{904}
\emline{80.00}{16.00}{905}{83.00}{16.00}{906}
\put(84.00,21.00){\circle{2.00}}
\emline{80.00}{21.00}{907}{83.00}{21.00}{908}
\put(164.00,61.00){\circle{2.00}}
\emline{164.00}{65.00}{909}{164.00}{62.00}{910}
\emline{160.00}{61.00}{911}{163.00}{61.00}{912}
\put(164.00,71.00){\circle{2.00}}
\emline{160.00}{71.00}{913}{163.00}{71.00}{914}
\put(164.00,66.00){\circle{2.00}}
\emline{164.00}{70.00}{915}{164.00}{67.00}{916}
\emline{160.00}{66.00}{917}{163.00}{66.00}{918}
\put(164.00,16.00){\circle{2.00}}
\emline{164.00}{20.00}{919}{164.00}{17.00}{920}
\emline{160.00}{16.00}{921}{163.00}{16.00}{922}
\put(164.00,21.00){\circle{2.00}}
\emline{160.00}{21.00}{923}{163.00}{21.00}{924}
\put(84.00,26.00){\circle{2.00}}
\emline{80.00}{26.00}{925}{83.00}{26.00}{926}
\emline{84.00}{25.00}{927}{84.00}{22.00}{928}
\put(164.00,26.00){\circle{2.00}}
\emline{160.00}{26.00}{929}{163.00}{26.00}{930}
\emline{164.00}{25.00}{931}{164.00}{22.00}{932}
\put(39.00,61.00){\circle{2.00}}
\emline{39.00}{65.00}{933}{39.00}{62.00}{934}
\put(39.00,71.00){\circle{2.00}}
\put(39.00,66.00){\circle{2.00}}
\emline{39.00}{70.00}{935}{39.00}{67.00}{936}
\emline{40.00}{61.00}{937}{43.00}{61.00}{938}
\emline{40.00}{71.00}{939}{43.00}{71.00}{940}
\emline{40.00}{66.00}{941}{43.00}{66.00}{942}
\put(109.00,61.00){\circle{2.00}}
\emline{109.00}{65.00}{943}{109.00}{62.00}{944}
\put(109.00,71.00){\circle{2.00}}
\put(109.00,66.00){\circle{2.00}}
\emline{109.00}{70.00}{945}{109.00}{67.00}{946}
\emline{110.00}{61.00}{947}{113.00}{61.00}{948}
\emline{110.00}{71.00}{949}{113.00}{71.00}{950}
\emline{110.00}{66.00}{951}{113.00}{66.00}{952}
\put(109.00,16.00){\circle{2.00}}
\emline{109.00}{20.00}{953}{109.00}{17.00}{954}
\put(109.00,21.00){\circle{2.00}}
\emline{110.00}{16.00}{955}{113.00}{16.00}{956}
\emline{110.00}{21.00}{957}{113.00}{21.00}{958}
\put(109.00,26.00){\circle{2.00}}
\emline{109.00}{25.00}{959}{109.00}{22.00}{960}
\emline{110.00}{26.00}{961}{113.00}{26.00}{962}
\emline{55.00}{71.00}{963}{78.00}{71.00}{964}
\emline{55.00}{66.00}{965}{78.00}{66.00}{966}
\emline{55.00}{61.00}{967}{78.00}{61.00}{968}
\emline{54.00}{62.00}{969}{54.00}{70.00}{970}
\emline{59.00}{62.00}{971}{59.00}{70.00}{972}
\emline{64.00}{62.00}{973}{64.00}{70.00}{974}
\emline{69.00}{62.00}{975}{69.00}{70.00}{976}
\emline{74.00}{62.00}{977}{74.00}{70.00}{978}
\emline{79.00}{62.00}{979}{79.00}{70.00}{980}
\emline{125.00}{71.00}{981}{148.00}{71.00}{982}
\emline{125.00}{66.00}{983}{148.00}{66.00}{984}
\emline{125.00}{61.00}{985}{148.00}{61.00}{986}
\emline{124.00}{62.00}{987}{124.00}{70.00}{988}
\emline{129.00}{62.00}{989}{129.00}{70.00}{990}
\emline{134.00}{62.00}{991}{134.00}{70.00}{992}
\emline{139.00}{62.00}{993}{139.00}{70.00}{994}
\emline{144.00}{62.00}{995}{144.00}{70.00}{996}
\emline{149.00}{62.00}{997}{149.00}{70.00}{998}
\emline{45.00}{26.00}{999}{68.00}{26.00}{1000}
\emline{45.00}{21.00}{1001}{68.00}{21.00}{1002}
\emline{45.00}{16.00}{1003}{68.00}{16.00}{1004}
\emline{44.00}{17.00}{1005}{44.00}{25.00}{1006}
\emline{49.00}{17.00}{1007}{49.00}{25.00}{1008}
\emline{54.00}{17.00}{1009}{54.00}{25.00}{1010}
\emline{59.00}{17.00}{1011}{59.00}{25.00}{1012}
\emline{64.00}{17.00}{1013}{64.00}{25.00}{1014}
\emline{69.00}{17.00}{1015}{69.00}{25.00}{1016}
\emline{125.00}{26.00}{1017}{148.00}{26.00}{1018}
\emline{125.00}{21.00}{1019}{148.00}{21.00}{1020}
\emline{125.00}{16.00}{1021}{148.00}{16.00}{1022}
\emline{124.00}{17.00}{1023}{124.00}{25.00}{1024}
\emline{129.00}{17.00}{1025}{129.00}{25.00}{1026}
\emline{144.00}{17.00}{1027}{144.00}{25.00}{1028}
\emline{149.00}{17.00}{1029}{149.00}{25.00}{1030}
\emline{134.00}{17.00}{1031}{134.00}{25.00}{1032}
\emline{139.00}{17.00}{1033}{139.00}{25.00}{1034}
\end{picture}
\end{figure}

\section{Proof of Theorem \protect\ref{T2}}

\begin{proof}
We will construct the desired $\mathrm{PDDS}$ by applying Corollary \ref{C}.
Set $H=Q_{3}$. We place the graph $D=(V,E)$ that is isomorphic to $H^{\ast }$
in such a way that $V$ comprises the vertices $O$, $e_{1}$, $e_{2}$, $e_{3}$%
, $e_{1}+e_{2}$, $e_{1}+e_{3}$, $e_{2}+e_{3}$ and $e_{1}+e_{2}+e_{3}$ of $%
Q_{3}$ and their $24$ neighbors. Thus, $|V|=32,$ and $D$ contains the
vertices $O$ and $e_{i}$, for $i=1,2,3$, as required by Corollary \ref{C}.
We choose $G=\mathbb{Z}_{2}\oplus \mathbb{Z}_{4}\oplus \mathbb{Z}_{4}$. The
elements $g_{i}$ of $G$ that are assigned to the vertices $e_{i}$ are: $%
g_{1}=1,3,3$, $g_{2}=0,1,0$ and $g_{3}=0,0,1$. To finish the proof, we need
to show that the restriction of the mapping $\Phi ((a_{1},a_{2},a_{3}))=\Phi
(e_{1})^{a_{1}}\circ \Phi (e_{2})^{a_{2}}\circ \Phi
(e_{3})^{a_{3}}=a_{1}g_{1}+a_{2}g_{2}+a_{3}g_{3}$ to the set $V$ is a
bijection. For the reader's convenience we provide all values of $\Phi $ on $%
V$ in a table below. It suffices to note that all these values are distinct.
The vertices in $V$ are given in the left-hand side of the table, the
corresponding values of $\Phi $ in the right-hand side.

$$
\begin{array}{||l|l|l|l||l|l|l|l||}
\hline
& _{-e_3} & _{e_1-e_3} &  &  & _{0,0,3} & _{1,3,2} &  \\
& ^{e_2-e_3} & ^{e_1+e_2-e_3} &  &  & ^{0,1,3} & ^{1,0,2} &  \\ \hline
_{-e_1} & _{O}^{-e_2} & _{e_1}^{e_1-e_2} & _{2e_1} & _{1,1,1} &
_{0,0,0}^{0,3,0} & _{1,3,3}^{1,2,3} & _{0,2,2} \\
^{e_2-e_1} & _{2e_2}^{e_2} & _{e_1+2e_2}^{e_1+e_2} & ^{2e_1+e_2} & ^{1,2,1}
& _{0,2,0}^{0,1,0} & _{1,1,3}^{1,0,3} & ^{0,3,2} \\ \hline
_{e_3-e_1} & _{e_3}^{e_3-e_2} & _{e_1+e_3}^{e_1-e_2+e_3} & _{2e_1+e_3} &
_{1,1,2} & _{0,0,1}^{0,3,1} & _{1,3,0}^{1,2,0} & _{0,2,3} \\
^{e_2+e_3-e_1} & _{2e_2+e_3}^{e_2+e_3} & _{e_1+2e_2+e_3}^{e_1+e_2+e_3} &
^{2e_1+e_2+e_3} & ^{1,2,2} & _{0,2,1}^{0,1,1} & _{1,1,0}^{1,0,0} & ^{0,3,3}
\\ \hline
& _{2e_3} & _{e_1+2e_3} &  &  & _{0,0,2} & _{1,3,1} &  \\
& ^{e_2+2e_3} & ^{e_1+e_2+2e_3} &  &  & ^{0,1,2} & ^{1,1,0} &  \\ \hline
\end{array}%
$$
\end{proof}

\section{A periodic 1-$\mathrm{PDDS}[P_{2}]$ that is not lattice-like}

\noindent Here we provide a periodic $1$-$\mathrm{PDDS}[P_{2}]$ $R$ that is
not lattice-like. To see this it will suffice to notice that some components
of $R$ are paths $P_{2}$ "parallel to $x$-axis", some "parallel to $y$%
-axis". A typical part of $R$ consisting of four copies of $P_{2}$ and their
neighbors is provided in the figure below:

\begin{center}
\begin{figure}[htp]
\unitlength=0.50mm \special{em:linewidth 0.4pt} \linethickness{0.4pt}
\begin{picture}(137.00,27.00)
\put(116.00,1.00){\circle{2.00}}
\put(121.00,1.00){\circle{2.00}}
\put(106.00,6.00){\circle{2.00}}
\put(111.00,6.00){\circle{2.00}}
\put(126.00,6.00){\circle{2.00}}
\put(131.00,6.00){\circle{2.00}}
\put(101.00,11.00){\circle{2.00}}
\put(111.00,11.00){\circle{2.00}}
\put(116.00,11.00){\circle{2.00}}
\put(121.00,11.00){\circle{2.00}}
\put(126.00,11.00){\circle{2.00}}
\put(136.00,11.00){\circle{2.00}}
\put(101.00,16.00){\circle{2.00}}
\put(111.00,16.00){\circle{2.00}}
\put(116.00,16.00){\circle{2.00}}
\put(121.00,16.00){\circle{2.00}}
\put(126.00,16.00){\circle{2.00}}
\put(136.00,16.00){\circle{2.00}}
\put(106.00,21.00){\circle{2.00}}
\put(111.00,21.00){\circle{2.00}}
\put(116.00,21.00){\circle{2.00}}
\put(121.00,21.00){\circle{2.00}}
\put(126.00,21.00){\circle{2.00}}
\put(131.00,21.00){\circle{2.00}}
\put(116.00,26.00){\circle{2.00}}
\put(121.00,26.00){\circle{2.00}}
\put(116.00,6.00){\circle*{2.50}}
\put(121.00,6.00){\circle*{2.50}}
\emline{117.00}{6.00}{1}{120.00}{6.00}{2}
\emline{116.00}{10.00}{3}{116.00}{7.00}{4}
\emline{115.00}{6.00}{5}{112.00}{6.00}{6}
\emline{116.00}{5.00}{7}{116.00}{2.00}{8}
\emline{121.00}{7.00}{9}{121.00}{10.00}{10}
\emline{122.00}{6.00}{11}{125.00}{6.00}{12}
\emline{121.00}{5.00}{13}{121.00}{2.00}{14}
\put(116.00,21.00){\circle*{2.50}}
\put(121.00,21.00){\circle*{2.50}}
\emline{117.00}{21.00}{15}{120.00}{21.00}{16}
\emline{116.00}{25.00}{17}{116.00}{22.00}{18}
\emline{115.00}{21.00}{19}{112.00}{21.00}{20}
\emline{116.00}{20.00}{21}{116.00}{17.00}{22}
\emline{121.00}{22.00}{23}{121.00}{25.00}{24}
\emline{122.00}{21.00}{25}{125.00}{21.00}{26}
\emline{121.00}{20.00}{27}{121.00}{17.00}{28}
\put(106.00,16.00){\circle*{2.50}}
\put(106.00,11.00){\circle*{2.50}}
\put(131.00,16.00){\circle*{2.50}}
\put(131.00,11.00){\circle*{2.50}}
\emline{132.00}{11.00}{29}{135.00}{11.00}{30}
\emline{131.00}{15.00}{31}{131.00}{12.00}{32}
\emline{132.00}{16.00}{33}{135.00}{16.00}{34}
\emline{131.00}{10.00}{35}{131.00}{7.00}{36}
\emline{127.00}{11.00}{37}{130.00}{11.00}{38}
\emline{130.00}{16.00}{39}{127.00}{16.00}{40}
\emline{131.00}{20.00}{41}{131.00}{17.00}{42}
\emline{107.00}{11.00}{43}{110.00}{11.00}{44}
\emline{106.00}{15.00}{45}{106.00}{12.00}{46}
\emline{107.00}{16.00}{47}{110.00}{16.00}{48}
\emline{106.00}{10.00}{49}{106.00}{7.00}{50}
\emline{102.00}{11.00}{51}{105.00}{11.00}{52}
\emline{105.00}{16.00}{53}{102.00}{16.00}{54}
\emline{106.00}{20.00}{55}{106.00}{17.00}{56}
\end{picture}
\end{figure}
\end{center}

\noindent Despite the fact that $R$ is not lattice-like we will show how it
is possible to construct it by means of a slight modification of Corollary %
\ref{C}.\bigskip

\noindent We take $H^*$ to be a graph induced by the 32 vertices in the
figure above. To obtain the graph $D=(V,E)$ we place $H^*$ so that the four
copies of $P_2$ occupy vertices $(0,1)$ and $(1,1);$ $($ $0,-2)$ and $%
(1,-2); $ $(-2,-1)$ and $(-2,0);$ and finally $(3,-1)$ and $(3,0)$
respectively. We choose as $G$ the group $\mathbb{Z}_4\oplus \mathbb{Z}_8.$
The elements of $G $ assigned to $e_1$ and $e_2$ are $0,1$ and $1,1$
respectively. The restriction of the homomorphism $\Phi$ to $V$ is provided
below in the matrix form. It is easy to verify from the matrix that $\Phi$
is a bijection on $V.$


\begin{center}
\[
\begin{array}{cccccccc}
&  &  & 2,6 & 2,7 &  &  &  \\
& 3,5 & 3,6 & \mathbf{3,7} & \mathbf{3,0} & 3,1 & 3,2 &  \\
0,5 & \mathbf{0,6} & 0,7 & 0,0 & 0,1 & 0,2 & \mathbf{0,3} & 0,4 \\
1,6 & \mathbf{1,7} & 1,0 & 1,1 & 1,2 & 1,3 & \mathbf{1,4} & 1,5 \\
& 2,0 & 2,1 & \mathbf{2,2} & \mathbf{2,3} & 2,4 & 2,5 &  \\
&  &  & 3,3 & 3,4 &  &  &
\end{array}
\]
\end{center}

\noindent Thus Corollary \ref{C} provides a decomposition of $\mathbb{Z}^{2}$
into parts of order $32,$ each of them isomorphic to $H^{\ast }.$ Further,
as $H^{\ast }$ can be decomposed into four copies of $P_{2}$ and its
neighbors, we have constructed a $1$-$\mathrm{PDDS}[P_{2}]$ $R$ that is not
lattice-like. However, it is straightforward that $R$ is periodic. Therefore
we have proved:

\begin{theorem}
There exists a periodic non-lattice-like $1$-$\mathrm{PDDS}[P_{2}]$ in $%
\Lambda _{2}$.
\end{theorem}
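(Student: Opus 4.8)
The plan is to use the \emph{cluster} form of Corollary~\ref{C}: instead of trying to tile $\Lambda_2$ by translates of the closed $1$-neighbourhood of a single $P_2$, I will bundle four copies of $P_2$ — two placed ``horizontally'' and two ``vertically'' — together with all their neighbours into one finite induced subgraph $H^{\ast}=(V,E)$ of $\Lambda_2$ on $32$ vertices, and tile $\Lambda_2$ by translates of this one block. First I would fix a concrete placement of the four paths, e.g. on the vertex pairs $\{(0,1),(1,1)\}$, $\{(0,-2),(1,-2)\}$, $\{(-2,-1),(-2,0)\}$, $\{(3,-1),(3,0)\}$, and take $D=H^{\ast}$ to be the subgraph induced by these eight vertices and their twenty-four neighbours. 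One reads off the picture that $|V|=32$, that $V$ contains $O,e_1,e_2$ as Corollary~\ref{C} requires, and — the point of the particular placement — that every vertex of $H^{\ast}$ lies at distance $\le 1$ from exactly one of the four paths and has a unique closest vertex on that path. Thus $H^{\ast}$ is honestly the disjoint union of the closed $1$-neighbourhoods of its four $P_2$-components, so a tiling of $\mathbb{Z}^2$ by copies of $H^{\ast}$ is exactly a $1$-$\mathrm{PDDS}[P_2]$ via Remark~\ref{RR}.

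Next I would invoke Theorem~\ref{BBB}/Corollary~\ref{C} with $D=H^{\ast}$: take the abelian group $G=\mathbb{Z}_4\oplus\mathbb{Z}_8$ of order $32$ and the homomorphism $\Phi:\mathbb{Z}^2\to G$ determined by $\Phi(e_1)=(0,1)$ and $\Phi(e_2)=(1,1)$. Since $|V|=|G|=32$, the only thing left to verify is that the $32$ values $\Phi(v)$, $v\in V$, are pairwise distinct, which I would record in matrix form laid out on the same $\mathbb{Z}^2$-grid as $H^{\ast}$ so the reader can check it at a glance. By Theorem~\ref{BBB} this yields a lattice-like tiling of $\Lambda_2$ by translates of $H^{\ast}$, and hence — via Remark~\ref{RR} and the structure of $H^{\ast}$ — a $1$-$\mathrm{PDDS}[P_2]$ $R$ in $\Lambda_2$.

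It then remains to check the two adjectives. Periodicity is automatic: a lattice-like tiling is periodic (take $p_1,p_2$ to be the orders of $\Phi(e_1)=(0,1)$ and $\Phi(e_2)=(1,1)$ in $G$, so $p_i e_i$ lies in $\ker\Phi$ and translation by $p_i e_i$ preserves the tiling), and therefore $R$ is periodic. For non-lattice-likeness I argue by contradiction: if $R$ were lattice-like there would be a lattice $L\subset\mathbb{Z}^2$ such that the components of $R$ are precisely the translates $D_0+z$, $z\in L$, of one fixed component $D_0\simeq P_2$; but a translate of a horizontal edge is horizontal while a translate of a vertical edge is vertical, and our $R$ contains both orientations, a contradiction. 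I expect the only genuine work to be the bijectivity check for $\Phi$ on $V$: it is purely mechanical, but a single coincidence among the $32$ values would destroy the construction, so it must be displayed carefully; everything else (the $\mathrm{PDDS}$ property through Corollary~\ref{C}, periodicity, and the orientation argument) is short and structural.
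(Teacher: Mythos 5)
Your proposal is correct and follows essentially the same route as the paper's own proof: the same placement of four $P_{2}$\thinspace s (two horizontal, two vertical) forming a $32$-vertex block $H^{\ast}$, the same group $\mathbb{Z}_{4}\oplus\mathbb{Z}_{8}$ with $\Phi(e_{1})=(0,1)$, $\Phi(e_{2})=(1,1)$, and the same bijectivity check, with non-lattice-likeness deduced from the mixed orientations of the components. Your added explicit periodicity argument (via the orders of $\Phi(e_{1})$, $\Phi(e_{2})$) is a minor elaboration of what the paper states without detail.
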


\textbf{Acknowledgement.} We thank Ana Breda from the University of Aveiro
for her comments that helped to improve presentation of this paper. We also
thank G. Mazzuoccolo from the University of Modena, who provided an example
of a $2$-$\mathrm{PDDS}$ whose components are all isomorphic to $%
P_{2}\square P_{2}$.


\begin{thebibliography}{99}
\bibitem{A1} R. Ahlswede, H.K. Audinian and L.H. Khachatrian, \textit{On
perfect codes and related concepts}, Designs, Codes and Cryptography,
\textbf{22}(2001) 221-237.

\bibitem{A} R. Ahlswede and V. Blinovsky, Lectures on Advances in
Combinatorics, Springer-Verlag, 2008.

\bibitem{BB} B. F. AlBdaiwi and B. Bose, \textit{Quasi-perfect Lee distance
codes}, IEEE Trans. Inf. Theory, \textbf{49}(2003) 1535--1539.

\bibitem{BL} B. F. AlBdaiwi and M. L. Livingston, \textit{Perfect distance $%
d $-placements in $2d$-toroidal networks}, Jour. Supercomputing, \textbf{29}%
(2004) 45--57.



\bibitem{BBS} D. W. Bange, A. E. Barkauskas, and P. J. Slater, \textit{%
Efficient dominating sets in graphs}, Appl. Discrete Math, eds. R.
D. Ringeisen and F. S. Roberts, SIAM, Philadelphia, 1988, 189--199.

\bibitem{biggs} N. Biggs, Algebraic Graph Theory, Cambridge University
Press, 1993.

\bibitem{BE} S. Buzaglo and T. Etzion, \textit{Tilings by }$(0.5,n)$\textit{%
-Crosses and Perfect Codes.} Online arXiv:1107.5706v1.

\bibitem{Costa} S. I. Costa, M. Muniz, E. Agustini, and R. Palazzo, \textit{%
Graphs, tessellations, and perfect codes on flat tori}, IEEE Transact.
Inform. Th., \textbf{50}(2004) 2363--2377.

\bibitem{Dsolo} I. J. Dejter, \textit{Perfect domination in regular grid
graphs}, Austral. Jour. Combin., \textbf{42}(2008), 99--114.

\bibitem{DD} I. J. Dejter and A. A. Delgado, \textit{Perfect domination in
rectangular grid graphs}, Jour. Combin. Math. Combin. Comput., \textbf{70}%
(2009) 177--196.

\bibitem{DP} I. J. Dejter and K. T. Phelps, \textit{Ternary Hamming and
Binary Perfect Covering Codes}, in: A. Barg and S. Litsyn, eds., Codes and
Association Schemes, DIMACS Ser. Discrete Math. Theoret. Comput Sci. 56,
Amer. Math. Soc., Providence, RI, 111--113.

\bibitem{DS} I. J. Dejter and O. Serra, \textit{Efficient dominating sets in
Cayley graphs}, Discrete Applied Mathematics, \textbf{119}(2003) 319--328.

\bibitem{DW} I. J. Dejter and P. M. Weichsel, \textit{Twisted perfect
dominating subgraphs of hypercubes}, Congressus Numerantium, \textbf{94}%
(1993) 67--78.

\bibitem{E} T. Etzion, \textit{Product constructions for perfect Lee Codes},
to appear in IEEE Transactions in Information Theory.

\bibitem{FH} M. R. Fellows and M. N. Hoover, \textit{Perfect domination},
Austral. Jour. Combin., \textbf{3}(1991) 141--150.

\bibitem{GW} S. Golomb and K. Welch, \textit{Perfect codes in the Lee metric
and the packing of polyominos}, SIAM J. Applied Math., 18(1970), 302-317.

\bibitem{HS} D. Hickerson and S. Stein, \textit{Abelian groups and packings
by semicrosses}, Pacific J. Math., \textbf{122}(1986) 96--109.


\bibitem{HB} P. Horak and B. F. AlBdaiwi, \textit{Non-periodic tilings of $%
R^{n}$ by crosses}, Discrete \& Computational Geometry 47 (2012), 1--16.

\bibitem{Hnew} P. Horak and B. F. AlBdaiwi, \textit{Diameter Perfect Lee
Codes}, to appear in IEEE Transactions in Information Theory. Online:
arXiv:1109.3475.


\bibitem{KG} W. F. Klostermeyer and J. L. Goldwasser, \textit{Total Perfect
Codes in Grid Codes}, Bull. Inst. Comb. Appl., \textbf{46}(2006) 61--68.

\bibitem{K} J. Kratochvil and M. Kriv\'{a}nek, \textit{On the Computational
Complexity of Codes in Graphs}, in Proc. MFCS 1988, L.N. in Comp. Sci.
\textbf{324} (Springer-Verlag) 396--404.

\bibitem{Min} H. Minkowksi, \textit{Dichteste gitterformige Lagerung
kongruenter Korper}, Nachrichten Ges. Wiss. Gottingen, pp. 311--355, 1904.

\bibitem{Molnar} E. Moln\'{a}r, \textit{Sui Mosaici dello spazio de
dimensione $n$}, Atti della Accademia Nazionale dei Lincei, Rend. Sc. Fis.
Mat. e Nat., \textbf{51}(1971) 177--185.

\bibitem{Sw} M. Schwartz, \textit{Quasi-cross lattice tilings with
applications to flash memory}, submitted.

\bibitem{S} S. Stein, \textit{Packings of }$R^n$\textit{\ by certain error
spheres}, IEEE Trans. on Inform. Theory, \textbf{30}(1984) 356-363.

\bibitem{SStein} S. Stein, \textit{Factoring by subsets}, Pacific J.
Mathematics, \textbf{22}(1967) 523--541.

\bibitem{Sz} S. Szab\'{o}, \textit{On mosaics consisting of mutidimensional
crosses}, Acta Math. Acad. Sci. Hung., \textbf{38}(1981) 191--203.

\bibitem{W} P. M. Weichsel, \textit{Dominating Sets of n-Cubes}, Jour. Graph
Theory, \textbf{18}(1994) 479--488.
\end{thebibliography}
\end{document}